\tikzset{p0/.style = {shape = circle, draw, thick, minimum size = 0.7cm}}
\tikzset{p1/.style = {rectangle, minimum size=.7cm, draw, thick}}
\tikzset{>=stealth, shorten >=1pt}
\tikzset{every edge/.style = {thick, ->, draw}}
\tikzset{every loop/.style = {thick, ->, draw}}
\tikzset{circle split part fill/.style  args={#1,#2}{%
 alias=tmp@name, 
  postaction={%
    insert path={
     \pgfextra{%
     \pgfpointdiff{\pgfpointanchor{\pgf@node@name}{center}}%
                  {\pgfpointanchor{\pgf@node@name}{east}}%
     \pgfmathsetmacro\insiderad{\pgf@x}
      \fill[#1] (\pgf@node@name.base) ([xshift=-\pgflinewidth]\pgf@node@name.east) arc
                          (0:180:\insiderad-\pgflinewidth)--cycle;
      \fill[#2] (\pgf@node@name.base) ([xshift=\pgflinewidth]\pgf@node@name.west)  arc
                           (180:360:\insiderad-\pgflinewidth)--cycle; 
         }}}}}  
\definecolor{myred}{rgb}{1,0.604,0.604}
\definecolor{mydarkred}{rgb}{1,0.345,0.345}
\definecolor{myblue}{rgb}{0.635,0.675,0.966}
\definecolor{mydarkblue}{rgb}{0.412,0.475,0.957}
\definecolor{myyellow}{rgb}{1,0.976,0.604}
\definecolor{mydarkyellow}{rgb}{1,0.961,0.345}
\tikzset{
	assign/.style = { fill=myblue },
	choice/.style = { fill=myred },
	check/.style  = { fill=myyellow }
}
\DeclareRobustCommand{\rvdots}{%
  \vbox{
    \baselineskip4\p@\lineskiplimit\z@
    \kern-\p@
    \hbox{.}\hbox{.}\hbox{.}
  }}
\newcommand\restr[2]{{
  \left.\kern-\nulldelimiterspace 
  #1 
  \vphantom{\big|} 
  \right|_{#2} 
  }}
\newcommand{\ext}{\mathrm{ext}}
\newcommand{\rank}{\mathrm{rk}}
\newcommand{\subarenaof}{\sqsubseteq}
\newcommand{\declim}{\text{dec}_{\lim}}
\newcommand{\decsup}{\text{dec}_{\sup}}
\newcommand{\initmark}{I}
\newcommand{\vinit}{v_{\initmark}}
\newcommand{\myquot}[1]{``#1''}
\newcommand{\bigo}[0]{\mathcal{O}}
\newcommand{\size}[1]{|#1|}
\newcommand{\card}[1]{\size{#1}}
\newcommand{\set}[1]{\{ #1 \}}
\newcommand{\nats}{\mathbb{N}}
\renewcommand{\epsilon}{\varepsilon}
\newcommand{\arena}{\mathcal{A}}
\newcommand{\game}{\mathcal{G}}
\newcommand{\gamelim}{\mathcal{G}_{\lim}}
\newcommand{\gamesup}{\mathcal{G}_{\sup}}
\newcommand{\cost}{\mathrm{Cost}}
\newcommand{\wincond}{\mathrm{Win}}
\newcommand{\win}{\mathrm{W}}
\newcommand{\plays}{\mathrm{Plays}}
\newcommand{\equivclass}[2]{[#1]_{#2}}
\newcommand{\quotient}[2]{{#1}\big/#2}
\newcommand{\val}{\mathrm{val}}
\newcommand{\reducesto}{\leq}
\newcommand{\safety}{\mathrm{Safety}}
\newcommand{\mem}{\mathcal{M}}
\newcommand{\init}{m_\initmark}
\newcommand{\update}{\mathrm{Upd}}
\newcommand{\nxt}{\mathrm{Nxt}}
\newcommand{\quantmuller}{\textsc{QuantMuller}}
\newcommand{\muller}{\textsc{Muller}}
\newcommand{\mullerfam}{{\mathcal F}}
\newcommand{\score}{\text{Score}}
\newcommand{\acc}{\text{Acc}}
\newcommand{\reqres}{\textsc{ReqRes}}
\newcommand{\costreqres}{\textsc{CostReqRes}}
\newcommand{\reqresdist}{\textsc{ReqResCor}}
\newcommand{\att}[3]{\mathrm{Attr}_{#1}^{#2}(#3)}
\newcommand{\exptime}{\textsc{ExpTime}}
\newcommand{\pspace}{\textsc{PSpace}\xspace}
\title{Quantitative Reductions and \newline{} Vertex-Ranked Infinite Games%
\thanks{Supported by by the project ``TriCS'' (ZI 1516/1-1) of the German Research Foundation (DFG) and the Saarbrücken Graduate School of Computer Science. This work was mainly performed while the author was employed at the Reactive Systems Group, Saarland University, Germany.}
\thanks{This work is based on work first presented at GandALF '18~\cite{Weinert18}. %
	In particular, the proofs of Theorem~\ref{thm:quantitative-reductions:transitivity}, Theorem~\ref{thm:vertex-ranked-games:direct:complexity}, and Lemma~\ref{lem:request-response:reduction} have been revised and extended in comparison to that version. %
	Furthermore, Theorem~\ref{thm:vertex-ranked-games:generalization:sup}, Remark~\ref{rem:vertex-ranked-games:generalization:lim}, and Section~\ref{sec:applications:muller} have been added.
	Finally, the proofs of all lemmas and theorems are now included in the main text.}
}
\author{Alexander Weinert}
\institute{German Aerospace Center (DLR), Institute for Software Technology, Linder H{\"o}he, 51147 K{\"o}ln, Germany \\ \email{alexander.weinert@dlr.de}}
\begin{document}
\maketitle

\begin{abstract}
	We introduce quantitative reductions, a novel technique for structuring the space of quantitative games and solving them that does not rely on a reduction to qualitative games.
	We show that such reductions exhibit the same desirable properties as their qualitative counterparts and that they additionally retain the optimality of solutions.
	Moreover, we introduce vertex-ranked games as a general-purpose target for quantitative reductions and show how to solve them.
	In such games, the value of a play is determined only by a qualitative winning condition and a ranking of the vertices.
	
	We provide quantitative reductions of quantitative request-response games and of quantitative Muller games to vertex-ranked games, thus showing \exptime-completeness of solving the former two kinds of games.
	In addition, we exhibit the usefulness and flexibility of vertex-ranked games by showing how to use such games to compute fault-resilient strategies for safety specifications.
	This work lays the foundation for a general study of fault-resilient strategies for more complex winning conditions.
\end{abstract}

\section{Introduction}
\label{sec:introduction}

The study of quantitative infinite games has garnered great interest lately, as they allow for a much more fine-grained analysis and specification of reactive systems than classical qualitative games~\cite{BrimCDGR11,BruyereFiliotRandourRaskin17,BruyereHautemRandour16,EhrenfeuchtMycielski79,FZ14,WeinertZimmermann17,ZwickPaterson95}.
While there exists previous work investigating quantitative games, the approaches to solving them usually rely on ad-hoc solutions that are tailor-made to the considered winning condition.
Moreover, quantitative games are usually solved by reducing them to a qualitative game in a first step, hardcoding a certain value of interest during the reduction.
In particular, to the best of our knowledge, there exists no general framework for the analysis of such games that is analogous to the existing one for qualitative games.
In this work, we introduce such a framework that disentangles the study of quantitative games from that of qualitative ones.

Qualitative infinite games have been applied successfully in the verification and synthesis of reactive systems~\cite{BloemChatterjeeEtAl14,BloemChatterjeeHenzingerJobstmann09,BouyerMarkeyOlschewskiUmmels11,CernyChatterjeeHenzingerRadhakrishnaSingh11}.
They have given rise to a multitude of algorithms that ascertain system correctness and that synthesize correct-by-construction systems.
In such a game, two players, called Player~$0$ and Player~$1$, move a token in a directed graph.
After infinitely many moves, the resulting sequence of vertices is evaluated and one player is declared the winner of the play.
For example, in a qualitative request-response game~\cite{WallmeierHuettenThomas03}, the goal for Player~$0$ is to ensure that every visit to a vertex denoting some request is eventually followed by a visit to a vertex denoting an answer to that request.
In order to solve qualitative games, i.e., to determine a winning strategy for one player, one often reduces a complex game to a potentially larger, but conceptually simpler one.
For example, in a multi-dimensional request-response game, i.e., in a request-response game in which there exist multiple conditions that can be requested and answered, one stores the set of open requests and demands that every request is closed at infinitely many positions.
As this is a Büchi condition, which is much simpler than the request-response condition, one is able to reduce request-response games to Büchi games.

In recent years the focus of research has shifted from the study of qualitative games, in which one player is declared the winner of a given play, to that of quantitative games, in which the resulting play is assigned some value or cost.
Such games allow, for example, modeling systems in which requests have to be answered within a certain number of steps~\cite{BruyereHautemRandour16,ChatterjeeHenzingerHorn09,FaymonvilleZimmermann14,FZ14,KupfermanPitermanVardi09}, systems with one or more finite resources which may be drained and charged~\cite{BouyerMarkeyRandourLarsenLaursen16,ChatterjeeDoyen12,ChatterjeeDoyenHenzingerRaskin10,DBLP:journals/corr/abs-1209-3234}, or scenarios in which each move incurs a certain cost for either player~\cite{EhrenfeuchtMycielski79,ZwickPaterson95}.

In general, Player~$0$ aims to minimize the cost of the resulting play, i.e., to maximize its value, while Player~$1$ seeks to maximize the cost, thus minimizing the value.
In a quantitative request-response game, for example, it is the goal of Player~$0$ to minimize the number of steps taken between requests and their corresponding answers.
The typical questions asked in the context of such games are
\myquot{Does there exist an upper bound on the time between requests and responses that Player~$0$ can ensure?}~\cite{ChatterjeeHenzingerHorn09,FZ14,KupfermanPitermanVardi09,ScheweWeinertZimmermann18},
\myquot{Can Player~$0$ ensure an average cost per step greater than zero?}~\cite{ZwickPaterson95},
\myquot{What is the minimal time between requests and responses that Player~$0$ can ensure?}~\cite{WeinertZimmermann17}, or
\myquot{What is the minimal average level of the resource that Player~$0$ can ensure without it ever running out?}~\cite{BouyerMarkeyRandourLarsenLaursen16}.
The former two questions can be seen as boundedness questions, while the latter two are asking for optimal solutions.

Such decision problems are usually solved by fixing some bound~$b$ on the cost of the resulting plays and subsequently reducing the problem of finding a strategy for Player~$0$ that enforces a cost of at most~$b$ in the quantitative game to the problem of solving a qualitative game, hardcoding the fixed~$b$ in the process.
The problem of deciding whether or not Player~$0$ has such a strategy is called the ($b$-)threshold problem.

For example, in order to determine the winner in a quantitative request-response game as described above for some bound~$b$, we construct a Büchi game in which every time a request is opened, a counter for that request is started which counts up to the bound~$b$ and is reset if the request is answered.
Once any counter exceeds the value~$b$, we move to a terminal position indicating that Player~$0$ has lost.
We then require that every counter is inactive infinitely often, which is again a Büchi condition and thus much simpler than the original quantitative request-response condition.
Thus, Player~$0$ wins the resulting qualitative game if and only if she can ensure that every request is answered within at most~$b$ steps in the quantitative game.

Such reductions are usually specific to the problem being addressed.
Furthermore, they immediately abandon the quantitative aspect of the game under consideration, as the bound is hardcoded during the first step of the analysis.
Thus, even when only changing the bound one is interested in, the reduction has to be recomputed and the resulting qualitative game has to be solved from scratch.
In our request-response example, if one is interested in deciding the~$b'$-threshold problem for some~$b' \neq b$, one constructs a new Büchi game for the bound~$b'$.
This game is then solved independently of the one previously computed for the bound~$b$.

In this work, we lift the concept of reductions for qualitative games to quantitative games.
Such quantitative reductions enable the study of a multitude of optimization problems for quantitative games in a way similar to decision problems for qualitative games.
When investigating quantitative request-response games using quantitative reductions, for example, we only compute a single, simpler quantitative game and subsequently check this game for a winning strategy for Player~$0$ for any bound~$b$.
If she has such a strategy in the latter game, the quantitative reduction yields a strategy for her satisfying the same bound in the former one.

In general, we retain the intuitive property of reductions for qualitative games:
Using quantitative reductions, the properties of a complex quantitative game can be studied by investigating a potentially larger, but conceptually simpler quantitative game.

\paragraph*{Contributions}
\label{sec:introduction:contributions}
We present the first framework for reductions between quantitative games and we provide vertex-ranked games as general-purpose targets for such reductions.
Moreover, we show tight bounds on the complexity of solving vertex-ranked games with respect to a given bound.

Subsequently, we provide three examples illustrating the use of the concepts introduced in this work:
First, we define quantitative request-response games and solve them using quantitative reductions to vertex-ranked games.
Second, we show how to solve quantitative Muller games as defined by McNaughton~\cite{McNaughton00} via quantitative reductions to vertex-ranked safety games.
Third, we illustrate the versatility of vertex-ranked games by using them to compute fault-resilient strategies for safety games with faults.
We summarize our contributions with regards to solving quantitative games in Table~\ref{tab:contributions:solving} and we summarize our contributions with regards to solving quantitative games optimally in Table~\ref{tab:contributions:optimization}.

\begin{table}
\centering \footnotesize
\begin{tabular}{lccc} \toprule
\multicolumn{1}{c}{Game}%
& Time & Space & Memory \\ \midrule
\parbox{.3\textwidth}{Vertex-ranked \\ \hspace*{1em} $\sup$-games}%
	& $\bigo(n) + t(\card{\game})$%
	& $\bigo(n) + s(\card{\game})$%
	& $\bigo(\card{\sigma})$ \\ \midrule
\parbox{.3\textwidth}{Vertex-ranked \\ \hspace*{1em} $\limsup$-games}%
	& $\bigo(n^3 + n^2\cdot t(\card{\game}))$%
	& $\bigo(n + s(\card{\game}))$
	& $\bigo(\card{\sigma})$ \\ \midrule
\parbox{.3\textwidth}{Quantitative \\ \hspace*{1em} request-response games}%
	& $\bigo( n^2 b^{2d} d^2 2^d )$%
	& -- %
	& $\bigo(n b^d d 2^d)$ \\ \midrule
\parbox{.3\textwidth}{Quantitative \\ \hspace*{1em} Muller games}%
	& $\bigo((n!)^3)$%
	& -- %
	& $\bigo((n!)^3)$ \\ \bottomrule
\end{tabular}
\caption{%
	A summary of the proofs in work concerned with solving games. %
	For each game,~$n$ denotes the number of vertices of the game. %
	For vertex-ranked games,~$\game$ denotes the underlying qualitative game while~$t(\card{\game})$ and~$s(\card{\game})$ denote the time and space required to solve that underlying game. %
	Furthermore,~$\sigma$ denotes the size of a winning strategy in the corresponding qualitative game. %
	For request-response games,~$d$ and~$W$ denote the number of request-response pairs and the largest weight assigned to any edge in the game, respectively. %
	For the sake of brevity and consistency, we use the shorthand $b = d2^dnW$.
}
\label{tab:contributions:solving}
\end{table}

\begin{table}
\centering \footnotesize
\begin{tabular}{lccc} \toprule
\multicolumn{1}{c}{Game}%
& Time & Space \\ \midrule
\parbox{.3\textwidth}{Vertex-ranked \\ \hspace*{1em} $\sup$-games}%
	& $\bigo(\log(M)(n + t(\card{\game})))$%
	& $\bigo(M) + s(\card{\game})$ \\ \midrule
\parbox{.3\textwidth}{Vertex-ranked \\ \hspace*{1em} $\limsup$-games}%
	& $\bigo(\log(M)(n^3 + n^2\cdot t(\card{\game})))$%
	& $\bigo(n + s(\card{\game}))$ \\ \midrule
\parbox{.3\textwidth}{Quantitative \\ \hspace*{1em} request-response games}%
	& $\bigo(\log(b) ( n^2 b^{2d} d^2 2^d ) )$ %
	& -- \\ \midrule
\parbox{.3\textwidth}{Quantitative \\ \hspace*{1em} Muller games}%
	& $\bigo((n!)^3)$ %
	& -- \\ \bottomrule
\end{tabular}
\caption{%
	A summary of the proofs in work concerned with solving games optimally. %
	For vertex-ranked games,~$M$ denotes the number of different ranks assigned to vertices in the game. %
	The remaining symbols are defined as in Table~\ref{tab:contributions:solving}.
}
\label{tab:contributions:optimization}
\end{table}

\paragraph*{Structure of this work}
\label{sec:introduction:structure}
After introducing qualitative and quantitative games formally in Section~\ref{sec:preliminaries}, we define quantitative reductions in Section~\ref{sec:quantitative-reductions} and show that they provide a mechanism to determine the minimal bound~$b$ such that Player~$0$ can enforce a cost of at most~$b$ in a given quantitative game:
If a game~$\game$ can be reduced to a game~$\game'$, then we can use a strategy for Player~$0$ that minimizes the cost of plays in~$\game'$ to construct a strategy for her which minimizes the cost of plays in~$\game$.

In Section~\ref{sec:vertex-ranked-games}, we define vertex-ranked games, very general classes of quantitative games that can be used as targets for quantitative reductions.
The quantitative condition of such games is quite simple in that the cost of a play is determined only by a qualitative winning condition and by a ranking of the vertices of the game.
If the resulting play is winning according to the qualitative condition, then its cost is given by the highest rank visited at all or visited infinitely often, depending on the particular variant of vertex-ranked games.
Otherwise, the value of the play is infinite.
We show that solving such vertex-ranked games is as hard as solving games with the underlying qualitative winning condition.

Finally, in Section~\ref{sec:application} we provide three examples of the versatility of vertex-ranked games:
First, we define and solve request-response games with costs via quantitative reductions.
Second, we recall the definition of quantitative Muller games due to McNaughton~\cite{McNaughton00} and show how to solve these games via quantitative reductions as well.
Third, we discuss how to use vertex-ranked games to compute fault-resilient strategies in safety games with faults~\cite{DallalNeiderTabuada16}.
In such games, after Player~$0$ has picked a move, say to vertex~$v$, a fault may occur, which overrides the choice of Player~$0$ and the game continues in vertex~$v' \neq v$ instead.
By using vertex-ranked games, we are able to compute strategies that are resilient against as many faults as possible.

\section{Preliminaries}
\label{sec:preliminaries}

We first define notions that are common to both qualitative and quantitative games.
Afterwards, we recapitulate the standard notions for qualitative games before defining quantitative games and lifting the notions for qualitative games to the quantitative case.

We denote the non-negative integers by $\nats$ and define $[n] = \set{0, 1, \ldots, n-1}$ for every $n \ge 1$.
Also, we define~$\infty > n$ for all~$n \in \nats$ and~$\nats_\infty = \nats \cup \set{\infty}$.
Finally, for any set~$V$, we write~$\card{V}$ and~$V^\omega$ to denote the cardinality of~$V$ and the set of all infinite sequences over~$V$, respectively.
An \textit{arena}~$\arena=(V, V_0, V_1, E, \vinit)$ consists of a finite, directed graph~$(V, E)$, a partition~$(V_0, V_1)$ of $V$ into the vertices of Player~$0$ and Player~$1$, and an initial vertex~$\vinit \in V$. The size of $\arena$, denoted by $\size{\arena}$, is defined as $\size{V}$.
A \textit{play} in $\arena$ is an infinite path~$\rho = v_0 v_1 v_2 \cdots$ through $(V, E)$ starting in $\vinit$.
To rule out finite plays, we require every vertex to be non-terminal, i.e., we require that for every vertex~$v \in V$ there exists some vertex~$v' \in V$ such that~$(v, v') \in E$.

A \textit{strategy} for Player~$i$ is a mapping $\sigma \colon V^*V_i \rightarrow V$ that assigns to each play prefix~$\pi$ ending in a vertex of Player~$i$ a vertex~$\sigma(\pi)$ to move to.
Formally, we require $(v, \sigma(\pi v)) \in E$ for all $\pi \in V^*$, $v \in V_i$.
We say that $\sigma$ is \textit{positional} if $\sigma(\pi v) = \sigma(v)$ for every $\pi \in V^*$, $v \in V_i$.
A play $v_0 v_1 v_2 \cdots$ is \textit{consistent} with a strategy~$\sigma$ for Player~$i$, if $v_{j+1} = \sigma( v_0 \cdots v_j)$ for all~$j$ with $v_j \in V_i$.

A \textit{memory structure}~$\mem = (M, \init, \update)$ for an arena $(V, V_0, V_1, E, \vinit)$ consists of a finite set~$M$ of memory states, an initial memory state $\init \in M$, and an update function~$\update\colon M \times V \rightarrow M$.
We extend the update function to finite play prefixes in the usual way:
$\update^+(v_\initmark) = m_\initmark$ and $\update^+(\pi v) = \update(\update^+(\pi), v)$ for play prefixes $\pi \in V^+$ and $v \in V$.
A next-move function $\nxt \colon V_i \times M \rightarrow V$ for Player~$i$ has to satisfy $(v, \nxt(v, m)) \in E$ for all $v \in V_i$, $m \in M$.
Each pair of a memory structure~$\mem$ and a next-move function~$\nxt$ induces a strategy~$\sigma$ for Player~$i$ with memory~$\mem$ via $\sigma(v_0\cdots v_j) = \nxt(v_j, \update^+(v_0 \cdots v_j))$.
A strategy is called \textit{finite-state} if it can be implemented by a memory structure.
We define $\card{\mem} = \card{M}$.
In a slight abuse of notation, the size~$\card{\sigma}$ of a finite-state strategy is the size of a memory structure implementing it.

An arena $\arena = (V, V_0, V_1, E, \vinit)$ together with a memory structure $\mem = (M, \init, \update)$ for $\arena$ induce the extended arena 
\[
	\arena\times\mem = (V \times M, V_0 \times M, V_1 \times M, E', (\vinit, \init)) \enspace ,
\]
where~$E'$ is defined via $((v,m), (v',m')) \in E'$ if and only if $(v,v') \in E$ and $\update(m, v' ) = m'$.
Every play $\rho = v_0 v_1 v_2\cdots$ in $\arena$ has a unique extended play $\ext_\mem(\rho) = (v_0, m_0) (v_1, m_1)
(v_2, m_2) \cdots$ in $\arena \times \mem$ defined by $m_0 = \init$ and $m_{j+1} = \update(m_j, v_{j+1})$, i.e., $m_j = \update^+(v_0 \cdots v_j)$.
We omit the index~$\mem$ if it is clear from the context.
The extension of a finite play prefix in $\arena$ is defined analogously.

Let~$\arena$ be an arena, let $\mem_1 = (M_1, m^1_\initmark, \update_1)$ be a memory structure for~$\arena$, and let $\mem_2 = (M_2, m^2_\initmark, \update_2)$ be a memory structure for~$\arena \times \mem_1$.
We define $\mem_1 \times \mem_2 = (M_1 \times M_2, (m^1_\initmark, m^2_\initmark), \update)$, where $\update((m_1, m_2), v) = (m'_1,m'_2)$ if $\update_1(m_1, v) = m'_1$ and $\update_2(m_2, (v, m'_1)) = m'_2$.
Via a straightforward induction and in a slight abuse of notation we obtain
\[ \update^+_{\mem_1 \times \mem_2}(\pi) = (
	\update^+_{\mem_1}(\pi),
	\update^+_{\mem_2}(\ext_{\mem_1}(\pi)) )  \]
for all finite play prefixes~$\pi$, where we do not distinguish between the terms $(v, m_1, m_2)$, $((v, m_1), m_2)$, and $(v, (m_1, m_2))$.

\subsection{Qualitative Games}
\label{sec:preliminaries:qualitative}

A \textit{qualitative game}~$\game = ( \arena, \wincond )$ consists of an arena $\arena$ with vertex set~$V$ and a set~$\wincond \subseteq (V')^\omega$ of winning plays for Player~$0$, for some superset $V' \supseteq V$.
We call~$\wincond$ the winning condition of~$\game$.
In, e.g., a request-response game as described in Section~\ref{sec:introduction}, the winning condition contains the plays~$\rho$ in which for every visit to a vertex denoting a request there is a subsequent visit to a vertex denoting a response.
The set of winning plays for Player~$1$ is $V^\omega \setminus \wincond$.
As our definition of games is very general, the infinite object~$\wincond$ may not be finitely describable.
If it is, however, we slightly abuse notation and define~$\card{\game}$ as the sum of~$\card{\arena}$ and the size of a description of~$\wincond$.
In, e.g., the case of a request-response game, the set~$\wincond$ can be described via two sets~$Q, P \subseteq V$, where~$Q$ and~$P$ denote the vertices representing requests and responses, respectively.
We omit, however, a general definition of the term ``description of~$\wincond$'', as it is irrelevant for this work.

A strategy~$\sigma$ for Player~$i$ is a \textit{winning strategy} for her in $\game = (\arena, \wincond)$ if all plays consistent with~$\sigma$ are winning for her.
If Player~$i$ has a winning strategy, then we say she wins $\game$.
\textit{Solving} a game amounts to determining its winner, if one exists.
A game is \textit{determined} if one player has a winning strategy.

\subsection{Quantitative Games}
\label{sec:preliminaries:quantitative}

Quantitative games extend the classical model of qualitative games.
In a quantitative game, plays are not partitioned into winning and losing plays, but rather they are assigned some measure of quality.
We keep this definition very general in order to encompass many of the already existing models.
In Section~\ref{sec:vertex-ranked-games}, we give concrete examples of such games and show how to solve them optimally.

A \textit{quantitative game}~$\game = (\arena, \cost)$ consists of an arena~$\arena$ with vertex set~$V$ and a cost-function $\cost\colon (V')^\omega \rightarrow \nats_\infty$ for plays where $V' \supseteq V$ is again some superset of~$V$.
Similarly to the winning condition in the qualitative case,~$\cost$ is, in general, an infinite object.
If it is finitely describable, we, again slightly abusively, define the \textit{size}~$\card{\game}$ of~$\game$ as the sum of~$\card{\arena}$ and the size of a description of~$\cost$.
A cost function may, e.g., be described via a finite Mealy machine~\cite{Mealy55} that processes the play and outputs a natural number on each step.
The cost of a play could then be defined as the supremum, infimum, or average of the resulting sequence of numbers.
In such a case, a description of the cost function would consist of the Mealy machine, while the size of the description of the function could be defined as the sum of number of states of the machine and the length of the largest integer that is output by the machine.

A play~$\rho$ in~$\arena$ is \textit{winning} for Player~$0$ in~$\game$ if~$\cost(\rho) < \infty$.
Winning strategies, the winner of a game, and solving a game are defined as in the qualitative case.
In order to simplify the presentation, we only consider the case in which Player~$0$ aims to minimize the cost of a play. All concepts in this work can, however, be easily adapted to the dual case in which Player~$0$ aims to maximize the cost of a play.

We extend the cost-function over plays to strategies by defining~$\cost(\sigma) = \sup_\rho\cost(\rho)$ and~$\cost(\tau) = \inf_\rho\cost(\rho)$, where~$\rho$ ranges over the plays consistent with the strategy~$\sigma$ for Player~$0$ and over the plays consistent with the strategy~$\tau$ for Player~$1$, respectively.
Moreover, we say that a strategy~$\sigma$ for Player~$0$ is \textit{optimal} if its cost is minimal among all strategies for her.
Dually, a strategy~$\tau$ for Player~$1$ is optimal for him if its cost is maximal among all strategies for him.

For every strategy~$\sigma$ for Player~$0$, $\cost(\sigma) < \infty$ implies that~$\sigma$ is winning for Player~$0$.
However, the converse does not hold true:
Each play consistent with some strategy~$\sigma$ may have finite cost, while for every~$n \in \nats$ there exists a play~$\rho$ consistent with~$\sigma$ with~$\cost(\rho) \geq n$.
In contrast, a strategy~$\tau$ for Player~$1$ has~$\cost(\tau) = \infty$, if and only if~$\tau$ is winning for him.

We say that Player~$0$ wins~$\game$ with respect to~$b$ if she has a strategy~$\sigma$ with~$\cost(\sigma) \leq b$.
Dually, if Player~$1$ has a strategy~$\tau$ with~$\cost(\tau) > b$, then we say that he wins~$\game$ with respect to~$b$.
Solving a quantitative game~$\game$ with respect to~$b$ amounts to deciding whether or not Player~$0$ wins~$\game$ with respect to~$b$.
We call this decision problem the~\textit{$b$-threshold problem} of~$\game$ and omit the~$b$ if it is clear from the context.

If Player~$0$ has a strategy~$\sigma$ with $\cost(\sigma) \leq b$, then for all strategies~$\tau$ for Player~$1$ we have~$\cost(\tau) \leq b$.
Dually, if Player~$1$ has a strategy~$\tau$ with $\cost(\tau) > b$, then for all strategies~$\sigma$ for Player~$0$ we have~$\cost(\sigma) > b$.
We say that a quantitative game is \textit{determined} if for each~$b \in \nats$, either Player~$0$ has a strategy with cost at most~$b$, or Player~$1$ has a strategy with cost strictly greater than~$b$.

We say that~$b \in \nats$ is a \textit{cap} of a quantitative game~$\game$ if Player~$0$ winning~$\game$ implies that she has a strategy with cost at most~$b$.
A cap~$b$ for a game~$\game$ is \textit{tight} if it is minimal.

\section{Quantitative Reductions}
\label{sec:quantitative-reductions}

\newcommand{\capfunc}{\text{cap}}

Before defining quantitative reductions, we first recall the definition of qualitative ones.
To this end, let~$\game = (\arena, \wincond)$ and~$\game' = (\arena', \wincond')$ be qualitative games.
We say that~$\game$ is reducible to~$\game'$ via the memory structure~$\mem$ for~$\arena$ if~$\arena' = \arena \times \mem$ and if~$\rho \in \wincond$ if and only if~$\ext(\rho) \in \wincond'$.
Then, Player~$0$ wins~$\game$ if and only if she wins~$\game'$.
Moreover, if~$\sigma'$ is a winning strategy for Player~$0$ in~$\game'$ that is implemented by~$\mem'$, then a winning strategy for her in~$\game$ is implemented by~$\mem \times \mem'$.

We now define quantitative reductions as an analogous technique for the study of quantitative games and show that they exhibit the same properties as qualitative reductions.
Intuitively, given two quantitative games~$\game$ and~$\game'$, we aim to say that~$\game$ is reducible to~$\game'$ if plays in one game can be translated into plays in the other, retaining their cost along the transformation.
In fact, two such associated plays do not need to carry identical cost, but it suffices that the order on plays induced by their cost is retained.

To capture such order-retaining transformations of cost functions, we introduce~$b$-correction functions.
Let~$b \in \nats_\infty$.
A function~$f\colon \nats_\infty \rightarrow \nats_\infty$ is a $b$-correction function if
\begin{itemize}
	\item for all~$b'_1 < b'_2 < b$ we have~$f(b'_1) < f(b'_2)$,
	\item for all~$b' < b$ we have~$f(b') < f(b)$, and
	\item for all~$b' \geq b$ we have~$f(b') \geq f(b)$.
\end{itemize}
Thus, intuitively, a~$b$-correction function is strictly monotonic up to, but not including,~$b$, and~$f(b)$ is a lower bound for all values~$f(b')$ with~$b' \geq b$.

For $b = \infty$ these requirements degenerate to demanding that~$f$ is strictly monotonic, which in turn implies~$f(\infty) = \infty$ and~$f(b) \neq \infty$ for all~$b \neq \infty$.
Dually, if $b = 0$, we only require that~$f(0)$ bounds the values of~$f(b)$ from below.
As an example, for each~$b \in \nats_\infty$ we define the function~$\capfunc_b$ as follows:
\[
\capfunc_b(b') = \begin{cases}
	\min\set{b, b'} &\text{if }b' \neq \infty \text{ and} \\
	\infty &\text{otherwise \enspace .}
\end{cases}
\]
Then, the function~$\capfunc_b$ is a~$b$-correction function.

Leveraging the notion of~$b$-correction functions, we are now able to define quantitative reductions.
Let~$\game = (\arena,\cost)$ and~$\game' = (\arena',\cost')$ be quantitative games, let~$\mem$ be some memory structure for~$\arena$, let~$b \in \nats_\infty$, and let~$f\colon \nats_\infty \rightarrow \nats_\infty$ be some function.
We say that~$\game$ is \emph{$b$-reducible to~$\game'$ via~$\mem$ and~$f$} if
\begin{itemize}
	\item $\arena' = \arena \times \mem$,
	\item $f$ is a $b$-correction function,
	\item $\cost'(\ext(\rho)) = f(\cost(\rho))$ for all plays~$\rho$ of~$\arena$ with~$\cost(\rho) < b$, and
	\item $\cost'(\ext(\rho)) \geq f(b)$ for all plays~$\rho$ of~$\arena$ with~$\cost(\rho) \geq b$.
\end{itemize}
We write~$\game \reducesto^b_{\mem,f} \game'$ in this case.
Moreover, we use~$\capfunc_b$ as a ``default'' function: If~$f = \capfunc_b$, we omit stating~$f$ explicitly and write~$\game \reducesto^b_\mem \game'$.
The penultimate condition implies that for each play~$\ext(\rho)$ in~$\arena'$ with~$\cost'(\ext(\rho)) \leq f(b)$ there exists some~$b'$ such that~$\cost'(\ext(\rho)) = f(b')$.
Clearly, quantitative reductions are downward-closed with respect to the parameter~$b$:
If~$\game \reducesto^b_{\mem, f} \game'$ for some~$b \in \nats_\infty$, then, for all~$b' \leq b$, we have~$\game \reducesto^{b'}_{\mem, f} \game'$.

Qualitative reductions retain whether or not a play is winning.
In quantitative games, the notion of winning is refined to the notion of cost of a play.
Hence, our next aim is to show that quantitative reductions indeed retain the costs of strategies.
To this end, we first demonstrate that correction functions tie the cost of plays in~$\game'$ to that of plays in~$\game$.

\begin{lemma}
\label{lem:cost-mapping-properties}
Let~$\game$ and~$\game'$ be quantitative games such that~$\game \reducesto^{b}_{\mem, f} \game'$, for some~$b \in \nats_\infty$, some memory structure $\mem$, and some~$b$-correction function~$f$.
All of the following hold true for all~$b' \in \nats$ and all plays~$\rho$ in~$\game$:
\begin{enumerate}
	\item\label{lem:cost-mapping-properties:1} If $b' < b$ and~$\cost'(\ext(\rho)) < f(b')$, then $\cost(\rho) < b'$.
	\item\label{lem:cost-mapping-properties:2} If $b' < b$ and~$\cost'(\ext(\rho)) = f(b')$, then $\cost(\rho) = b'$.
	\item\label{lem:cost-mapping-properties:4} If $b' < b$ and~$\cost'(\ext(\rho)) > f(b')$, then $\cost(\rho) > b'$.
	\item\label{lem:cost-mapping-properties:3} If $\cost'(\ext(\rho)) \geq f(b)$, then $\cost(\rho) \geq b$.
\end{enumerate}
\end{lemma}

\begin{proof}
\ref{lem:cost-mapping-properties:1}) Let~$b' < b$ and let~$\rho$ be such that $\cost'(\ext(\rho)) < f(b')$.
Towards a contradiction assume $\cost(\rho) = b'' \geq b'$.
We have $f(b'') = f(\cost(\rho)) = \cost'(\ext(\rho))$.
If $b'' < b$, then we obtain $f(b') \leq f(b'')$, which implies $f(b') \leq \cost'(\ext(\rho))$, contradicting $\cost'(\ext(\rho)) < f(b')$.
If, however, $b'' \geq b$, then $\cost'(\ext(\rho)) = f(b'') \geq f(b) > f(b')$, again contradicting $\cost'(\ext(\rho)) < f(b')$.

\ref{lem:cost-mapping-properties:2}) Let~$b' < b$ and let~$\rho$ be such that $\cost'(\ext(\rho)) = f(b')$.
Towards a contradiction assume $\cost(\rho) = b'' \neq b'$.
We again have $f(b'') = \cost'(\ext(\rho))$.
First assume $b'' < b'$.
Then we have $b'' < b' < b$, which implies $f(b'') < f(b')$, contradicting $\cost'(\ext(\rho)) = f(b')$.
If $b' < b'' < b$, we obtain the contradiction $\cost'(\ext(\rho)) > f(b')$ analogously.
Finally, if $b \leq b''$, then $\cost'(\ext(\rho)) = f(b'') \geq f(b) > f(b')$, which again contradicts $\cost'(\ext(\rho)) = f(b')$.

\ref{lem:cost-mapping-properties:4}) Let~$b' < b$ and let~$\rho$ be such that~$\cost'(\ext(\rho)) > f(b')$.
Towards a contradiction, assume~$\cost(\rho) \leq b'$.
We then obtain~$f(\cost(\rho)) \leq f(b')$ due to $b' < b$ and due to~$f$ being a~$b$-correction function.
Furthermore, due to~$\cost(\rho) \leq b'$, we obtain~$\cost(\rho) < b$.
Hence, we have~$f(\cost(\rho)) = \cost'(\ext(\rho))$ due to the third property from the definition of~$b$-reducibility.
Since we furthermore have $f(\cost(\rho)) \leq f(b')$ as argued above.
This, in turn, directly implies $\cost'(\ext(\rho)) \leq f(b')$, which contradicts the assumption $\cost'(\ext(\rho)) > f(b')$.

\ref{lem:cost-mapping-properties:3}) Let~$\rho$ be such that $\cost'(\ext(\rho)) \geq f(b)$.
Towards a contradiction assume $\cost(\rho) = b' < b$.
We again have $f(b') = \cost'(\ext(\rho))$.
However, we obtain $f(b') < f(b)$ due to~$f$ being a $b$-correction function.
This contradicts $\cost'(\ext(\rho)) = f(b') \geq f(b)$.
\qed
\end{proof}

These properties of correction functions when used in quantitative reductions enable us to state and prove the main result of this section, which establishes quantitative reductions as the quantitative counterpart of qualitative reductions:
If~$\game \reducesto^{b+1}_{\mem, f} \game'$, then all plays of cost at most~$b$ in~$\game$ are \myquot{tracked} precisely in~$\game'$.
Hence, as long as the cost of a strategy in~$\game$ is at most~$b$, it is possible to construct a strategy in~$\game'$ with cost at most~$f(b)$.
This holds true for both players.

If a strategy has cost greater than~$b$, however, we do not have a direct correspondence between costs of plays in~$\game$ and~$\game'$ anymore.
If, however,~$b$ additionally is a cap of~$\game$, and if~$\game$ is determined, then we can still show that Player~$1$ has a strategy of infinite cost in~$\game$ if he has a strategy of cost greater than~$f(b)$ in~$\game'$.

\begin{theorem}
\label{thm:capped-reduction} 
Let~$\game$ and~$\game'$ be determined quantitative games with $\game \reducesto^{b+1}_{\mem, f} \game'$ for some~$b$, $\mem$, and~$f$, where~$b \in \nats$ is a cap of~$\game$.
\begin{enumerate}
	\item\label{thm:capped-reduction:exact} Let $b' < b+1$. Player~$i$ has a strategy~$\sigma'$ in~$\game'$ with $\cost'(\sigma') = f(b')$ if and only if they have a strategy~$\sigma$ in~$\game$ with $\cost(\sigma) = b'$.
	\item\label{thm:capped-reduction:cap} If Player~$1$ has a strategy~$\tau'$ in~$\game'$ with~$\cost'(\tau') \geq f(b+1)$, then he has a strategy~$\tau$ in~$\game$ with~$\cost(\tau) = \infty$.
\end{enumerate}
\end{theorem}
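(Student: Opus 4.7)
The plan is to first establish the standard strategy correspondence induced by the memory structure~$\mem$, as in the qualitative reductions recalled at the start of Section~\ref{sec:quantitative-reductions}: a strategy~$\sigma'$ for Player~$i$ in~$\game' = (\arena \times \mem, \cost')$ yields, by reading off its moves while tracking~$\mem$'s state alongside the play in~$\game$, a strategy~$\sigma$ for Player~$i$ in~$\game$ that uses~$\mem$ as memory; conversely any strategy for Player~$i$ in~$\game$ (possibly with additional memory) induces one in~$\game'$ by forgetting the memory component. The key property preserved in both directions is that the plays~$\rho$ consistent with~$\sigma$ correspond bijectively to their extensions~$\ext(\rho)$, which are exactly the plays consistent with~$\sigma'$.

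For Part~\ref{thm:capped-reduction:exact}, I would transport costs across this bijection using Lemma~\ref{lem:cost-mapping-properties}. Given~$\sigma$ for Player~$0$ with $\cost(\sigma) = b' < b+1$, every consistent play~$\rho$ has $\cost(\rho) \leq b' < b+1$, so $\cost'(\ext(\rho)) = f(\cost(\rho)) \leq f(b')$ by the defining clause of a~$(b+1)$-reduction together with the strict monotonicity of the~$(b+1)$-correction function~$f$ on $\set{0, \ldots, b+1}$. This already gives $\cost'(\sigma') \leq f(b')$, and since the supremum~$b' \in \nats$ of $\nats_\infty$-valued costs is attained (natural numbers are discrete, and the supremum cannot be~$\infty$), some consistent~$\rho$ has $\cost(\rho) = b'$ and hence $\cost'(\ext(\rho)) = f(b')$, forcing equality. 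The reverse direction—passing from $\cost'(\sigma') = f(b')$ back to $\cost(\sigma) = b'$—uses parts~\ref{lem:cost-mapping-properties:1} and~\ref{lem:cost-mapping-properties:2} of Lemma~\ref{lem:cost-mapping-properties} to bound consistent costs in~$\game$ by~$b'$, and then attainment of the new supremum gives the matching value; the same pattern, with suprema replaced by infima, handles Player~$1$.

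For Part~\ref{thm:capped-reduction:cap}, I would first construct from~$\tau'$ the corresponding strategy~$\tau_0$ in~$\game$. Every play~$\rho$ consistent with~$\tau_0$ satisfies $\cost'(\ext(\rho)) \geq f(b+1)$ by assumption, so Lemma~\ref{lem:cost-mapping-properties}\ref{lem:cost-mapping-properties:3} yields $\cost(\rho) \geq b+1$, and hence $\cost(\tau_0) \geq b+1 > b$. By the observation in Section~\ref{sec:preliminaries:quantitative} that a Player~$1$ strategy of cost strictly greater than~$b$ forces every Player~$0$ strategy to have cost strictly greater than~$b$, Player~$0$ has no strategy of cost at most~$b$ in~$\game$. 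Since~$b$ is a cap of~$\game$, Player~$0$ then does not win~$\game$ at all; by determinacy Player~$1$ wins, which, via the characterization that $\cost(\tau) = \infty$ is equivalent to~$\tau$ being winning for Player~$1$, supplies the desired strategy of infinite cost.

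The main obstacle is exactly this last step: the strategy~$\tau_0$ built directly from~$\tau'$ only witnesses $\cost(\tau_0) > b$, whereas the conclusion demands a fully winning strategy of infinite cost, which need not be~$\tau_0$ itself. Bridging this gap depends crucially on combining the cap property of~$b$ (which turns the absence of a Player~$0$ strategy of cost~$\leq b$ into outright non-winning for Player~$0$) with determinacy (which then produces a Player~$1$ winning strategy). A secondary, more routine subtlety pervading Part~\ref{thm:capped-reduction:exact} is obtaining \emph{equalities} rather than mere inequalities of cost under~$f$, handled by the discreteness of~$\nats$ ensuring that the defining supremum or infimum of $\cost(\sigma)$ is realized by some consistent play.
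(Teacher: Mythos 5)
Your proposal is correct and follows essentially the same route as the paper's proof: the strategy correspondence induced by~$\mem$, cost transport via Lemma~\ref{lem:cost-mapping-properties}, attainment of the extremal cost by discreteness of~$\nats$ to upgrade inequalities to equalities, and for Part~\ref{thm:capped-reduction:cap} the combination of the cap property with determinacy to pass from a Player~$1$ strategy of cost~$\geq b+1$ to one of infinite cost. You correctly identify that last step as the crux, which is exactly where the paper also invokes the cap and determinacy.
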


\begin{proof}
\ref{thm:capped-reduction:exact}) %
We first show the direction from left to right.
To this end, let~$\sigma'$ be a strategy for Player~$i$ in~$\game'$ with~$\cost'(\sigma') = f(b')$ for some~$b' \leq b$.
We define the strategy~$\sigma$ for Player~$i$ in~$\game$ for all play prefixes~$\pi$ ending in a vertex in~$V_i$ via~$\sigma(\pi) = v$ if~$\sigma'(\ext(\pi)) = (v, m)$ for some~$m \in M$ and claim~$\cost(\sigma) = b'$.
To this end, we first show~$\cost(\sigma) \leq b'$ for the case $i=0$ and~$\cost(\sigma) \geq b'$ for the case~$i=1$.

Let~$\rho$ be an infinite play consistent with~$\sigma$.
A straightforward induction shows that $\rho' = \ext(\rho)$ is consistent with~$\sigma'$.
If~$i = 0$, i.e., if~$\sigma$ is a strategy for Player~$0$, then $\cost'(\rho') = \cost'(\ext(\rho)) \leq \cost'(\sigma') = f(b')$.
Since we furthermore have~$b' < b+1$ by assumption and since~$f$ is a~$b+1$-correction function, we obtain $\cost(\rho) \leq b'$, due to Lemma~\ref{lem:cost-mapping-properties}.\ref{lem:cost-mapping-properties:1} and Lemma~\ref{lem:cost-mapping-properties}.\ref{lem:cost-mapping-properties:2}.
Since we picked~$\rho$ arbitrarily from the plays consistent with~$\sigma$, this in turn implies~$\cost(\sigma) \leq b'$.
If, however~$i = 1$, we directly obtain~$\cost'(\ext(\rho)) \geq f(b')$ due to $\cost(\sigma) \geq f(b')$.
Hence, we furthermore obtain $\cost(\rho) \geq b'$ due to Lemma~\ref{lem:cost-mapping-properties}.\ref{lem:cost-mapping-properties:2} and Lemma~\ref{lem:cost-mapping-properties}.\ref{lem:cost-mapping-properties:4}.
Since we again picked~$\rho$ arbitrarily from the plays consistent with~$\sigma$, this in turn implies~$\cost(\sigma) \geq b'$.
It remains to show~$\cost(\sigma) \geq b'$ for the case~$i=0$ and~$\cost(\sigma) \leq b'$ for the case~$i = 1$.

First, since~$b' < b + 1 < \infty$, we obtain~$f(b') < \infty$:
If $f(b') = \infty$, we obtain $f(b'+1) = \infty$, which contradicts strict monotonicity of~$f$ up to and including~$b+1 \geq b'+1$.
Since~$\cost'(\sigma') = f(b') < \infty$, there exists a play~$\rho'$ that is consistent with~$\sigma'$ such that~$\cost'(\rho') = f(b')$.

Further, let~$\rho$ be the unique play such that~$\ext(\rho) = \rho'$.
By induction, we obtain that~$\rho$ is consistent with~$\sigma$.
Additionally, we have $\cost(\rho) = b'$ due to~$\cost'(\rho') = \cost'(\ext(\rho)) = f(b')$ and Lemma~\ref{lem:cost-mapping-properties}.\ref{lem:cost-mapping-properties:2}.
Hence, $\cost(\sigma) \geq b'$ if~$i=0$, and $\cost(\sigma) \leq b'$ if~$i=1$, which concludes the direction from left to right.

In order to show the inverse direction of the statement, let~$\sigma$ be a strategy in~$\game$ with~$\cost(\sigma) = b' < b + 1$.
We define the strategy~$\sigma'$ for Player~$i$ in~$\game'$ for all play prefixes~$\ext(\pi) = (v_0,m_0)\cdots(v_j,m_j)$ ending in a vertex in~$V_i \times M$ as~$\sigma'(\ext(\pi)) = (v, \update(m_j, v))$ if~$\sigma(\pi) = v$ and claim~$\cost'(\sigma') = f(b')$.

Let~$\ext(\rho)$ be a play consistent with~$\sigma'$.
A straightforward induction yields that~$\rho$ is consistent with~$\sigma$, hence, if~$i=0$, then $\cost(\rho) \leq b'$ and thus $\cost'(\ext(\rho)) \leq f(b')$ due to~$b' < b+1$.
Dually, if~$i=1$, then $\cost(\rho) \geq b'$ and~$\cost'(\ext(\rho)) \geq f(b')$.
Hence, we obtain~$\cost'(\sigma') \leq f(b')$ if~$i=0$ as well as~$\cost'(\sigma') \geq f(b')$ if~$i=1$.
It remains to show~$\cost'(\sigma') \geq f(b')$ and~$\cost'(\sigma') \leq f(b')$ in the former and latter case, respectively.

Now let~$\rho$ be a play consistent with~$\sigma$ such that~$\cost(\rho) = b'$.
Since~$b' < \infty$, such a play exists.
Via another straightforward induction we obtain that~$\ext(\rho)$ is consistent with~$\sigma'$.
As $\cost'(\ext(\rho)) = f(b')$, we furthermore obtain $\cost'(\sigma') \geq f(b')$ if~$i=0$ and~$\cost'(\sigma') \leq f(b')$ if~$i=1$, which concludes the proof of this statement.

\ref{thm:capped-reduction:cap}) Let~$\tau'$ be a strategy for Player~$1$ in~$\game'$ with $\cost'(\tau') \geq f(b + 1)$.
We define the strategy~$\tau$ for Player~$1$ in~$\game$ via~$\tau(\pi) = v$ if $\tau'(\ext(\pi)) = (v,m)$ for all play prefixes~$\pi$ in~$\game$.
Let~$\rho$ be a play consistent with~$\tau$ and define~$\rho' = \ext(\rho)$.
A straightforward induction yields that~$\rho'$ is consistent with~$\tau'$.
Since~$\cost'(\tau') \geq f(b+1)$, we obtain~$\cost'(\rho') \geq f(b+1)$.
Then, we obtain~$\cost(\rho) \geq b+1$ due to Lemma~\ref{lem:cost-mapping-properties}.\ref{lem:cost-mapping-properties:3}.
Since we picked~$\rho$ arbitrarily from the plays consistent with~$\tau$, we directly obtain~$\cost(\tau) \geq b+1$.
Since~$b$ is a cap of~$\game$ and due to determinacy of~$\game$, this implies that there exists a strategy~$\tau''$ for Player~$1$ in~$\game$ such that~$\cost(\tau'') = \infty$.
\qed
\end{proof}

We proved Theorem~\ref{thm:capped-reduction} by constructing optimal strategies for Player~$0$ in~$\game$ from optimal strategies for her in~$\game'$.
These strategies use the set of all play prefixes of~$\game'$ as memory states and may thus be of infinite size.
If Player~$0$ can achieve a certain cost in~$\game'$ using a finite-state strategy, however, then she can achieve the corresponding cost in~$\game$ with a finite-state strategy as well.

\begin{theorem}
	\label{thm:reductions:strategy-lift}
	Let~$\game$ and~$\game'$ be quantitative games such that~$\game \reducesto^b_{\mem_1,f} \game'$ for some~$b$,~$\mem_1$, and~$f$ and let either $b' < b$ or $b' = b = \infty$.
	If Player~$i$ has a finite-state strategy~$\sigma'$ with~$\cost'(\sigma') = f(b')$ in~$\game'$ that is implemented by~$\mem_2$, then she has a finite-state strategy~$\sigma$ with~$\cost(\sigma) = b'$ in~$\game$ that is implemented by~$\mem_1 \times \mem_2$.
\end{theorem}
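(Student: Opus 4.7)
The plan is to mirror the construction used in Theorem~\ref{thm:capped-reduction}.\ref{thm:capped-reduction:exact}, but to replace the implicit ``full play prefix'' memory used there by the concrete finite memory structure $\mem_1 \times \mem_2$. The key observation is that everything needed to simulate $\sigma'$ on $\game$ is already finite: $\mem_1$ tracks the passage from $\arena$ to $\arena' = \arena \times \mem_1$, and $\mem_2$ implements $\sigma'$ on top of $\arena'$.

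Concretely, if $\sigma'$ is given by a next-move function $\nxt' \colon (V_i \times M_1) \times M_2 \to V \times M_1$, I would define a next-move function $\nxt$ for $\sigma$ on $\arena$ with memory $\mem_1 \times \mem_2$ by setting $\nxt(v, (m_1, m_2))$ to be the first component of $\nxt'((v, m_1), m_2)$. Together with the composed update function of $\mem_1 \times \mem_2$ from the preliminaries, this yields a finite-state strategy of the claimed size. I would then verify by a routine induction that a play $\rho$ in $\game$ is consistent with $\sigma$ iff $\ext_{\mem_1}(\rho)$ is consistent with $\sigma'$ in $\game'$, leaning on the identity $\update^+_{\mem_2}(\update^+_{\mem_1}(\pi)) = \update^+_{\mem_1 \times \mem_2}(\pi)$ already established in the preliminaries. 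This identity is what ensures that $\sigma$'s decision at $\pi$ agrees with $\sigma'$'s decision at $\ext_{\mem_1}(\pi)$.

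For the cost analysis I would proceed exactly as in Theorem~\ref{thm:capped-reduction}.\ref{thm:capped-reduction:exact}. For Player~$0$, every $\rho$ consistent with $\sigma$ gives $\cost'(\ext(\rho)) \leq f(b')$ via the above bijection, hence $\cost(\rho) \leq b'$ by parts~\ref{lem:cost-mapping-properties:1} and \ref{lem:cost-mapping-properties:2} of Lemma~\ref{lem:cost-mapping-properties} (which apply because $b' < b$), so $\cost(\sigma) \leq b'$. Conversely, strict monotonicity of the $b$-correction function $f$ on inputs up to $b$ forces $f(b') \in \nats$ for every $b' < b$, so the supremum $\cost'(\sigma') = f(b')$ is actually attained over the integer-valued costs of plays consistent with $\sigma'$; the bijection then produces a play of cost exactly $b'$ consistent with $\sigma$, yielding $\cost(\sigma) \geq b'$. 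The case $i = 1$ is dual, with infima instead of suprema.

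The main obstacle is essentially bookkeeping: convincing oneself that the composed memory $\mem_1 \times \mem_2$ genuinely simulates the ``full play prefix'' memory used in Theorem~\ref{thm:capped-reduction}. Once the composed-update identity from the preliminaries is invoked, the rest of the proof is a direct reuse of the cost-translation arguments already established there and in Lemma~\ref{lem:cost-mapping-properties}.
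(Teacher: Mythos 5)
Your proposal is correct and follows essentially the same route as the paper's proof: the same next-move function built from the first component of $\nxt'$, the same consistency correspondence between $\rho$ and $\ext_{\mem_1}(\rho)$ via the composed-update identity, and the same cost translation through Lemma~\ref{lem:cost-mapping-properties} together with the attained-supremum argument for the exact bound. No gaps.
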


\begin{proof}
Let~$\game = (\arena, \cost)$,~$\game' = (\arena', \cost')$, $\mem_1 = (M_1, m^1_\initmark, \update_1)$, and~$\mem_2 = (M_2, m^2_\initmark, \update_2)$ such that~$\sigma'$ is implemented by~$\mem_2$ with the next-move function~$\nxt'\colon (V \times M_1) \times M_2  \rightarrow (V \times M_1)$.
We define~$\nxt(v, (m_1, m_2)) = v^*$ if~$\nxt'((v, m_1), m_2) = (v^*, \update_1(m_1, v^*))$.
We moreover define~$\sigma$ as the strategy that is implemented by~$\mem_1 \times \mem_2$ with the next-move function~$\nxt$.

Let~$\rho = v_0v_1v_2\cdots$ be a play consistent with~$\sigma$, let
\[
	\ext_{\mem_1 \times \mem_2}(\rho) = (v_0, m_1^0, m_2^0)(v_1, m_1^1, m_2^1)(v_2, m_1^2, m_2^2)\cdots
\]
be its extension with respect to~$\mem_1 \times \mem_2$, and let~$j \in \nats$ be such that~$v_j \in V_i$.
We obtain $v_{j+1} = \sigma(v_0\cdots v_j) = \nxt(v_j, (m_1^j, m_2^j))$.
Due to the definition of $\nxt$, this implies $\nxt'((v_j, m_1^j), m_2^j) = (v_{j+1}, m_1^{j+1})$, where $m_1^{j+1} = \update_1(m_1^j, v_{j+1})$ due to the construction of~$\arena \times \mem_1$.
Hence,~$\ext_{\mem_1}(\rho)$ is consistent with~$\sigma'$, i.e.,~$\cost'(\ext_{\mem_1}(\rho)) \leq f(b')$, which in turn implies $\cost(\rho) \leq b'$ for~$i = 0$ due to Lemma~\ref{lem:cost-mapping-properties}.\ref{lem:cost-mapping-properties:1} and Lemma~\ref{lem:cost-mapping-properties}.\ref{lem:cost-mapping-properties:2}, and~$\cost'(\ext_{\mem_1}(\rho)) \geq f(b')$ and~$\cost(\rho) \geq b'$ for~$i = 1$.

Due to similar reasoning, for each play~$\ext_{\mem_1}(\rho)$ consistent with~$\sigma'$, the play~$\rho$ is consistent with~$\sigma$.
If~$i = 1$ or $b' < \infty$, this concludes the proof.
If, however,~$i = 0$ and~$b' = \infty$, then we furthermore obtain that~$b = \infty$ and that~$f$ is a strictly monotonic function with~$f(\infty) = \infty$.
Hence, if there exists a play~$\ext_{\mem_1}(\rho)$ consistent with~$\sigma'$ with~$\cost'(\ext_{\mem_1}(\rho)) = \infty$, then~$\cost(\rho) = \infty$ and, hence,~$\cost(\sigma) = \infty$.
If, however, the costs of the plays consistent with~$\sigma'$ diverges, then the cost of the plays consistent with~$\sigma$ diverges as well and we obtain~$\cost(\sigma) = \infty$.
\qed
\end{proof}

Theorem~\ref{thm:capped-reduction} and Theorem~\ref{thm:reductions:strategy-lift} show that quantitative reductions indeed exhibit properties analogous to those of qualitative reductions in the quantitative setting.
Recall that in addition to retaining winning plays and to allowing the implementation of finite-state strategies, qualitative reductions furthermore are transitive:
If~$\game$,~$\game'$, and~$\game''$ are qualitative games such that~$\game \reducesto \game'$, and such that~$\game' \reducesto \game''$, then~$\game \reducesto \game''$.
We now show that quantitative reductions are transitive as well.

\begin{theorem}
\label{thm:quantitative-reductions:transitivity}
Let~$\game_1,\game_2,\game_3$ be quantitative games such that $\game_1 \reducesto^{b_1}_{\mem_1,f_1} \game_2$ and $\game_2 \reducesto^{b_2}_{\mem_2,f_2} \game_3$ for some~$b_1, b_2 \in \nats_\infty$, some memory structures $\mem_1, \mem_2$, and some~$b_1$- and~$b_2$-correction functions $f_1$ and $f_2$, respectively.

Then, we have~$\game_1 \reducesto^{b}_{\mem, f} \game_3$, where~$\mem = \mem_1 \times \mem_2$, $f = f_2 \circ f_1$, and~$b = b_1$ if~$b_2 \geq f_1(b_1)$ and~$b = \max\set{b' \mid f_1(b') \leq b_2}$ otherwise.
\end{theorem}

\begin{proof}
	For each~$j \in \set{1,2,3}$, let~$\game_j = (\arena_j, \cost_j)$.
	Recall that in order to show $\game_1 \reducesto^{b}_{\mem, f} \game_3$ we have to show that
	\begin{itemize}
		\item $\arena_3 = \arena_1 \times \mem_1 \times \mem_2$,
		\item $f_2 \circ f_1$ is a $b$-reduction function,
		\item for all plays~$\rho$ of~$\arena_1$ with~$\cost_1(\rho) < b$ we have $\cost_3(\ext_{\mem_1 \times \mem_2}(\rho)) = (f_2 \circ f_1)(\cost_1(\rho))$, and
		\item for all plays~$\rho$ of~$\arena_1$ with~$\cost_1(\rho) \geq b$ we have $\cost_3(\ext_{\mem_1 \times \mem_2}(\rho)) \geq (f_2 \circ f_1)(b)$.
	\end{itemize}
	We show all these items individually.	
	
	Clearly, we have
\[
	\arena_3 = \arena_2 \times \mem_2 = \arena_1 \times \mem_1 \times \mem_2 \enspace .
\]

	We now show that~$f_2 \circ f_1$ is a $b$-correction function, for~$b$ defined as in the statement of the theorem.
	Recall that to this end we have to show that
	\begin{itemize}
		\item for all $x < x' < b$ we have $(f_2 \circ f_1)(x) < (f_2 \circ f_1)(x')$,
		\item for all $x < b$ we have $(f_2 \circ f_1)(x) < (f_2 \circ f_1)(b)$, and
		\item for all $x \geq b$ we have $(f_2 \circ f_1)(x) \geq (f_2 \circ f_1)(b)$.
	\end{itemize}
	Furthermore recall that we defined
	\[
	b = \begin{cases}
		b_1 & \text{if $b_2 \geq f_1(b_1)$ and} \\
		\max\set{b' \mid f_1(b') \leq b_2} & \text{otherwise} \enspace .
	\end{cases}
	\]
	We treat both cases of this definition separately.

	First, assume~$b_2 \geq f_1(b_1)$.
	In this case we have $b = b_1$.
	We show the three items of the definition of a $b$-correction function independently.
	
	\begin{itemize}
	\item%
		First, pick~$x$ and~$x'$ such that $x < x' < b = b_1$.
		We show $(f_2 \circ f_1)(x) < (f_2 \circ f_1)(x')$.
		Since~$f_1$ is a~$b_1$-correction function, we obtain~$f_1(x) < f_1(x')$, $f_1(x) < f_1(b_1)$, and $f_1(x') < f_1(b_1)$.
		Since, furthermore,~$f_2$ is a~$b_2$-correction function and as $f_1(b_1) \leq b_2$ by assumption, we moreover obtain $(f_2 \circ f_1)(x) < (f_2 \circ f_1)(x')$.

	\item%
		Now pick some~$x$ such that~$x < b = b_1$.
		We show $(f_2 \circ f_1)(x) < (f_2 \circ f_1)(b)$.
		Since $f_1$ is a $b_1$-correction function, we directly obtain~$f_1(x) < f_1(b_1)$.
		Since $f_1(b_1) \leq b_2$ by assumption and since~$f_2$ is a $b_2$-correction function, this directly yields $(f_2 \circ f_1)(x) < (f_2 \circ f_1)(b_1) = (f_2 \circ f_1)(b)$.
	
	\item%
		Finally, pick some~$x$ such that~$x \geq b = b_1$.
		Then $f_1(x) \geq f_1(b_1)$, since, again,~$f_1$ is a $b_1$-correction function.
		If $f_1(x) < b_2$, then $(f_2 \circ f_1)(x) \geq (f_2 \circ f_1)(b_1)$.
		If, however, $f_1(x) \geq b_2$, then $(f_2 \circ f_1)(x) \geq f_2(b_2) \geq (f_2 \circ f_1)(b_1)$, where the latter inequality follows from the assumption $b_2 \geq f_1(b_1)$.
		This concludes this part of the proof.
	\end{itemize}
	
	Now assume $b_2 < f_1(b_1)$ and let $b$ be maximal such that $f_1(b) \leq b_2$.
	We first argue that we have $b \leq b_1$ in this case.
	Towards a contradiction assume $ b_1 < b$.
	This implies $f_1(b_1) \leq f_1(b)$ due to~$f_1$ being a $b_1$-correction function.
	However, we have $f_1(b) \leq b_2 < f_1(b_1)$, where the former inequality results from the definition of~$b$, while the latter one is due to our initial assumption.
	This directly contradicts $f_1(b_1) \leq f_1(b)$, hence we obtain $b \leq b_1$.
	
	We again show that~$f_2 \circ f_1$ is a~$b$-correction function by showing the three items of the definition independently.
	
	\begin{itemize}
	\item%
		First, pick~$x$ and~$x'$ such that~$x < x' < b$.
		Since $b \leq b_1$ we obtain $f_1(x) < f_1(x') < f_1(b)$ due to $f_1$ being a $b_1$-correction function.
		As $f_1(b) \leq b_2$ due to the definition of~$b$, we directly obtain $(f_2 \circ f_1)(x) < (f_2 \circ f_1)(x')$.

	\item%
		Now, pick~$x$ such that $x < b$.
		We obtain $(f_2 \circ f_1)(x) < (f_2 \circ f_1)(b)$ via reasoning analogous to the previous case.

	\item%
		Finally, pick~$x$ such that $b \leq x$.
		We show $(f_2 \circ f_1)(b) \leq (f_2 \circ f_1)(x)$.
		To this end, we first observe that we have $f_1(b) \leq f_1(x)$ by leveraging different properties of~$f_1$ being a $b_1$-correction function, depending on whether we have $b \leq x < b_1$ or $b \leq b_1 \leq x$.
		The case $b_1 < b \leq x$ is excluded due to $b \leq b_1$ as argued above.
		We furthermore obtain $(f_2 \circ f_1)(b) \leq (f_2 \circ f_1)(x)$ by similar reasoning, using different properties of~$f_2$ being a $b_2$-correction function, depending on whether we have $f_1(b) \leq b_2 \leq f_1(x)$ or $f_1(b) \leq f_1(x) < b_2$.
		Again, the case $b_2 < f_1(b) \leq f_1(x)$ is excluded due to the definition of~$b$.
	\end{itemize}
	
	Thus, we have shown that $f_2 \circ f_1$ is indeed a $b$-correction function.
	It remains to prove the latter two conditions from the definition of $\game_1 \reducesto^{b}_{\mem, f} \game_3$.
	We first aim to show that for all plays $\rho$ of~$\arena_1$ with~$\cost_1(\rho) < b$ we have $\cost_3(\ext_{\mem_1 \times \mem_2}(\rho)) = (f_2 \circ f_1)(\cost_1(\rho))$.
	To this end, let~$\rho$ be a play of~$\arena_1$ with~$\cost_1(\rho) < b$.
	For the sake of consistency and readability, we define $\rho_1 = \rho$, $\rho_2 = \ext_{\mem_1}(\rho_1)$, and~$\rho_3 = \ext_{\mem_2}(\rho_2) = \ext_{\mem_1 \times \mem_2} (\rho_1)$.
	We again treat both cases of the definition of~$b$ separately.
	
	We again first consider the case that~$b_2 \geq f_1(b_1)$.
	We then directly obtain $b = b_1$ due to the definition of~$b$.
	This yields $\cost_1(\rho_1) < b_1$, which in turn implies $f_1(\cost_1(\rho_1)) = \cost_2(\rho_2)$ due to~$\game_1 \reducesto^{b_1}_{\mem_1, f_1} \game_2$.
	Furthermore, since $\cost_1(\rho_1) < b_1$ and since~$f_1$ is a~$b_1$-correction function we have $f_1(\cost_1(\rho_1)) < f_1(b_1)$,	which directly yields $\cost_2(\rho_2) < f_1(b_1)$ via the equation above.
	Since $f_1(b_1) \leq b_2$, we further obtain $\cost_2(\rho_2) < b_2$.
	Due to~$\game_2 \reducesto^{b_2}_{ \mem_2, f_2} \game_3$ this then implies $\cost_3(\rho_3) = f_2(\cost_2(\rho_2))$.
	By again applying $f_1(\cost_1(\rho_1)) = \cost_2(\rho_2)$, we obtain the desired result of $\cost_3(\rho_3) = (f_2 \circ f_1)(\cost_1(\rho_1))$.
	
	Now consider the case that $b_2 < f_1(b_1)$.
	In this case, we have $b = \max\set{b' \mid f_1(b') \leq b_2}$ by definition of~$b$.
	As argued above, we have $b \leq b_1$ in this case, which directly implies $\cost_1(\rho_1) < b_1$.
	Due to $\game_1 \reducesto^{b_1}_{\mem_1, f_1} \game_2$ this yields $f_1(\cost_1(\rho_1)) = \cost_2(\rho_2)$.
	Moreover, we again have $f_1(b) \leq b_2$ by definition of~$b$.
	Since $\game_1 \reducesto^{b_1}_{\mem_1, f_1} \game_2$ and since $\cost_1(\rho_1) < b$, the latter due to our choice of~$\rho_1$, we also obtain $f_1(\cost_1(\rho_1)) < f_1(b)$.
	Moreover, we have $f_1(b) \leq b_2$ due to the definition of~$b$.
	Hence, we obtain $f_1(\cost_1(\rho_1)) < b_2$.
	By applying~$\game_2 \reducesto^{b_2}_{\mem_2, f_2} \game_3$ analogously to the previous case, we obtain the desired result of $\cost_3(\rho_3) = (f_2 \circ f_1)(\cost_1(\rho_1))$.
	
	It remains to show the final item in the definition of $b$-reducibility, i.e., that we indeed have $\cost_3(\ext_{\mem_1 \times \mem_2}(\rho)) \geq (f_2 \circ f_1)(b)$ for all plays~$\rho$ of~$\arena_1$ with~$\cost_1(\rho) \geq b$.
	To this end, let~$\rho$ be a play of~$\arena_1$ with~$\cost_1(\rho) \geq b$.
	For the sake of brevity and consistency we again define $\rho_1 = \rho$, $\rho_2 = \ext_{\mem_1}(\rho)$, and $\rho_3 = \ext_{\mem_2}(\rho_2) = \ext_{\mem_1 \times \mem_2}(\rho_1)$.
	We again treat the two cases from the definition of~$b$ separately.
	
	First, assume~$b_2 \geq f_1(b_1)$.
	We directly obtain~$b = b_1$ by definition of~$b$, which in turn implies~$\cost_2(\rho_2) \geq f_1(b_1)$ due to $\game_1 \reducesto^{b_1}_{\mem_1, f_1} \game_2$.
	If $b_2 > \cost_2(\rho_2)$, then we directly obtain $f_2(\cost_2(\rho_2)) \geq (f_2 \circ f_1)(b_1)$ since~$f_2$ is a $b_2$-correction function.
	Since we have $\game_2 \reducesto^{b_2}_{\mem_2, f_2} \game_3$, we moreover obtain $f_2(\cost_2(\rho_2)) =  \cost_3(\rho_3)$, which in turn yields $\cost_3(\rho_3) \geq (f_2 \circ f_1) (b_1) = (f_2 \circ f_1) (b)$.
	If, however, $\cost_2(\rho_2) \geq b_2$, then we obtain $\cost_3(\rho_3) \geq f_2(b_2) \geq (f_2 \circ f_1)(b_1) = (f_2 \circ f_1)(b)$, where the former inequality is implied by~$\game_2 \reducesto^{b_2}_{\mem_2, f_2} \game_3$ while we obtain the latter due to $f_2$ being a $b_2$-correction function and due to $b_2 \geq f_1(b_1)$.
	
	Now consider the case $b_2 < f_1(b_1)$.
	Here, we again distinguish two sub-cases.
	If $\cost_1(\rho_1) \geq b_1$, then we obtain $\cost_2(\rho_2) \geq f_1(b_1) > b_2 \geq f_1(b)$.
	The three inequalities result from $\game_1 \reducesto^{b_1}_{\mem_1, f_1} \game_2$, from the assumption above, and from the definition of~$b$, respectively.
	We furthermore obtain $\cost_3(\rho_3) \geq f_2(b_2) \geq (f_2 \circ f_1)(b)$, where the former inequality is due to $\game_2 \reducesto^{b_2}_{\mem_2, f_2} \game_3$ and $\cost_2(\rho_2) > b_2$, while the latter one results from~$f_2$ being a $b_2$-correction function and from $b_2 \geq f_1(b)$.
	
	It remains to consider the case $b_1 > \cost_1(\rho_1)$.
	In this case, we directly obtain $\cost_2(\rho_2) = f_1(\cost_1(\rho_1)) \geq f_1(b)$.
	The former equality is due to $\game_1 \reducesto^{b_1}_{\mem_1, f_1} \game_2$, while the latter results from the assumption $\cost_1(\rho_1) \geq b$ and the fact that~$f_1$ is a $b_1$-correction function.
	We again distinguish two cases based on the relation of~$b_2$ to~$f_1(b)$ and~$\cost_2(\rho_2)$: Either we have $\cost_2(\rho_2) \geq b_2 \geq f_1(b)$, or we have $b_2> \cost_2(\rho_2) \geq f_1(b)$.
	In the former case, we obtain $\cost_3(\rho_3) \geq f_2(b_2) \geq (f_2 \circ f_1)(b)$, where the former and latter inequality are due to $\game_2 \reducesto^{b_2}_{\mem_2, f_2} \game_3$ and due to~$f_2$ being a $b_2$-reduction function, respectively.
	In the latter case, we similarly obtain $f_2(\cost_2(\rho_2)) = \cost_3(\rho_3) \geq (f_2 \circ f_1)(b)$, where the former equality again results from $\game_2 \reducesto^{b_2}_{\game_2, f_2} \game_3$, while the latter inequality is again due to $f_2$ being a $b_2$-correction function.
\qed
\end{proof}

By using quantitative reductions we are able to structure the space of quantitative games similarly to that of qualitative games.
While for qualitative games, there exist direct solutions to a number of well-studied winning conditions, for quantitative games, no such direct solutions for the threshold problems exist to the best of our knowledge.
Instead, the threshold problem for quantitative games is usually solved by reducing the quantitative game to a qualitative game for a fixed bound~$b$.

Hence, there does not yet exist a ``foundation'' of the space of quantitative winning conditions analogous to that of the space of qualitative winning conditions, i.e., there is no canonical simple class of quantitative games that provides a natural target for quantitative reductions.
In the following section, we provide such a foundation in the form of vertex-ranked games.

\section{Vertex-Ranked Games}
\label{sec:vertex-ranked-games}

\newcommand{\suprankgame}{{\textsc{Rank}^{\sup}}}
\newcommand{\limsuprankgame}{{\textsc{Rank}^{\lim}}}
\newcommand{\rankgame}{{\textsc{Rank}^{X}}}

We introduce two very simple kinds of quantitative games, which we call vertex-ranked games.
In such games, the cost of a play is determined solely by a qualitative winning condition and a ranking of the vertices of the arena by natural numbers.
We show that solving the threshold problem for either kind of game is possible with only a polynomial overhead over solving the underlying qualitative game.

Furthermore, we show that the memory structures implementing winning strategies for either player only incur a polynomial overhead in comparison to the memory structures implementing winning strategies for the underlying conditions.
Finally, we briefly discuss the optimization problem for such games, i.e., the problem of determining the minimal~$b$ such that Player~$0$ has a strategy of cost at most~$b$ in such a game.
We argue that determining such~$b$ incurs only a polynomial overhead over solving the underlying qualitative game.

Let~$\arena$ be an arena with vertex set~$V$, let~$\wincond \subseteq (V')^\omega$ be a qualitative winning condition, and let~$\rank\colon V \rightarrow \nats$ be a ranking function on vertices.
We define the quantitative vertex-ranked~$\sup$-condition
\[
	\suprankgame(\wincond, \rank)\colon v_0v_1v_2 \cdots \mapsto \begin{cases}
	 	\sup_{j \rightarrow \infty} \rank(v_j) & \text{if } v_0v_1v_2\cdots \in \wincond \text{ and} \\
	 	\infty & \text{otherwise} \enspace , \end{cases}
\]
as well as its prefix-independent version, the vertex-ranked~$\limsup$-condition
\[
	\limsuprankgame(\wincond, \rank)\colon v_0v_1v_2\cdots \mapsto \begin{cases}
	 	\limsup_{j \rightarrow \infty} \rank(v_j) & \text{if } v_0v_1v_2\cdots \in \wincond \text{ and} \\
	 	\infty & \text{otherwise} \enspace . \end{cases}
\]

A vertex-ranked $\sup$- or~$\limsup$-game~$\game = (\arena, \rankgame(\wincond, \rank))$ with $X \in \set{\sup,\lim}$ consists of an arena~$\arena$ with vertex set~$V$, a qualitative winning condition~$\wincond$, and a vertex-ranking function~$\rank\colon V \rightarrow \nats$.

If~$\game_X = (\arena, \rankgame(\wincond, \rank))$ is a vertex-ranked~$\sup$- or~$\limsup$-game, we call the game~$(\arena, \wincond)$ the qualitative game corresponding to~$\game_X$.
Moreover, if~$\gamesup$ is a vertex-ranked~$\sup$-game, we denote the vertex-ranked~$\limsup$-game with the same arena, winning condition, and rank function by~$\gamelim$ and vice versa.
In either case, we denote the corresponding qualitative game by~$\game$.

Before showing how to solve vertex-ranked~$\sup$- and~$\limsup$-games, we argue that these games allow straightforward adaptations of qualitative winning conditions to quantitative ones.
This is witnessed by the following theorem.

\begin{theorem}
\label{thm:vertex-ranked-games:generalization:sup}
Let~$\game = (\arena, \wincond)$ and~$\game' = (\arena', \wincond')$ be qualitative games and let~$\mem$ be a memory structure such that~$\game \reducesto_\mem \game'$.
Moreover, let~$\rank$ be a ranking function on vertices of~$\game$ and let~$b$ be the maximal rank assigned to a vertex of~$\game$ by~$\rank$.
Then,~$\gamesup \reducesto^{b+1}_{\mem} \gamesup'$, where~$\gamesup = (\arena, \suprankgame(\wincond, \rank))$ and~$\gamesup' = (\arena', \suprankgame(\wincond', \rank'))$ with~$\rank'(v, m) = \rank(v)$.
\end{theorem}

\begin{proof}
Clearly, we have~$\arena' = \arena \times \mem$ due to~$\game \reducesto \game'$.
Moreover, as argued above,~$\capfunc_{b+1}$ is a~$b+1$-correction function.
Hence, it remains to show that the latter two conditions in the definition of a quantitative reduction are satisfied.

To this end, let~$\rho$ be a play in~$\arena$.
If the cost of~$\rho$ is less than~$b+1$, then we obtain $\rho \in \wincond$ by definition of~$b$ and by definition of the vertex-ranked~$\sup$-condition.
This, in turn, implies~$\ext(\rho) \in \wincond'$ due to~$\game \reducesto \game'$.
Hence, we obtain
\[
	\capfunc_{b+1}(\suprankgame(\wincond, \rank)(\rho)) = \suprankgame(\wincond, \rank)(\rho) = \suprankgame(\wincond', \rank')(\ext(\rho)) \enspace .
\]

If, however, the cost of~$\rho$ is at least~$b+1$, then we have~$\rho \notin \wincond$, again due to the definition of~$b$, which implies both~$\suprankgame(\wincond, \rank)(\rho) = \infty$ as well as~$\ext(\rho) \notin \wincond$ due to~$\game \reducesto \game'$.
Hence, we obtain
\[
	\suprankgame(\wincond', \rank')(\ext(\rho)) = \infty \geq \capfunc_{b+1}(b+1) = b+1 \enspace ,
\]
which concludes the proof of the fourth condition of quantitative reductions.
\qed
\end{proof}

Clearly, the above proof can be adapted in a very straightforward way to show the analogous result for the case of vertex-ranked~$\limsup$-games.

\begin{remark}
\label{rem:vertex-ranked-games:generalization:lim}
Let~$\game = (\arena, \wincond)$ and~$\game' = (\arena', \wincond')$ be qualitative games and let~$\mem$ be a memory structure such that~$\game \reducesto_\mem \game'$.
Moreover, let~$\rank$ be a ranking function on vertices of~$\game$ and let~$b$ be the maximal rank assigned to a vertex of~$\game$ by~$\rank$.
Then,~$\gamelim \reducesto^{b+1}_{\mem} \gamelim'$, where~$\gamelim = (\arena, \limsuprankgame(\wincond, \rank))$ and~$\gamelim' = (\arena', \limsuprankgame(\wincond', \rank'))$ with~$\rank'(v, m) = \rank(v)$.
\end{remark}

The remainder of this section is dedicated to providing bounds on the complexity of solving vertex-ranked games with respect to some given bound.
In particular, we show that vertex-ranked~$\sup$-games can be solved with only an additive linear blowup compared to the complexity of solving the corresponding qualitative games.
Vertex-ranked~$\limsup$-games, on the other hand, can be solved while incurring only a polynomial blowup compared to solving the corresponding qualitative games.

\subsection{Solving Vertex-Ranked~{$\sup$}-Games}
\label{sec:vertex-ranked-games:sup}

We begin by observing that solving vertex-ranked $\sup$-games is at least as hard as solving the underlying qualitative games, since the former subsumes the latter.
This is due to the fact that Player~$0$ has a winning strategy in~$(\arena, \wincond)$ if and only if she has a strategy with cost at most zero in~$(\arena, \suprankgame(\wincond, \rank))$, where~$\rank$ is the constant function assigning zero to every vertex.

\newcommand{\gamefam}{\mathfrak{G}}
\newcommand{\ver}{\text{Ver}}
\newcommand{\cpre}{\textsc{CPre}}
\newcommand{\supgames}[1]{{#1}^\textsc{rnk}_{\sup}}
\newcommand{\limsupgames}[1]{{#1}^\textsc{rnk}_{\lim}}

We now turn our attention to finding an upper bound for the complexity of the threshold problem for vertex-ranked~$\sup$-games.
To achieve a general treatment of such games, we first introduce some notation.
Let~$\gamefam$ be a class of qualitative games.
We define the extension of~$\gamefam$ to vertex-ranked~$\sup$-games as
\[
	\supgames{\gamefam} = \{ (\arena, \suprankgame(\wincond, \rank)) \mid (\arena, \wincond) \in \gamefam, \rank \text{ is vertex-ranking function for } \arena \} \enspace .
\]

We first show that we can use a decision procedure solving games from~$\gamefam$ to solve games from~$\supgames{\gamefam}$ with respect to a given~$b$.
To this end, we remove all vertices from which Player~$1$ can enforce a visit to a vertex of rank greater than~$b$ and proclaim that Player~$0$ wins the quantitative game with respect to~$b$ if and only if she wins the qualitative game corresponding to the resulting quantitative game.
To ensure that we are able to solve the resulting qualitative game, we assume some closure properties of~$\gamefam$.
To this end, we first introduce some notation.

Let~$\game$ be a qualitative or quantitative game with vertex set~$V$.
For each~$v \in V$, we write~$\game_v$ to denote the game~$\game$ with its initial vertex replaced by~$v$.
All other components, i.e., the structure of the arena and the cost-function, remain unchanged.
Let~$\arena = (V, V_0, V_1, E, v_\initmark)$ and~$\arena' = (V', V'_0, V'_1, E', v'_\initmark)$ be arenas.
We say that~$\arena'$ is a \emph{sub-arena} of~$\arena$ if $V' \subseteq V$, $V'_0 \subseteq V_0$, $V'_1 \subseteq V_1$, $E' \subseteq E$, and $v_\initmark = v'_\initmark$ and write~$\arena' \subarenaof \arena$ in this case.

We call a class of qualitative (or quantitative) games~$\gamefam$ \emph{proper} if
\begin{itemize}
	\item for each~$(\arena, \wincond)$ (or $(\arena, \cost)$) in $\gamefam$ and each sub-arena~$\arena' \subarenaof \arena$ the game~$(\arena', \wincond')$ (or $(\arena', \cost')$), where~$\wincond'$ (or~$\cost'$) is the restriction of~$\wincond$ (or~$\cost$) to plays from~$\arena'$, is a member of~$\gamefam$ as well, if
	\item for each game~$\game \in \gamefam$ and each vertex~$v$ of~$\game$ we have $\game_v \in \gamefam$, if
	\item all games in $\gamefam$ are determined, and if
	\item all~$\game \in \gamefam$ are finitely representable.
\end{itemize}
Intuitively, the first condition ensures that games obtained by removing vertices or edges from games in~$\gamefam$ are members of~$\gamefam$ as well, whereas the latter three conditions are very weak technical requirements.
In particular the requirement that all games included in the class must be finitely representable serves mainly to enable us to talk about the size of a game.

Using this notion of proper classes of games, we are now able to formulate the main result of this section regarding vertex-ranked~$\sup$-games.

\begin{theorem}
\label{thm:vertex-ranked-games:direct:complexity}
Let~$\gamefam$ be a proper class of qualitative games~$\game$ that can be solved in time~$t(\card{\game})$ and space~$s(\card{\game})$, where~$t$ and~$s$ are monotonic functions.

Then, the following problem can be solved in time~$\bigo(n) + t(\card{\game})$ and space $\bigo(n) + s(\card{\game})$:
\myquot{Given some game~$\gamesup \in \supgames{\gamefam}$ with~$n$ vertices and some bound~$b \in \nats$, does Player~$0$ win~$\gamesup$ with respect to~$b$?}
\end{theorem}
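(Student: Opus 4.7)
The plan is to execute the attractor-based reduction sketched immediately before the theorem statement. Given $\gamesup = (\arena, \suprankgame(\wincond, \rank))$ with $\arena = (V, V_0, V_1, E, \vinit)$ and a bound $b$, the algorithm I would present first computes $V_{>b} = \set{v \in V \mid \rank(v) > b}$ and the $1$-attractor $A = \att{1}{}{V_{>b}}$. If $\vinit \in A$, it reports that Player $1$ wins with respect to $b$. Otherwise, it invokes the given decision procedure for $\gamefam$ on the qualitative game $\game' = (\arena', \wincond')$, where $\arena' \subarenaof \arena$ is induced by $V \setminus A$ and $\wincond'$ is the restriction of $\wincond$.

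First I would verify that $\arena'$ is a valid arena. By the definition of the attractor, a Player $1$ vertex $v \notin A$ has all its successors outside $A$ (any successor in $A$ would pull $v$ into $A$), while a Player $0$ vertex $v \notin A$ has at least one successor outside $A$ (otherwise all successors lie in $A$, forcing $v$ into $A$). Hence every vertex of $\arena'$ retains an outgoing edge inside $\arena'$, and properness of $\gamefam$ then yields $\game' \in \gamefam$. I would then establish the equivalence ``Player $0$ wins $\gamesup$ with respect to $b$ if and only if $\vinit \notin A$ and Player $0$ wins $\game'$.'' The backward direction is direct: given a winning strategy $\sigma'$ for Player $0$ in $\game'$, extend it arbitrarily on histories in $\arena$ that leave $V \setminus A$; by the structural property just recorded, every play from $\vinit$ consistent with this extension stays in $V \setminus A$, satisfies $\wincond$, and visits only vertices of rank at most $b$, so it has cost at most $b$.

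The forward direction is the main point requiring care. Suppose $\sigma$ is a Player $0$ strategy with $\cost(\sigma) \leq b$. I would show that no play consistent with $\sigma$ ever enters $A$, and in particular that $\vinit \notin A$. If some consistent play $\rho$ first reached $A$ at position $j$, then the Player $1$ strategy that mimics $\rho$'s Player $1$ moves on shorter histories and thereafter plays the positional attractor strategy towards $V_{>b}$ would produce a play that coincides with $\rho$ through position $j$ and from there onwards stays in $A$, since Player $1$'s attractor moves remain inside $A$ and every $V_0$-vertex in $A$ has all successors in $A$, so Player $0$ cannot escape. The attractor guarantees this play reaches $V_{>b}$ within finitely many further steps, hence has cost exceeding $b$, contradicting $\cost(\sigma) \leq b$. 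Consequently, the restriction of $\sigma$ to histories in $(V \setminus A)^*$ is well-defined and is a winning strategy in $\game'$. The complexity bound then follows by summing the $\bigo(n)$ time and space for computing $V_{>b}$ and $A$ with the $t(\card{\game'}) \leq t(\card{\game})$ time and $s(\card{\game'}) \leq s(\card{\game})$ space of the decision procedure, invoking monotonicity of $t$ and $s$.
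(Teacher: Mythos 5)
Your proposal is correct and follows essentially the same route as the paper: compute the Player-$1$ attractor of the vertices of rank greater than $b$, reject if the initial vertex lies in it, and otherwise solve the qualitative game on the complement, with the same two-directional correctness argument via the attractor strategy and the same complexity accounting. The only addition is your explicit check that the sub-arena has no terminal vertices, which the paper leaves implicit but which is a welcome detail.
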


Intuitively, in order to prove Theorem~\ref{thm:vertex-ranked-games:direct:complexity}, we show that Player~$0$ wins~$\gamesup \in \supgames{\gamefam}$ with respect to some bound~$b$ if and only if
\begin{itemize}
	\item Player~$1$ cannot enforce a visit to vertices of rank greater than~$b$ from~$v_\initmark$, and if
	\item she is able to win the game~$\gamesup$ without visiting any vertices from which Player~$1$ is able to enforce a visit to a vertex of rank greater than~$b$.
\end{itemize}

We formalize the idea of removing vertices from which one player can enforce a visit to some set of vertices by first recalling the attractor construction.
Let~$\arena = (V, V_0, V_1, E, v_\initmark)$ be an arena with~$n$ vertices and let~$X \subseteq V$.
We define~$\att{i}{}{X} = \att{i}{n}{X}$ inductively with~$\att{i}{0}{X} = X$ and
\begin{multline*}
	\att{i}{j}{X} = \set{ v \in V_i \mid \exists v' \in \att{i}{j-1}{X}.\, (v,v') \in E} \, \cup \\
		\set{ v \in V_{1-i} \mid \forall (v, v') \in E.\, v' \in \att{i}{j-1}{X}} \cup \att{i}{j-1}{X} \enspace.
\end{multline*}
Intuitively, the $i$-attractor $\att{i}{}{X}$ is the set of all vertices from which Player~$i$ can enforce a visit to~$X$.
The set~$\att{i}{}{X}$ can be computed in linear time in~$\card{E}$ and Player~$i$ has a positional strategy~$\sigma$ such that each play starting in some vertex in $\att{i}{}{X}$ and consistent with~$\sigma$ eventually encounters some vertex from~$X$~\cite{NerodeRemmelYakhnis96}.
We call~$\sigma$ an attractor strategy towards~$X$.

We furthermore formalize the notion of removing attractors from arenas:
Let~$\arena$ be an arena with vertex set~$V$, let~$X \subseteq V$, and let $A = \att{i}{}{X}$.
If~$v_\initmark \notin A$, then we define
\[
	\arena \setminus A = (V \setminus A, V_0 \setminus A, V_1 \setminus A, \set{(v, v') \in E \mid v \notin A \text{ and } v' \notin A}, v_\initmark) \enspace ,
\]
which is again an arena.
We lift this notation to qualitative (and quantitative) games~$\game = (\arena, \wincond)$ (or $(\arena, \cost)$) by defining~$\game \setminus A = (\arena \setminus A, \wincond \cap (V \setminus A)^\omega)$ (or $(\arena \setminus A, \left.\cost\right|_{(V \setminus A)^\omega})$, where $\left.\cost\right|_{(V \setminus A)^\omega}$ denotes the restriction of~$\cost$ to the domain $(V \setminus A)^\omega$).
This restriction of the winning condition and the cost fuction to vertices of~$V$ is not strictly necessary due to our definition of~$\wincond \subseteq (V')^\omega$ and of~$\cost \colon (V')^\omega \rightarrow \nats_\infty$, but it makes the resulting objects easier to reason about.

If~$v_\initmark \in A$, however, then both~$\arena \setminus A$ and $\game \setminus A$ are undefined.
The game $\game \setminus A$ can be constructed in linear time and is of size at most~$\card{\game}$.

As a first step towards the proof of Theorem~\ref{thm:vertex-ranked-games:direct:complexity}, we show that vertex-ranked $\sup$-games can be solved by using a single attractor construction and considering the qualitative game obtained by removing the resulting attractor.

\begin{lemma}
\label{lem:vertex-ranked-games:direct:reduction}
Let~$\gamefam$ be a proper class of qualitative games, let $\gamesup = (\arena, \suprankgame(\wincond, \rank)) \in \supgames{\gamefam}$ with vertex set~$V$ and initial vertex~$v_\initmark$, and let~$b \in \nats$.

Player~$0$ has a strategy with cost at most~$b$ in~$\gamesup$ if and only if~$v_\initmark \notin A$ and if she has a winning strategy in the qualitative game~$\game' = \game \setminus A$, where $\game = (\arena, \wincond)$ and $A = \att{1}{}{\set{v \in V \mid \rank(v) > b}}$.
\end{lemma}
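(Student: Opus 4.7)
The plan is to prove both directions by exploiting the basic trap-like properties of the complement of the $1$-attractor. First I verify that the sub-arena $\arena \setminus A$ is deadlock-free: a vertex $v \in V_0 \setminus A$ must have at least one successor outside $A$ (otherwise all its successors would lie in $A$, putting $v$ itself into $A$), and every $v \in V_1 \setminus A$ has all its successors outside $A$ (otherwise $v$ would have a successor in $A$ and hence itself lie in $A$). Thus $V \setminus A$ is a trap for Player~$1$, the sub-arena is a proper arena whenever $v_\initmark \notin A$, and by properness of $\gamefam$ we may speak of $(\arena, \wincond) \setminus A$ as an element of $\gamefam$.

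For the forward direction, suppose Player~$0$ has a strategy $\sigma$ in $\gamesup$ with $\cost(\sigma) \leq b$. I claim that no play prefix consistent with $\sigma$ ever visits $A$. Otherwise, letting $\tau$ be Player~$1$'s positional attractor strategy towards $\set{v \mid \rank(v) > b}$, any extension of such a prefix according to $\sigma$ and $\tau$ would be a play consistent with $\sigma$ that visits a vertex of rank greater than $b$, giving $\cost(\sigma) > b$ and contradicting the hypothesis. In particular $v_\initmark \notin A$, and restricting $\sigma$ to play prefixes inside $V \setminus A$ yields a well-defined strategy $\sigma'$ on $\arena \setminus A$. Every play $\rho$ consistent with $\sigma'$ is a play consistent with $\sigma$ that stays in $V \setminus A$, so $\cost(\rho) \leq b < \infty$, which forces $\rho \in \wincond$; thus $\sigma'$ is a qualitative winning strategy in $(\arena, \wincond) \setminus A$.

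For the converse direction, suppose $v_\initmark \notin A$ and Player~$0$ has a winning strategy $\sigma'$ in $(\arena, \wincond) \setminus A$. I extend $\sigma'$ to a strategy $\sigma$ on all of $\arena$ by defining it arbitrarily on play prefixes that leave $V \setminus A$. The trap property ensures that every play $\rho$ starting in $v_\initmark$ and consistent with $\sigma$ in fact remains entirely in $V \setminus A$: Player~$0$'s moves at vertices of $V_0 \setminus A$ follow $\sigma'$ and thus land in $V \setminus A$, while Player~$1$'s moves at vertices of $V_1 \setminus A$ have no choice but to stay in $V \setminus A$. Hence $\rho$ is a play consistent with $\sigma'$ in $(\arena, \wincond) \setminus A$, giving $\rho \in \wincond$; and since $A \supseteq \set{v \mid \rank(v) > b}$, every vertex on $\rho$ has rank at most $b$. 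Therefore $\cost(\rho) = \sup_j \rank(v_j) \leq b$, and so $\cost(\sigma) \leq b$.

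There is no serious obstacle in this argument; it is essentially careful bookkeeping, with two main ingredients: (i) the trap property of $V \setminus A$, which makes restriction and extension of strategies between $\arena$ and $\arena \setminus A$ unambiguous, and (ii) the existence of a positional attractor strategy for Player~$1$, which converts any hypothetical visit to $A$ under $\sigma$ into a certificate of $\cost(\sigma) > b$. Properness of $\gamefam$ plays no role in the correctness argument itself; it is invoked only to ensure that $(\arena, \wincond) \setminus A$ is a legitimate game in $\gamefam$ on which we may later run a decision procedure, as needed for Theorem~\ref{thm:vertex-ranked-games:direct:complexity}.
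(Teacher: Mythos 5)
Your proof is correct and follows essentially the same route as the paper's: both directions rest on Player~$1$'s positional attractor strategy towards the vertices of rank greater than~$b$ (to show that any strategy of cost at most~$b$ must avoid~$A$) and on the trap property of~$V \setminus A$ (to transfer a winning strategy of the restricted qualitative game back to~$\gamesup$ with cost at most~$b$). The only difference is presentational — you prove the forward implication directly where the paper argues by contradiction in two cases — and your explicit verification of deadlock-freedom of the sub-arena is a welcome detail the paper leaves implicit.
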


\begin{proof}
	Let~$X_b = \set{v \in V \mid \rank(v) > b}$.
	We first show the direction from right to left, i.e., that, if~$v_\initmark \notin A$ and if Player~$0$ wins~$\game'$, say with strategy~$\sigma'$, then she has a strategy of cost at most~$b$ in~$\gamesup$.
	To this end, define~$\arena' = \arena \setminus A$.
	Since~$\arena' \subarenaof \arena$, the strategy~$\sigma'$ is a strategy for Player~$0$ in~$\gamesup$ as well, due to Player~$0$ being able to keep the play inside~$\arena'$ using~$\sigma'$.
	Hence, each play consistent with~$\sigma'$ in~$\game'$ is consistent with~$\sigma'$ in~$\game$ as well as vice versa.
	Let~$\rho$ be a play in~$\gamesup$ consistent with~$\sigma'$.
	Since~$\sigma'$ is winning for Player~$0$ in~$\game'$, we have~$\rho \in \wincond \cap (V \setminus A)^\omega \subseteq \wincond$.
	Moreover, since~$X_b \subseteq A$, and as~$\rho$ visits only vertices occurring in~$\game'$, we obtain $\suprankgame(\wincond, \rank)(\rho) \leq b$ and thus $\cost(\sigma') \leq b$, which concludes this direction of the proof.
	
	We show the other direction via contraposition:
	To this end, first assume~$v_\initmark \in A$ and let~$\tau_A$ be an attractor strategy towards~$X_b$ for Player~$1$.
	We show that Player~$0$ does not have a strategy with cost at most~$b$ in~$\gamesup$ by showing that~$\tau_A$ has cost exceeding~$b$.
	We obtain~$\cost(\tau_A) > b$ in~$\gamesup$:
	By playing consistently with~$\tau_A$, Player~$1$ forces the play to eventually reach a vertex in~$X_b$, i.e., a vertex~$v$ with~$\rank(v) > b$.
	Thus,~$\cost(\tau_A) > b$, i.e.,~$\cost(\sigma) > b$ for all strategies~$\sigma$ of Player~$0$.
	
	Now assume that Player~$0$ does not have a winning strategy in~$\game'$.
	Towards a contradiction, assume that she has a strategy~$\sigma$ with cost at most~$b$ in~$\gamesup$.
	We first observe that no play consistent with~$\sigma$ visits any vertex from~$A$.
	Otherwise, playing consistently with his attractor strategy towards~$X_b$ from the first visit to~$A$, Player~$1$ would be able to construct a play consistent with~$\sigma$, but with cost greater than~$b$.
	Thus,~$\sigma$ is a strategy for Player~$0$ in~$\game'$ and we obtain that all plays consistent with~$\sigma$ in~$\arena$ are consistent with~$\sigma$ in~$\arena'$ and vice versa.
	Since~$\cost(\sigma) \leq b$, we obtain~$\suprankgame(\wincond, \rank)(\rho) < \infty$, i.e.,~$\rho \in \wincond$ for all plays~$\rho$ consistent with~$\sigma$.
	Thus,~$\sigma$ is a winning strategy for Player~$0$ in~$\game'$, a contradiction.
\qed
\end{proof}

Using this lemma, we are able to construct a decision procedure solving games from~$\supgames{\gamefam}$ using a decision procedure solving games from~$\gamefam$.

\begin{proof}[Proof of Theorem \ref{thm:vertex-ranked-games:direct:complexity}]
	Since~$\gamefam$ is proper,~$\supgames{\gamefam}$ is proper as well.
	Given the vertex-ranked~$\sup$-game~$\gamesup = (\arena, \suprankgame(\wincond, \rank))$, let~$X_b = \set{v \in V \mid \rank(v) > b}$ and let~$A = \att{1}{}{X_b}$.
	We define the decision procedure~$\decsup$ deciding the given problem  such that it returns false if~$v_\initmark \in A$.
	Otherwise,~$\decsup$ returns true if and only if Player~$0$ wins~$\game \setminus A$.
	Since~$\supgames{\gamefam}$ is proper and due to the assumption of the theorem, $\game \setminus A$ can be solved in time at most $t(\card{\game})$ and space at most $s(\card{\game})$.
	The procedure~$\decsup$ indeed decides the given decision problem due to Lemma~\ref{lem:vertex-ranked-games:direct:reduction}.
	
	Since we can compute and remove the Player-$1$-attractor~$A$ in linear time in~$\card{\arena}$~\cite{NerodeRemmelYakhnis96}, the decision procedure~$\decsup$ indeed requires time~$\bigo(\card{\arena}) + t(\card{\game})$ and space~$\bigo(\card{\arena}) + s(\card{\game})$.
\qed
\end{proof}

This theorem provides an upper bound on the complexity of solving vertex-ranked~$\sup$-games.
Intuitively, we prove Theorem~\ref{thm:vertex-ranked-games:direct:complexity} by showing that, for any vertex-ranked~$\sup$-game~$\gamesup$, a winning strategy for Player~$0$ in~$\game$ that never moves to the Player~$1$-attractor towards vertices of rank greater than~$b$ has cost at most~$b$.
Thus, an upper bound on the size of winning strategies for Player~$0$ for games from~$\gamefam$ provides an upper bound for strategies of finite cost in~$\supgames{\gamefam}$ as well.
Moreover, if the decision procedure deciding~$\gamefam$ constructs winning strategies for one or both players, we can adapt the decision procedure deciding~$\supgames{\gamefam}$ to construct strategies of cost at most (greater than)~$b$ for Player~$0$ (Player~$1$) as well.

\begin{corollary}
	\label{cor:vertex-ranked-games:sup:memory}
	Let~$\gamefam_{\sup}$ be a proper class of vertex-ranked~$\sup$-games and let~$\gamesup \in \gamefam_{\sup}$.
	If $\sigma$ is a finite-state winning strategy for Player~$i$ in~$\game$, then Player~$i$ has a finite-state winning strategy~$\sigma_{\sup}$ in~$\gamesup$ with~$\card{\sigma_{\sup}} \in \bigo(\card{\sigma})$.
	Furthermore, if~$\sigma$ is effectively constructible, then~$\sigma_{\sup}$ is effectively constructible.
\end{corollary}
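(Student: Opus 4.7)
The plan is to take $\sigma_{\sup} := \sigma$ with the very same memory structure $\mem$ that already implements $\sigma$ in $\game$, and verify that the reused strategy meets the winning requirement in $\gamesup$ for either player. The key observation I would isolate first is that, because the vertex set $V$ of the arena is finite, $M := \max_{v \in V} \rank(v)$ is a fixed natural number; hence in $\gamesup$ every play $\rho \in \wincond$ carries cost $\sup_{j\to\infty} \rank(v_j) \leq M$, while every play outside $\wincond$ carries cost $\infty$. In other words, being winning in $\gamesup$ (finite cost for Player~$0$, infinite cost for Player~$1$) amounts to exactly the same condition as being winning in the qualitative game $\game$, just evaluated through a different lens.

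Next, I would handle the two roles uniformly. For Player~$0$: every play $\rho$ consistent with $\sigma_{\sup}$ in $\gamesup$ is also a play in $\game$ consistent with $\sigma$, so $\rho \in \wincond$; hence $\cost(\rho) \le M < \infty$, and taking the supremum over such plays yields $\cost(\sigma_{\sup}) \le M < \infty$, i.e., $\sigma_{\sup}$ is winning. Dually, for Player~$1$: every play $\rho$ consistent with $\sigma_{\sup}$ in $\gamesup$ lies outside $\wincond$, so $\cost(\rho) = \infty$; the infimum over these plays yields $\cost(\sigma_{\sup}) = \infty$, again winning. Both arguments are literally the same up to swapping $\sup$ and $\inf$, so I would present them jointly.

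Finally, since $\sigma_{\sup}$ is implemented by the same memory structure $\mem$ and the same next-move function as $\sigma$, we immediately have $\card{\sigma_{\sup}} = \card{\mem} = \card{\sigma}$, which trivially lies in $\bigo(\card{\sigma})$; effective constructibility transfers for the same reason, as no new construction is performed beyond reusing the data defining $\sigma$. I do not anticipate any genuine obstacle here: the corollary is essentially an unpacking of the definition of $\suprankgame(\wincond, \rank)$ combined with finiteness of $V$, and the only point requiring mild care is distinguishing the two players' winning criteria ($\cost(\sigma) < \infty$ versus $\cost(\tau) = \infty$) when checking that $\sigma_{\sup}$ meets them.
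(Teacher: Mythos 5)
Your reading of the corollary is the literal one, and for that reading your argument is sound: since $V$ is finite and $\rank$ maps into $\nats$, a play in $\gamesup$ has finite cost if and only if it lies in $\wincond$, so ``winning'' in $\gamesup$ coincides with winning in the qualitative game $\game$, and the identical strategy with the identical memory structure does the job for either player. But notice that under this reading the corollary is essentially contentless: your proof never uses the hypothesis that $\gamefam$ is proper, and it carries no quantitative information. The corollary is intended --- and is later invoked in Section~\ref{sec:applications:request-response} in the form ``if she has a strategy of cost at most~$b'$ in~$\game'$, she has one of the same cost and of size at most~$d2^d$'' --- as a statement about strategies achieving a given cost bound~$b$: if Player~$0$ wins $\gamesup$ with respect to~$b$, she has a finite-state strategy of cost at most~$b$ and of size $\bigo(\card{\sigma})$. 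Reusing $\sigma$ verbatim cannot deliver this; it only guarantees $\cost(\sigma_{\sup}) \leq \max_{v \in V}\rank(v)$, which may far exceed the bound~$b$ that Player~$0$ can actually enforce.

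The paper's route, sketched in the paragraph preceding the corollary and carried out in Lemma~\ref{lem:vertex-ranked-games:direct:reduction}, is different: for a given~$b$ one forms $A = \att{1}{}{\set{v \in V \mid \rank(v) > b}}$ and passes to the qualitative game $(\arena, \wincond) \setminus A$. Properness of $\gamefam$ is exactly what guarantees that this trimmed game is again a member of $\gamefam$ and hence admits a finite-state winning strategy of the size guaranteed for games in $\gamefam$; the proof of Lemma~\ref{lem:vertex-ranked-games:direct:reduction} then shows that this strategy, played in $\gamesup$, keeps the play outside~$A$ and therefore has cost at most~$b$ (dually, Player~$1$ combines his attractor strategy towards the high-rank vertices with a winning strategy in the trimmed game). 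The size bound and effective constructibility are inherited from the subgame. The missing idea in your proposal is this detour through the attractor-trimmed subgame; without it you establish only the qualitative half of the claim.
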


Finally, the procedure constructed in the proof of Theorem \ref{thm:vertex-ranked-games:direct:complexity} enables us to solve the optimization problem for vertex-ranked $\sup$-games from~$\supgames{\gamefam}$:
Recall that if Player~$0$ wins~$\gamesup$ with respect to some~$b$, she wins it with respect to all~$b' \geq b$ as well.
Hence, using a binary search,~$\log(M)$ invocations of the decision procedure from the proof of Theorem~\ref{thm:vertex-ranked-games:direct:complexity} suffice to determine the minimal~$b$ such that Player~$0$ wins~$\gamesup$ with respect to~$b$, where~$M$ denotes the number of ranks assigned to vertices of~$\gamesup$ by its ranking function.
Hence, it is possible to determine the minimal such~$b$ in time $\bigo(\log(M)(n + t(\card{\game})))$ and space~$\bigo(M) + s(\card{\game})$.

\subsection{Solving Vertex-Ranked~$\limsup$-Games}
\label{sec:vertex-ranked-games:limsup}

We now turn our attention to solving vertex-ranked~$\limsup$-games.
Solving these games is again at least as hard as solving their corresponding qualitative games, due to the same reasoning as for vertex-ranked~$\sup$-games.
Thus, we again only provide upper bounds on the complexity of solving such games.
To this end, given some class~$\gamefam$ of games, we define the corresponding class of vertex-ranked~$\limsup$-games
\[
	\limsupgames{\gamefam} = \{ (\arena, \limsuprankgame(\wincond, \rank)) \mid (\arena, \wincond) \in \gamefam, \rank \text{ is vertex-ranking function for } \arena \} \enspace .
\]

\newcommand{\cobuchi}{\textsc{CoBüchi}}

We identify two criteria on classes of qualitative games~$\gamefam$, each of which is sufficient for quantitative games in~$\limsupgames{\gamefam}$ to be solvable with respect to some given~$b$.
More precisely, we provide decision procedures for~$\limsupgames{\gamefam}$ for the case that
\begin{itemize}
	\item games from~$\gamefam$ can be solved in conjunction with coBüchi-conditions, and for the case that
	\item the winner of a play~$\rho$ in a game from~$\gamefam$ depends only on an infinite suffix of~$\rho$.
\end{itemize}

The latter condition is commonly referred to as prefix-independence, which we formally define later in this section.

In order to show the former case, fix some class of games~$\gamefam$ and let $\gamelim = (\arena, \limsuprankgame(\wincond, \rank)) \in \limsupgames{\gamefam}$ be a vertex-ranked~$\limsup$-game with vertex set~$V$.
Furthermore, recall that a play in~$\wincond$ has cost at most~$b$ in~$\gamelim$ if it visits vertices of rank greater than~$b$ only finitely often.

In the qualitative case, the behavior of visiting a certain set of vertices only finitely often is formalized by the qualitative co-Büchi condition
\[
	\cobuchi(F) = \set{\rho \in V^\omega \mid \inf(\rho) \cap F = \emptyset} \enspace ,
\]
where~$\inf(\rho)$ denotes the set of vertices occurring infinitely often in~$\rho$.
Clearly, Player~$0$ has a strategy of cost at most~$b$ in~$\gamelim$ if and only if she wins~$(\arena, \wincond \cap \cobuchi(\set{v \in V \mid \rank(v) > b}))$.
This observation gives rise to the following remark.

\begin{remark}
Let~$\gamefam$ be a class of qualitative games such that the games in~$\set{(\arena, \wincond \cap \cobuchi(F)) \mid (\arena, \wincond) \in \gamefam, F \subseteq V, V \text{ is vertex set of } \arena}$ can be solved in time~$t(\card{\game}, \card{F})$ and space~$s(\card{\game}, \card{F})$, where~$t$ and~$s$ are monotonic functions.

Then, the following problem can be solved in time $t(\card{\gamelim}, n)$ and space $s(\card{\gamelim}, n)$:
\myquot{Given some game~$\gamelim \in \limsupgames{\gamefam}$ with~$n$ vertices as well as some bound~$b \in \nats$, does Player~$0$ win~$\gamelim$ with respect to~$b$?}
\end{remark}

In this case, we solve vertex-ranked~$\limsup$-games via a decision procedure for solving qualitative games as-is.
Such a procedure trivially exists if the winning conditions of games from~$\gamefam$ are closed under intersection with co-Büchi conditions.
Thus, we obtain solvability of a wide range of classes of vertex-ranked~$\limsup$-games, e.g., co-Büchi-, parity-, Muller-, Streett- and Rabin games.

We now turn our attention to the latter case described above:
We consider classes~$\limsupgames{\gamefam}$ where a play is only determined to be winning or losing in a game from~$\gamefam$ due to some infinite suffix.
Formally, we say that a qualitative winning condition~$\wincond \subseteq V^\omega$ is \emph{prefix-independent} if for all infinite plays~$\rho \in V^\omega$ and all play prefixes~$\pi \in V^*$, we have $\rho \in \wincond$ if and only if~$\pi\rho \in \wincond$.
A qualitative game is prefix-independent if its winning condition is prefix-independent.
A class of games is prefix-independent if every game in the class is prefix-independent.
This notion allows us to formalize the claim made in the second bullet point above.

\begin{theorem}
\label{thm:vertex-ranked-games:indirect:complexity}
Let~$\gamefam$ be a proper prefix-independent class of qualitative games where each~$\game\in\gamefam$ can be solved in time~$t(\card{\game})$ and space~$s(\card{\game})$, where~$t$ and~$s$ are monotonic functions.

Then, the following problem can be solved in time~$\bigo(n^3 + n^2\cdot t(\card{\gamelim}))$ and space $\bigo(n + s(\card{\gamelim}))$:
\myquot{Given some game~$\gamelim \in \limsupgames{\gamefam}$ with~$n$ vertices and some bound~$b \in \nats$, does Player~$0$ win~$\gamelim$ with respect to~$b$?}
\end{theorem}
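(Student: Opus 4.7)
The plan is to adapt the classical fixed-point algorithm for solving prefix-independent qualitative games (cf.~the work of Chatterjee, Henzinger, and Piterman~\cite{ChatterjeeHenzingerPiterman06}) to the vertex-ranked $\limsup$-setting, using Theorem~\ref{thm:vertex-ranked-games:direct:complexity} as a black-box subroutine. Concretely, I set $V^{(0)}$ to be the vertex set of $\gamelim$ and iterate: at step $i$, compute the set $W^{(i)} \subseteq V^{(i)}$ of vertices from which Player~$0$ wins the $\sup$-subgame on $V^{(i)}$ with respect to $b$, take its Player-$0$ attractor $A^{(i)} = \att{0}{V^{(i)}}{W^{(i)}}$ inside that subgame, and set $V^{(i+1)} = V^{(i)} \setminus A^{(i)}$. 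The loop terminates as soon as $W^{(i)} = \emptyset$, upon which I report that Player~$0$ wins $\gamelim$ with respect to $b$ if and only if $v_\initmark$ has been added to some $A^{(j)}$ during the iteration.

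For the complexity bound, the loop runs for at most $n$ iterations, since every round with $W^{(i)}\neq\emptyset$ strips at least one vertex off the active subarena. Within a single iteration I invoke the decision procedure of Theorem~\ref{thm:vertex-ranked-games:direct:complexity} once per candidate initial vertex in order to assemble the entire region $W^{(i)}$, contributing $\bigo(n^2) + n \cdot t(\card{\gamelim})$ time, after which the attractor computation adds only linear overhead. Summing over the $n$ iterations and reusing working memory yields the claimed $\bigo(n^3 + n^2 \cdot t(\card{\gamelim}))$ time and $\bigo(n + s(\card{\gamelim}))$ space.

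For the ``Player~$0$ wins'' direction of correctness, whenever the current vertex lies in some $A^{(i)}$, Player~$0$ plays the attractor strategy towards $W^{(i)}$ inside $V^{(i)}$ and, once in $W^{(i)}$, switches to her $\sup$-winning strategy on $V^{(i)}$. A structural lemma---proved by induction on $i$ directly from the attractor definition---shows that every Player-$0$ vertex in $V^{(i)}$ has all of its original-game successors inside $V^{(i)}$, so Player~$0$'s own moves never escape the current level. Player~$1$, by contrast, may still exit $V^{(i)}$ via edges into some earlier $A^{(j)}$ with $j<i$, but each such escape strictly decreases the level index and can therefore happen only finitely often. From some level $i^{\ast}$ onwards, the play is confined to $V^{(i^{\ast})}$, the $\sup$-winning strategy yields bounded ranks and $\wincond$ on the tail, and prefix-independence of $\wincond$ lifts this to $\cost(\rho) \le b$ in $\gamelim$.

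The other direction, which I expect to be the main obstacle, requires showing that Player~$1$ wins $\gamelim$ with cost strictly above $b$ from every vertex in the residual $V^{(\infty)}$ with $W^{(\infty)}=\emptyset$. The same structural lemma ensures that $V^{(\infty)}$ is a trap for Player~$0$ in the original arena, so Player~$1$'s $\sup$-winning strategy on the subgame restricted to $V^{(\infty)}$ is directly playable in the full game. To upgrade this $\sup$-strategy to a $\limsup$-strategy, I use a restart argument that crucially exploits prefix-independence: each time the play visits a vertex of rank greater than $b$, Player~$1$ restarts his $\sup$-winning strategy from that vertex. Then every consistent play either eventually violates $\wincond$---which, by prefix-independence, yields $\cost(\rho)=\infty$---or else visits rank-$>b$ vertices infinitely often, so that $\limsup_{j\to\infty}\rank(v_j)>b$; in either case $\cost(\rho)>b$ in $\gamelim$, as desired.
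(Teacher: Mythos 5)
Your algorithm, your complexity accounting, and your Player-$0$ direction are essentially the paper's own proof: the same iterative peeling of Player-$0$ attractors of the $\sup$-winning regions, the same stitching of attractor strategies with $\sup$-winning strategies, and the same appeal to prefix-independence on the stabilized tail. One point you gloss over there: your claim that ``each escape strictly decreases the level index'' needs that a play following the $\sup$-winning strategy from $W^{(i)}$ cannot wander into $V^{(i)} \setminus W^{(i)}$ \emph{at the same level}; the paper proves this as an explicit invariant, using determinacy of the $\sup$-subgame (a Player-$1$ move from $W^{(i)}$ to a vertex of $V^{(i)} \setminus W^{(i)}$ would let him force cost greater than $b$, contradicting membership in $W^{(i)}$). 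Your structural lemma about Player-$0$ successors does not cover this case.

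The genuine gap is in the Player-$1$ direction, precisely the step you identify as the main obstacle. Your restart rule --- restart the $\sup$-winning strategy \emph{at} each vertex of rank greater than $b$ --- does not yield the dichotomy you claim. Suppose the play visits its last high-rank vertex at position $j$. The suffix $\rho' = v_j v_{j+1} v_{j+2} \cdots$ is consistent with the $\sup$-winning strategy $\tau'_{v_j}$ of Player~$1$ from $v_j$, whose guarantee is only that every consistent play has $\sup$-cost greater than $b$; but $\rho'$ satisfies that guarantee vacuously because $\rank(v_j) > b$ already, so you learn nothing about whether $\rho'$ violates $\wincond$. The play may therefore satisfy $\wincond$ and have all ranks after position $j$ at most $b$, giving $\limsup$-cost at most $b$ --- exactly the situation you must exclude. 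The paper's Lemma~\ref{lem:vertex-ranked-games:direct:completeness} instead restarts from the vertex \emph{immediately after} each high-rank vertex: then the suffix under consideration contains no vertex of rank greater than $b$ yet is consistent with a strategy of $\sup$-cost greater than $b$, which forces it out of $\wincond$, and prefix-independence propagates $\rho \notin \wincond$, hence cost $\infty$, to the whole play. With that one-position shift your argument goes through; as written, the step fails.
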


Let~$\gamefam$ be a proper prefix-independent class of games and let~$\gamelim \in \limsupgames{\gamefam}$.
Moreover, let~$b \in \nats$.
Intuitively, in order to solve the~$b$-threshold problem for~$\gamelim$, we adapt the classic algorithm for solving prefix-independent qualitative games (cf., e.g., the work by Chatterjee, Henzinger, and Piterman~\cite{ChatterjeeHenzingerPiterman06}).
Thereby, we repeatedly compute the set of vertices from which Player~$0$ has a strategy of cost at most~$b$ in the corresponding vertex-ranked~$\sup$-game ~$\gamesup$ and remove their~$0$-attractor from the game similarly to the construction of a decision procedure for vertex-ranked $\sup$-games in the proof of Theorem~\ref{thm:vertex-ranked-games:direct:complexity}.
We claim that Player~$0$ has a strategy with cost at most~$b$ in~$\gamelim$ if and only if~$v_\initmark$ was removed during that above construction.

In order to prove Theorem~\ref{thm:vertex-ranked-games:indirect:complexity}, we first show that, if Player~$0$ does not win a~$\sup$-game from any vertex, then she also does not win the corresponding~$\limsup$-game from any vertex.
Recall that for a qualitative or quantitative game~$\game$ with vertex set~$V$ we write~$\game_v$ to denote the game~$\game$ with its initial vertex replaced by~$v \in V$.
All other components, i.e., the structure of the arena and the cost-function, remain unchanged.
We write~$\win^b_i(\game)$ to denote the set of all vertices~$v$ such that Player~$i$ has a strategy of cost at most~$b$, if~$i = 0$, or greater than~$b$, if~$i = 1$, in~$\game_v$.

\begin{lemma}
\label{lem:vertex-ranked-games:direct:completeness}
Let~$\gamelim = (\arena, \limsuprankgame(\wincond, \rank))$ be a vertex-ranked $\limsup$-game with vertex set~$V$ such that~$\wincond$ is prefix-independent and such that for each $v \in V$ the vertex-ranked $\sup$-game~$(\gamesup)_v$ is determined.
If~$\win^b_0(\gamesup) = \emptyset$, then~$\win^b_0(\gamelim) = \emptyset$.
\end{lemma}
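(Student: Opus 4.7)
The plan is to show the contrapositive direction via determinacy: assuming $\win^b_0(\gamesup) = \emptyset$, I will prove that $\win^b_1(\gamelim) = V$, i.e., Player~$1$ has a strategy of cost greater than~$b$ in~$\gamelim_v$ for every vertex~$v$. Since $\win^b_0(\gamelim)$ and $\win^b_1(\gamelim)$ are disjoint by definition, this yields $\win^b_0(\gamelim) = \emptyset$.

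The starting point is determinacy of~$\gamesup$, which by assumption gives, for every vertex~$w \in V$, a strategy~$\tau^w$ for Player~$1$ in~$\gamesup_w$ with $\cost_{\sup}(\tau^w) > b$. Equivalently, every play consistent with~$\tau^w$ either leaves~$\wincond$ or visits some vertex of rank strictly greater than~$b$. I will combine these local strategies into a single strategy~$\tau$ for Player~$1$ in~$\gamelim_v$ by a standard ``restart'' construction: start by playing according to $\tau^v$; whenever the current play prefix ends in a vertex~$w$ with $\rank(w) > b$, switch to $\tau^w$ and treat~$w$ as a fresh initial vertex (formally, $\tau$ consults $\tau^w$ on the suffix of the play since the most recent such ``restart'' vertex, and on $\tau^v$ before any restart has occurred).

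Fix any play~$\rho$ consistent with~$\tau$ in~$\gamelim_v$. There are two cases. If the construction restarts infinitely often, then $\rho$ visits vertices of rank greater than~$b$ infinitely often, so either $\rho \notin \wincond$ (giving cost $\infty$) or $\limsup_{j \to \infty} \rank(\rho_j) \geq b+1$; in either case $\limsuprankgame(\wincond,\rank)(\rho) > b$. Otherwise there is a last restart at some position~$k$ landing in some vertex~$w$ (or, if no restart ever occurs, take $k=0$ and $w = v$), and the suffix $\rho_{k}\rho_{k+1}\cdots$ is a play consistent with $\tau^w$ in $\gamesup_w$ that visits no vertex of rank greater than~$b$ after position~$k$. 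Because $\cost_{\sup}(\tau^w) > b$, this suffix cannot lie in~$\wincond$; by prefix-independence of~$\wincond$, neither does~$\rho$, so again $\limsuprankgame(\wincond,\rank)(\rho) = \infty > b$.

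In both cases every play consistent with~$\tau$ has cost greater than~$b$ in $\gamelim_v$, so $\cost_{\lim}(\tau) > b$ and $v \in \win^b_1(\gamelim)$. As $v$ was arbitrary, $\win^b_1(\gamelim) = V$ and hence $\win^b_0(\gamelim) = \emptyset$. The only subtle point, and thus the main thing to argue carefully, is the ``finitely many restarts'' case, where one must invoke prefix-independence of~$\wincond$ to transfer the failure of the suffix to lie in~$\wincond$ back to the whole play~$\rho$; everything else follows immediately from determinacy of~$\gamesup$ and the definitions of $\rankgame^{\sup}$ and $\rankgame^{\lim}$.
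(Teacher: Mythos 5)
Your overall strategy is the paper's: use determinacy to get, for every vertex~$w$, a Player~$1$ strategy of cost greater than~$b$ in~$\gamesup_w$, stitch these together with a restart construction, split on finitely versus infinitely many restarts, and invoke prefix-independence in the finite case. However, there is a genuine gap in your Case~2, caused by \emph{where} you restart. You restart \emph{at} the high-rank vertex~$w$ itself, and then argue that since the cost of~$\tau^w$ in~$\gamesup_w$ exceeds~$b$, the suffix starting at~$w$ cannot lie in~$\wincond$. This deduction fails: the suffix begins with~$w$, and~$\rank(w) > b$, so its $\sup$-cost already exceeds~$b$ whether or not it lies in~$\wincond$. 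Worse, for the same reason \emph{every} Player~$1$ strategy in~$\gamesup_w$ has cost greater than~$b$ when~$\rank(w) > b$, so~$\tau^w$ carries no information about the continuation of the play after~$w$; it may well allow Player~$0$ to stay inside~$\wincond$ while never again seeing a rank above~$b$. A play with finitely many restarts built this way can then be in~$\wincond$ with $\limsup$ of the ranks at most~$b$, i.e., of cost at most~$b$, so the strategy~$\tau$ you construct need not witness $\win^b_1(\gamelim) = V$.

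The fix is exactly the paper's choice of restart point: restart at the vertex \emph{immediately after} each visit to a vertex of rank greater than~$b$ (formally, consult~$\tau^{v_k}$ on $v_k \cdots v_j$ where $k = \max\set{k' \mid \rank(v_{k'-1}) > b}$, with $\max\emptyset = 0$). Then, in the finitely-many-restarts case, the final suffix $\rho' = v_{j+1}v_{j+2}\cdots$ (with~$j$ the last high-rank position) contains \emph{no} vertex of rank greater than~$b$, so if~$\rho'$ were in~$\wincond$ its $\sup$-cost would be at most~$b$, genuinely contradicting the cost of~$\tau^{v_{j+1}}$ being greater than~$b$; prefix-independence then transfers $\rho' \notin \wincond$ to $\rho \notin \wincond$ as you intended. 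Your Case~1 and the surrounding determinacy argument are fine.
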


\begin{proof}
Let~$V$ be the vertex set of~$\gamesup$ and~$\gamelim$.
Since~$\win^b_0(\gamesup) = \emptyset$ and since for all $v \in V$ the game $(\gamesup)_v$ is determined, we obtain~$\win^b_1(\gamesup) = V$.
For each~$v \in V$, let~$\tau'_v$ be a strategy for Player~$1$ in~$(\gamesup)_v$ with cost greater than~$b$.
We now define a single strategy~$\tau$ for Player~$1$ in~$\gamelim$ with cost greater than~$b$.
For each~$\pi = v_0\cdots v_j \in V^*$ we define~$\tau(\pi) = \tau'_{v_k}(v_k \cdots v_j)$, where~$k = \max \set{k' \mid \rank(v_{k'-1}) > b}$, with~$\max\emptyset = 0$.
We claim that~$\tau$ has cost greater than~$b$ in all~$(\gamelim)_v$.
Since the cost-function~$\cost$ is identical in all~$(\gamelim)_v$ this claim is formalized as~$\cost(\tau) > b$.

Let~$\rho = v_0v_1v_2\cdots$ be a play of~$\game_v$ consistent with~$\tau$.
If there are infinitely many positions~$j$ with~$\rank(v_j) > b$, then~$\cost(\rho) > b$.
Thus, assume the opposite and let~$j$ be the maximal position with~$\rank(v_j) > b$.
Then the suffix~$\rho' = v_{j+1}v_{j+2}v_{j+3}\cdots$ of~$\rho$ is consistent with~$\tau'_{v_{j+1}}$.
Since~$\rho'$ does not encounter any vertices of rank greater than~$b$, while~$\cost(\rho') > b$ due to~$\rho'$ being consistent with a strategy of cost greater than~$b$, we obtain~$\rho' \notin \wincond$.
This implies~$\rho \notin \wincond$ due to prefix-independence of~$\wincond$.
Hence,~$\cost(\rho) = \infty$, which, together with the statement above, implies~$\cost(\tau) > b$.
\qed
\end{proof}

We are now able to prove Theorem~\ref{thm:vertex-ranked-games:indirect:complexity} using Lemma~\ref{lem:vertex-ranked-games:direct:completeness} as a building block for showing the correctness of the approach outlined above.

\begin{proof}[Proof of Theorem~\ref{thm:vertex-ranked-games:indirect:complexity}]
	Given~$\gamelim = (\arena, \limsuprankgame(\wincond, \rank))$ with vertex set~$V$ of size~$n$, we define~$\game_0 = \gamesup$, as well as~$X_j = \win^b_0(\game_j)$, $A_j = \att{0}{}{X_j}$, which is computed in the arena of~$\game_j$, and $\game_{j+1} = \game_j \setminus A_j$ for all~$j \in \nats$.
	As we only remove vertices from the games~$\game_j$, we obtain~$\game_{j+1} \subarenaof \game_j$.
	Thus, the series of games stabilizes at~$j=n$ at the latest, i.e., $\game_j = \game_n$ for all~$j \geq n$.
	We define~$A = \bigcup_{j \leq n} A_j$ and~$\game' = \game_n$ and claim that Player~$0$ has a strategy with cost at most~$b$ in~$\game$ if and only if~$v_\initmark \in A$.
	We first argue that this suffices to show the desired result.
	
	First note that since $\gamefam$ is proper, $\supgames{\gamefam}$ is proper as well.
	Thus, Theorem~\ref{thm:vertex-ranked-games:direct:complexity} is applicable to $\supgames{\gamefam}$.
	Let~$\decsup$ be the decision procedure deciding whether or not Player~$0$ has a strategy with cost at most~$b$ in games from~$\supgames{\gamefam}$, as constructed in the proof of that theorem.
	The decision procedure~$\decsup$ can be easily modified to return~$W^b_0(\game_j)$ instead of a yes/no-answer by applying it to each~$(\game_j)_v$ individually.
	This, however, is only possible since we assume~$\gamefam$ to be proper, as the second condition of the definition of a proper family of games allows us to solve each $(\game_j)_v$.
	This modified procedure~$\decsup'$ runs in time at most~$\bigo(n^2 + n\cdot t(\card{\game}))$ and space $\bigo(n) + s(\card{\game})$, where~$t(\card{\game})$ and~$s(\card{\game})$ are the time and space required to solve~$\game$, respectively.
	
	For~$j \in \set{0,\dots,n}$, the decision procedure~$\declim$ first computes~$\game_j$ in linear time in~$n$ and reusing the space used for solving~$\game_{j-1}$.
	It then computes~$X_j$ requiring a single call to the modified~$\decsup$.
	It subsequently computes~$A_j$ in time~$\bigo(n)$ and space~$\bigo(n)$. 
	Finally, it returns false if and only if~$v_\initmark$ is in the arena of~$\game_n$.
	In total, we obtain a runtime of~$\declim$ of~$\bigo(n^3 + n^2\cdot t(\card{\game}))$.
	The only additional memory required by~$\declim$ is that for storing the sets~$X_j$ and~$A_j$, the size of which is bounded from above by~$n$.
	The games~$\game_j$ can be stored by reusing the memory occupied by~$\game$, due to~$\game_j \subarenaof \game_{j-1}$.
	Hence, the procedure~$\declim$ requires space~$\bigo(n) + s(\card{\game})$.

	It remains to show that Player~$0$ indeed has a strategy with cost at most~$b$ in~$\game$ if and only if~$v_\initmark \notin \arena_n$, i.e., if~$v_\initmark \in A$.
	To this end, first assume~$v_\initmark \in A$ and note that we have~$A_j \supseteq X_j$.
	However, for each two~$j \neq j'$, we have $A_j \cap A_{j'} = \emptyset$ and, in particular,~$X_j \cap X_{j'} = \emptyset$.
	Hence, for each~$v \in A$ there exists a unique~$j$ such that~$v \in A_j$.
	
	We define the strategy~$\sigma$ for Player~$0$ in~$\game$ inductively such that any play consistent with~$\sigma$ only descends through the~$X_j$.
	Formally, we construct~$\sigma$ such that it satisfies the following invariant:
	\begin{quote}
	Let~$\rho = v_0v_1v_2\cdots$ be a play consistent with~$\sigma$ and let~$k \in \nats$.
	If~$v_k \in A_j \setminus X_j$, then~$v_{k+1} \in \bigcup_{j' \leq j} A_{j'} \cup X_{j'}$.
	Moreover, if~$v_k \in (A_j \setminus X_j) \cap V_0$, then the move to~$v_{k+1}$ is the move prescribed by the attractor strategy of Player~$0$ towards~$X_j$.
	If~$v_k \in X_j$, then~$v_{k+1} \in X_j \cup \bigcup_{j' < j} A_{j'} \cup X_{j'}$.
	\end{quote}
	Clearly, this invariant holds true for~$\pi = v_\initmark$.
	Thus, let~$\pi = v_0\cdots v_k$ be a play prefix consistent with~$\sigma$.
	If~$v_k \in V_1$, let~$v^*$ be an arbitrary successor of~$v_k$ in~$\game$ and assume towards a contradiction that~$\pi v^*$ violates the invariant.
	If~$v_k \in A_j \setminus X_j$, then in~$\game_j$ there exists an edge from~$v_k$ leading to some vertex~$v^* \notin A_j$, a contradiction to the definition of the attractor.
	If, however,~$v_k \in X_j$ and~$v^* \notin X_j \cup \bigcup_{j' < j} A_{j'} \cup X_{j'}$, then Player~$1$ has a strategy~$\tau$ in~$(\game_j)_{v^*}$ with cost greater than~$b$.
	Thus, a play that begins in~$v_k$, moves to~$v^*$ and is consistent with~$\tau$ afterwards has cost greater~$b$, i.e., Player~$0$ does not have a strategy with cost at most~$b$ in~$(\game_j)_{v_k}$, a contradiction to~$v_k \in X_j = \win^b_0(\game_j)$.
	Hence,~$\pi v^*$ satisfies the invariant for each successor~$v^*$ of~$v_k \in V_1$.
	
	Now assume~$v_k \in V_0$ and first let~$v_0 \in A_j \cup X_j$ for some~$j \in \nats$.
	Let~$\sigma^A_j$ be an attractor strategy for Player~$0$ towards~$X_j$.
	If~$v_k \in A_j \setminus X_j$, we define~$\sigma(\pi) = \sigma^A_j(v_k)$, which satisfies the invariant due to the definition of the attractor strategy.
	If, however,~$v_k \in X_j$, let~$k'$ be minimal such that~$v_{k''} \in X_j$ for all~$k''$ with~$k' \leq k'' \leq k$.
	Moreover, let~$\sigma^v_j$ be a strategy for Player~$0$ such that every play consistent with~$\sigma^v_j$ in~$\game_j$ with initial vertex~$v$ has cost at most~$b$.
	Such a strategy exists due to~$X_j = \win^b_0(\game_j)$.
	We define~$\sigma(\pi) = \sigma^{v_{k'}}_j(v_{k'}\cdots v_k)$, which satisfies the invariant to similar reasoning as above.
	
	In order to show~$\cost(\sigma) \leq b$, let~$\rho = v_0v_1v_2\cdots$ be a play consistent with~$\sigma$.
	Due to the invariant of~$\sigma$ and since~$v_0 \in A$, the play~$\rho$ descends through the~$A_j$ and the~$X_j$, i.e., once it encounters some~$X_j$, it never moves to any~$A_{j'} \setminus X_{j'}$ with~$j' \geq j$ nor to any~$X_{j'}$ with~$j' > j$.
	Also,~$\rho$ stabilizes in some~$X_j$, i.e., there exists a~$k \in \nats$ such that~$v_{k'} \in X_j$ for all~$k' \geq k$, as~$\sigma$ prescribes moves according to the attractor strategy towards~$X_j$ when in~$A_j \setminus X_j$.
	Moreover, due to the definition of~$\sigma$, the suffix~$\rho' = v_{k}v_{k+1}v_{k+2}\cdots$ is consistent with~$\sigma^{v_k}_j$, i.e., we obtain~$\rho' \in \wincond$ and that the maximal vertex-rank encountered in~$\rho$ is at most~$b$.
	As~$\wincond$ is prefix-independent, we obtain~$\rho \in \wincond$ as well as~$\limsup_{k \rightarrow \infty}\rank(v_k) \leq b$.
	Hence,~$\limsuprankgame(\wincond, \rank)(\rho) \leq b$, which concludes this direction of the proof.
	
	Now assume~$v_\initmark \notin A$ and consider~$\game'$ with vertex set~$V\setminus A$.
	Since the construction of the~$\game_j$ stabilized, we have~$A_j = X_j = W^b_0(\game') = \emptyset$, i.e., Player~$1$ has a strategy with cost greater than~$b$ from any starting vertex in~$\game'$.
	Due to Lemma~\ref{lem:vertex-ranked-games:direct:completeness}, this implies that he has such a strategy from every vertex in~$\gamesup \setminus A$, call it~$\tau$.
	Note that there exists no Player-$0$-vertex in~$V \setminus A$ that has an outgoing edge leading into~$A$, as this would contradict the definition of the Player-$0$-attractors~$A_j$.
	Hence,~$\tau$ is a strategy for Player~$1$ in~$\game$ as well and we retain~$\cost(\tau) > b$.
\qed
\end{proof}

Intuitively, we prove Theorem~\ref{thm:vertex-ranked-games:indirect:complexity} by constructing a strategy~$\sigma$ for Player~$0$ by \myquot{stitching together} the attractor-strategies towards her winning regions in the decreasing vertex-ranked~$\sup$-games and the winning strategies for her in the respective vertex-ranked~$\sup$-games.
As each play consistent with that strategy descends down the hierarchy of~$\sup$-games thus constructed, we can reuse the memory states of the winning strategies in these games when implementing~$\sigma$.
Thus, a monotonic upper bound on the size of strategies with cost at most~$b$ in~$\gamesup$ is an upper bound on the size of such strategies in~$\gamelim$ as well.

\begin{corollary}
\label{cor:vertex-ranked-games:lim:memory}
Let~$\gamefam$ be a proper prefix-independent class of qualitative games such that, if Player~$0$ wins~$\game$, then she has a finite-state winning strategy of size at most~$m(\card{\game})$, where~$m$ is a monotonic function.

If Player~$0$ wins~$\gamelim \in \limsupgames{\gamefam}$, then she has a finite-state winning strategy~$\sigma_{\lim}$ with $\card{\sigma_{\lim}} \in \bigo(m(\card{\gamelim}))$ in~$\gamelim$.
Furthermore, if winning strategies for Player~$0$ in the games in~$\gamefam$ are effectively constructible, then~$\sigma_{\lim}$ is effectively constructible.
\end{corollary}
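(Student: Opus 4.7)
The plan is to reduce the corollary to the hypothesis on $\gamefam$ by observing that, in a vertex-ranked $\limsup$-game, the winning strategies for Player~$0$ in $\gamelim$ coincide with the winning strategies for her in the underlying qualitative game. First I would verify that for every play $\rho = v_0v_1v_2\cdots$ of $\arena$, the equivalence $\limsuprankgame(\wincond, \rank)(\rho) < \infty$ iff $\rho \in \wincond$ holds: if $\rho \in \wincond$, the cost equals $\limsup_{j\to\infty}\rank(v_j)$, which is bounded above by $\max_{v \in V}\rank(v)$ and hence finite, while if $\rho \notin \wincond$, the cost is $\infty$ by definition. As a consequence, a strategy $\sigma$ for Player~$0$ is winning in $\gamelim$ if and only if it is a winning strategy in the corresponding qualitative game $\game = (\arena, \wincond)$.

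Next I would invoke the hypothesis on $\gamefam$. By the very definition of $\limsupgames{\gamefam}$, the qualitative game $\game$ lies in $\gamefam$, and the premise of the corollary guarantees that Player~$0$ wins $\gamelim$ via a finite-state strategy, which by the equivalence above means that she wins $\game$. The assumption on $\gamefam$ then yields a finite-state winning strategy $\sigma$ for her in $\game$ of size at most $m(|\game|)$. Setting $\sigma_{\lim} := \sigma$ and implementing it by the very same memory structure provides, again by the equivalence from the first step, a winning strategy for Player~$0$ in $\gamelim$.

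Finally I would settle the quantitative bound and effective constructibility. Any finite description of $\limsuprankgame(\wincond, \rank)$ must describe both $\wincond$ and $\rank$, so $|\game| \leq |\gamelim|$ under the paper's size convention, and monotonicity of $m$ yields $|\sigma_{\lim}| \leq m(|\game|) \leq m(|\gamelim|)$. Effective constructibility of $\sigma_{\lim}$ is inherited directly from that of $\sigma$, since the construction is the identity. There is no significant obstacle here: the key insight is that finite cost in $\gamelim$ is automatic as soon as the play satisfies $\wincond$, so none of the stitching machinery from the proof of Theorem~\ref{thm:vertex-ranked-games:indirect:complexity} is needed for this corollary; the only subtlety to double-check is the size comparison $|\game| \leq |\gamelim|$.
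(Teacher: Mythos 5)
Your argument is internally correct for the \emph{literal} wording of the corollary, but it takes a genuinely different route from the paper's, and the difference matters. You observe that since $\rank\colon V \rightarrow \nats$ is defined on a finite vertex set, every play in $\wincond$ has cost at most $\max_{v \in V}\rank(v) < \infty$, so \myquot{winning} in $\gamelim$ collapses to winning in the corresponding qualitative game $\game = (\arena, \wincond) \in \gamefam$; the hypothesis on $\gamefam$ then hands you a strategy of size $m(\card{\game}) \leq m(\card{\gamelim})$, and effectiveness is inherited for free. That chain of reasoning is sound (the size comparison $\card{\game} \leq \card{\gamelim}$ is the only convention-dependent step, and you flag it). What this buys you is a two-line proof that bypasses Theorem~\ref{thm:vertex-ranked-games:indirect:complexity} entirely.

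The catch is that your observation simultaneously shows the literal statement to be degenerate: under it, the corollary says nothing about the ranks at all. The paper's proof is the \myquot{stitching} construction from Theorem~\ref{thm:vertex-ranked-games:indirect:complexity}: Player~$0$ descends through the hierarchy of sets $X_j = \win^b_0(\game_j)$ and their attractors, playing attractor strategies in $A_j \setminus X_j$ and cost-$b$ strategies of the vertex-ranked $\sup$-games inside each $X_j$; because every consistent play descends monotonically and stabilizes in some $X_j$, the memory states of the sub-strategies can be reused, giving a single strategy of size $m(\card{\gamelim})$ \emph{with cost at most $b$}. That cost-preserving version is what the corollary is meant to record (see the sentence immediately preceding it, and the later uses in Sections~\ref{sec:applications:request-response} and~\ref{sec:applications:muller}, which all speak of strategies \myquot{of the same cost}). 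Your argument cannot deliver this: a winning strategy for the qualitative game $\game$ guarantees only that each play's cost is bounded by $\max_{v}\rank(v)$, not by a prescribed bound $b$, and there is no way to recover the bound $b$ without the attractor decomposition. So treat your proof as a correct but strictly weaker reading; for the statement the paper actually relies on, the stitching machinery is not optional.
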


Moreover, in order to find the optimal~$b$ such that Player~$0$ wins~$\gamelim$ with respect to~$b$, we can again employ a binary search analogously to the case of vertex-ranked~$\sup$-games.
Thus, we can determine the optimal such~$b$ in time~$\bigo(\log(M)(n^3 + n^2\cdot t(\card{\gamelim})))$ and space $\bigo(n + s(\card{\gamelim}))$, where~$M$ again denotes the number of ranks assigned to vertices in a given vertex-ranked $\limsup$-game.

Having thus defined both quantitative reductions and a canonical target for such reductions, we now give examples of how to solve quantitative games using this tools.

\section{Applications}
\label{sec:application}

In this section, we give examples of how to use quantitative reductions and vertex-ranked games to solve quantitative games.
First, in Section~\ref{sec:applications:request-response}, we formally introduce a quantitative variant of request-response games, which we call request-response games with costs, and show how to solve such games using quantitative reductions and vertex-ranked sup-request-response games.
Second, in Section~\ref{sec:applications:muller}, we recall the definition of quantitative Muller games due to McNaughton~\cite{McNaughton00} and show how to reduce such games to vertex-ranked safety games via quantitative reductions.
Finally, in Section~\ref{sec:application:resilience}, we show that vertex-ranked games are useful in their own right, by showing how to use them to synthesize controllers that are resilient against disturbances.

\subsection{Reducing Request-Response Games with Costs to Vertex-Ranked Request-Response Games}
\label{sec:applications:request-response}

Recall that a play satisfies the qualitative request-response condition if every request that is opened is eventually answered.
We extend this condition to a quantitative one by equipping the edges of the arena with costs and measuring the maximal cost incurred between opening and answering a request. 

Fix some arena~$\arena$ with vertex set~$V$ and set~$E$ of edges.
Formally, the qualitative request-response condition~$\reqres(\Gamma)$ consists of a family of so-called request-response pairs~$\Gamma = (Q_c, P_c)_{c \in [d]}$, where ~$d \in \nats$, $d > 0$, and where $Q_c,P_c \subseteq V$ for all~$c \in [d]$.
Player~$0$ wins a play according to this condition if each visit to some vertex from~$Q_c$ is answered by some later visit to a vertex from~$P_c$, i.e., we define
\begin{multline*}
	\reqres((Q_c, P_c)_{c \in [d]}) = \\ \{ v_0v_1v_2\cdots \in V^\omega \mid \forall c \in [d] \forall j \in \nats.\, v_j \in Q_c \text{ implies } \exists j' \geq j.\, v_{j'} \in P_c\} \enspace.
\end{multline*}
We say that a visit to a vertex from~$Q_c$ \emph{opens a request for condition}~$c$ and that the first visit to a vertex from~$P_c$ afterwards \emph{answers the request for that condition}.

\begin{proposition}[\cite{WallmeierHuettenThomas03}]\label{thm:request-response:prior-work}
Request-response games with~$n$ vertices and~$d$ request-response pairs can be solved in time $\bigo(n^2 d^2 2^d)$.

Furthermore, let~$\game$ be a request-response game with~$d$ request-response pairs.
If Player~$0$ has a winning strategy in~$\game$, then she has a finite-state winning strategy of size at most~$d2^d$.
\end{proposition}

We extend this qualitative winning condition to a quantitative one using families of cost functions~$\cost = (\cost_c)_{c \in [d]}$, where~$\cost_c\colon E \rightarrow \nats$ for each~$c \in [d]$ and lift the cost functions~$\cost_c$ to play infixes~$\pi$ in~$\arena$ by adding up the costs along~$\pi$.
The cost-of-response for a request for condition~$c$ at position~$j$ is then defined as
\begin{multline*}
	\reqresdist_c(v_0v_1v_2\cdots, j) = \\
	\begin{cases}
 		\min \set{ \cost_c (v_j \cdots v_{j'}) \mid j' \ge j \text{ and } v_{j'} \in P_c } &\text{if } v_j \in Q_c \enspace , \\
 		0 &\text{otherwise} \enspace ,
 	\end{cases}
\end{multline*}
with~$\min \emptyset = \infty$, which naturally extends to the (total) \emph{cost-of-response}
\[
	\reqresdist(\rho, j) = \max\nolimits_{c \in [d]} \reqresdist_c(\rho, j) \enspace.
\]
Finally, we define the \emph{request-response condition} with costs as 
\[
	\costreqres(\Gamma, \cost)(\rho) = \sup\nolimits_{j\rightarrow \infty} \reqresdist(\rho , j) \enspace,
\]
i.e., this condition measures the maximal cost incurred by any request in~$\rho$.

We call a game~$\game = (\arena, \costreqres(\Gamma, \cost))$ a request-response game with costs.
We denote the largest cost assigned to any edge by~$W$.
As we assume the functions~$\cost_c$ to be given in binary encoding, the largest cost~$W$ assigned to an edge may be exponential in the size of~$\game$.

If all~$\cost_c$ assign zero to every edge, then the request-response condition with costs coincides with the qualitative request-response condition.
In general, however, the request-response condition with costs is a strengthening of the classic request-response condition:
If some play~$\rho$ has finite cost according to the condition with costs, then it is winning for Player~$0$ according to the qualitative condition, but not vice versa.

\begin{remark}
\label{rem:request-response:cost-to-classical}
Let~$\game = (\arena, \costreqres(\Gamma, \cost))$ be a request-response game with costs.
If a strategy~$\sigma$ for Player~$0$ in~$\game$ has finite cost, then~$\sigma$ is a winning strategy for Player~$0$ in the qualitative game~$(\arena, \reqres(\Gamma))$.
\end{remark}

This remark together with a detour via qualitative request-response games yield a cap for request-response games with costs.

\begin{lemma}
\label{lem:request-response:cap}
Let~$\game$ be a request-response game with costs with~$n$ vertices,~$d$ request-response pairs, and largest cost of an edge~$W$.
If Player~$0$ has a strategy with finite cost in~$\game$, then she also has a strategy with cost at most $d 2^d n W$.
\end{lemma}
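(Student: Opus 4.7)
The plan is to reduce to the qualitative setting and invoke the memory bound of Proposition~\ref{prop:request-response:prior-work}. By Remark~\ref{rem:request-response:cost-to-classical}, any finite-cost strategy for Player~$0$ in $\game$ is also winning for the qualitative request-response game $(\arena, \reqres(\Gamma))$. Thus Proposition~\ref{prop:request-response:prior-work} yields a finite-state winning strategy $\sigma$ for that qualitative game, implemented by a memory structure $\mem$ with $\card{\mem} \leq d 2^d$. The heart of the argument is to show that $\sigma$, viewed as a strategy in $\game$, already satisfies $\cost(\sigma) \leq d 2^d n W$, which proves the lemma.

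To establish this cost bound, I would work inside the expanded arena $\arena \times \mem$, which contains $N = n \cdot d 2^d$ vertices and in which $\sigma$ is positional. Suppose towards a contradiction that some play $\rho$ consistent with $\sigma$ contains a request for some condition $c$ opened at a position $j$ whose earliest answer, at position $j'$, satisfies $\cost_c(v_j \cdots v_{j'}) > d 2^d n W$. As every edge has cost at most $W$, this forces $j' - j > N$, so the $j' - j$ product vertices visited by $\ext(\rho)$ at positions $j, j+1, \ldots, j'-1$ outnumber the $N$ available product vertices. By pigeonhole, there exist $k_1 < k_2$ in $[j, j'-1]$ at which $\ext(\rho)$ visits the same product vertex.

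The proof concludes with a pumping step: repeat the infix of $\ext(\rho)$ between $k_1$ and $k_2$ forever. Because $\sigma$ is positional in $\arena \times \mem$ and the two endpoints coincide, this yields an infinite play consistent with $\sigma$ whose projection into $\arena$ is also consistent with $\sigma$. By minimality of $j'$ and because $k_2 \leq j'-1$, every position of the looped infix lies strictly before the first $P_c$-visit of the original play, so the request opened at $j$ in the projected play is never answered, contradicting $\sigma$ being winning for the qualitative request-response condition. I do not expect any genuine obstacle: the only mild subtlety is restricting the pigeonhole range to $[j, j'-1]$ rather than $[j, j']$, which automatically places the repeated positions strictly below $j'$ and guarantees that the looped infix avoids $P_c$ without a separate case analysis.
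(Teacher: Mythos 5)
Your proposal is correct and follows essentially the same route as the paper: pass to the qualitative game via Remark~\ref{rem:request-response:cost-to-classical}, take the $d2^d$-state winning strategy from Proposition~\ref{prop:request-response:prior-work}, find a repeated product vertex in the unanswered window by pigeonhole, and pump the loop to produce a consistent play with an unanswered request. The only cosmetic omission is the degenerate case where the request opened at $j$ is never answered at all (so no earliest answer $j'$ exists); there the play already violates $\reqres(\Gamma)$ and the contradiction is immediate.
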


\begin{proof}
Let~$\game = (\arena, \costreqres(\Gamma, \cost))$ and let~$\game' = (\arena, \reqres(\Gamma))$ be a qualitative request-response game obtained by disregarding the cost functions of~$\game$.
Moreover, let~$\sigma$ be a strategy with finite cost for~$\game$.
Due to Remark~\ref{rem:request-response:cost-to-classical}, the strategy~$\sigma$ is winning for Player~$0$ in~$\game'$ as well, hence Player~$0$ wins~$\game'$.
Thus, due to Proposition~\ref{thm:request-response:prior-work}, she has a winning strategy~$\sigma'$ of size at most~$d2^d$ in~$\game'$.
Let~$\sigma'$ be implemented by the memory structure~$\mem$ and let~$b = d2^dnW$.
We show~$\costreqres(\Gamma, \cost)(\sigma') \leq b$.

Let~$\rho = v_0v_1v_2\cdots$ be a play consistent with~$\sigma'$ and assume towards a contradiction $\costreqres(\Gamma, \cost)(\rho) > b$.
Then there exist~$c \in [d]$ and~$j \in \nats$ such that~$\reqresdist_c(\rho, j) > b$.
As each edge has cost at most~$W$, the request for condition~$c$ opened at position~$j$ is not answered for at least~$d2^dn$ steps, i.e., we obtain $v_{j'} \notin P_c$ for all~$j'$ with~$j \leq j' \leq j+d2^dn$.
Let~$\ext(\rho) = (v_0,m_0)(v_1,m_1)(v_2,m_2)\cdots$.
Since~$\card{\mem} \leq d2^d$, there exists a vertex repetition on the play infix~$(v_j,m_j)\cdots(v_{j+d2^dn},m_{j+d2^dn})$ of~$\ext(\rho)$, say at positions~$k$ and~$k'$ with $j \leq k < k' \leq j + d2^dn$.
Thus, the play~$\rho' = v_0\cdots v_k(v_{k+1}\cdots v_{k'})^\omega$ is consistent with~$\sigma'$.

In~$\rho'$, however, a request for condition~$c$ is opened at position~$j \leq k$.
Since we have~$j \leq k' \leq j+d2^dn$, this request is not answered in the play infix $v_j\cdots v_k\cdots v_{k'}$, i.e., it is never answered.
Hence,~$\rho' \notin \reqres(\Gamma)$, which contradicts~$\sigma'$ being a winning strategy for Player~$0$ in~$\game'$.
\qed
\end{proof}

Having obtained a cap for request-response games with costs, we can now turn to the main result of this section:
Request-response games with costs are reducible to vertex-ranked $\sup$-request-response games.
In order to show this, we use a memory structure that keeps track of the costs incurred by the requests open at each point in the play~\cite{WeinertZimmermann17}.

\begin{lemma}
\label{lem:request-response:reduction}
	Let~$\game$ be a request-response game with costs with~$n$ vertices,~$d$ request-response pairs, and highest cost of an edge~$W$.
	Then~$\game \reducesto^{b+1}_\mem \game'$ for $b = d2^d nW$, some memory structure~$\mem$ of size~$\bigo(2nb^d)$, and a vertex-ranked~$\sup$-request-response game~$\game'$ with~$d$ request-response pairs.
\end{lemma}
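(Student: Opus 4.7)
The plan is to construct a memory structure $\mem$ whose states record, for each condition $c \in [d]$, the $\cost_c$-weight so far accumulated by the earliest still-open request for $c$. A memory state stores (i) the previously-visited vertex of the play, needed because $\update$ only observes the new vertex and so must identify the edge just traversed, together with (ii) a $d$-tuple of counters, each either a sentinel meaning ``no request for $c$ currently open'' or a value in $\{1, \ldots, b+1\}$ recording the $\cost_c$-weight of the earliest still-open request for $c$, clipped at $b+1$. Upon reading a new vertex $v$, each counter $k_c$ is updated by first resetting to the sentinel if $v \in P_c$, then adding $\cost_c(v_{\mathrm{prev}}, v)$ and clipping at $b+1$ if it is not the sentinel, and finally initialising to the appropriate edge cost if $v \in Q_c$ and the counter is still the sentinel. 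Standard accounting yields $|M| \in \bigo(2 n b^d)$.

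The target game is $\game' = (\arena \times \mem, \suprankgame(\reqres(\Gamma'), \rank))$, where $\Gamma' = (Q_c', P_c')_{c \in [d]}$ is the lift of $\Gamma$ defined by $(v,m) \in Q_c'$ iff $v \in Q_c$ and analogously for $P_c'$, and $\rank(v,m) = \max_{c \in [d]} k_c$ (reading the sentinel as $0$). The key invariant, which I would establish by induction on the length of the play prefix $v_0 \cdots v_j$, is that the counter $k_c$ in the extended state after reading $v_0 \cdots v_j$ equals the $\cost_c$-weight accumulated by the earliest currently-open request for condition $c$, capped at $b+1$. It follows that the rank-peak attained just before answering any particular request equals that request's cost-of-response (or $b+1$ if its cost-of-response exceeds $b$), and that $\ext(\rho) \in \reqres(\Gamma')$ iff $\rho \in \reqres(\Gamma)$.

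With this invariant in hand, the three conditions for $\game \reducesto^{b+1}_{\mem} \game'$ with the default correction $\capfunc_{b+1}$ reduce to case analysis on the cost of $\rho$ in $\game$. If $\cost(\rho) = c' < b+1$, then every request in $\rho$ is answered with cost-of-response at most $c'$; no counter is ever clipped, $\ext(\rho) \in \reqres(\Gamma')$, and taking the $\sup$ over all positions of $\rank(\ext(\rho)_j)$ returns the maximum cost-of-response over all requests in $\rho$, which is $c' = \capfunc_{b+1}(c')$. If $\cost(\rho) \geq b+1$, then either some request in $\rho$ is unanswered, so $\ext(\rho) \notin \reqres(\Gamma')$ and $\cost'(\ext(\rho)) = \infty$, or some answered request has cost-of-response at least $b+1$, so its counter is clipped to $b+1$ and $\cost'(\ext(\rho)) \geq b+1$.

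The main subtlety lies in the bookkeeping for multiple simultaneously open requests for the same condition: any later request for $c$ has a no-larger cost-of-response than an earlier still-open one, since a single subsequent visit to $P_c$ answers them all at once. This observation justifies tracking only one counter per condition, namely the age of the earliest still-open request, and makes the update rule above correct. Once this design choice is pinned down, the invariant and the three reduction conditions are routine, and the size bound $\bigo(2 n b^d)$ falls out of multiplying the $n$ previous-vertex options by the counter-tuple count.
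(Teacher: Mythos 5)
Your construction is essentially the paper's: a memory structure storing the previous vertex together with one capped cost counter per condition, the pointwise lift of $\Gamma$ to the product arena, the rank given by the maximum counter, and the default correction function $\capfunc_{b+1}$; your observation that a single counter per condition suffices because a later request for $c$ is dominated by an earlier still-open one is exactly the paper's $\max\set{r(c),0}$ rule. The only structural difference is that you replace the paper's persistent ``bound exceeded'' flag by clipping each counter at $b+1$, which is harmless for a $\sup$-target since the supremum remembers the peak (it would not suffice for a $\limsup$-target).

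There is, however, one concrete bookkeeping error in your update rule. You reset a counter to the sentinel on reading $v \in P_c$ \emph{before} adding $\cost_c(v_{\mathrm{prev}}, v)$, and you initialise a freshly opened counter to ``the appropriate edge cost'' rather than to $0$. Under the definition $\reqresdist_c(\rho,j) = \min\set{\cost_c(v_j\cdots v_{j'}) \mid j' \geq j,\, v_{j'} \in P_c}$, the cost of a request opened at $j$ and first answered at $j'$ includes the edge $(v_{j'-1},v_{j'})$ but not the edge $(v_{j-1},v_j)$; with your ordering the counter peaks at the cost of $v_j\cdots v_{j'-1}$ shifted by the opening edge, so $\sup_k \rank(\ext(\rho)_k)$ can differ from $\cost(\rho)$ whenever these two edges have different costs, and the condition $\cost'(\ext(\rho)) = \capfunc_{b+1}(\cost(\rho))$ fails. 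A request opened at a vertex in $Q_c \cap P_c$, which has cost-of-response $0$ by the $j' \geq j$ in the definition, is also mishandled by your $P$-before-$Q$ ordering. The repair is the paper's ordering: add the traversed edge's cost to all open counters first, initialise new counters to $0$, and arrange for the rank at the answering position to reflect the counter value before the reset (the paper's own write-up is admittedly loose on this last point, but it does add the answering edge's cost before resetting). With that repair your invariant is correct and the remainder of your argument, including the case split on $\cost(\rho)$ and the size bound, goes through as in the paper.
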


\begin{proof}
	Let~$\game = (\arena, \costreqres(\Gamma, \cost))$ with initial vertex~$v_\initmark$.
	Recall that~$b$ is a cap of~$\game$ due to Lemma~\ref{lem:request-response:cap}.
	We first define the memory structure~$\mem$.
	Intuitively, we use it to keep track of the currently open requests and the costs they have incurred up to the cap~$b$.
	Once the cost of a single request incurs a cost greater than~$b$, the memory structure raises a Boolean flag, which indicates that Player~$1$ can unbound the cost of that request.
	
	Let~$r\colon [d] \rightarrow \set{\bot} \cup [b+1] = \set{\bot, 0, \dots, b}$ be a function mapping conditions~$c$ to the cost~$r(c) \in [b+1]$ they have incurred so far, or to~$r(c) = \bot$ if no request for that condition is pending.
	We call such a function a \emph{request-function} and denote the set of all request functions by~$R$.
	We define the initial request function~$r_\initmark$ such that~$r_\initmark(c) = 0$ if~$v_\initmark \in Q_c$ and~$r_\initmark(c) = \bot$ otherwise.
	In order to be able to access the current vertex during the update of the memory structure, we store it in the memory structure as well.
	By accessing the current vertex together with the vertex that we move to, we are thus able to obtain the cost of the traversed edge.
	Finally, we store a flag that indicates whether or not the bound~$b$ has been exceeded.
	Hence, we define the set of memory states $M = V \times R \times \set{0,1}$ with the initial memory state $m_\initmark = (v_\initmark, r_\initmark, 0)$.
	
	We define the update function~$\update((v, r, f), v') = (v', r', f')$ by performing the following steps in order:
	\begin{itemize}
		\item For each $c \in [d]$, if $v \in P_c$, set $r'(c) = \bot$. Otherwise, set $r'(c) = r(c)$.
		\item For each~$c \in [d]$, if~$r'(c) \neq \bot$, set $r'(c) = r'(c) + \cost_c((v,v'))$.
		\item Now, if there exists a condition $c$ such that $r'(c) > b$, then set~$r'(c) = \bot$ for all~$c$ and set $f'= 1$.
			Otherwise, set~$f' = f$.
		\item For each~$c \in [d]$, if $v' \in Q_c$, set~$r'(c)$ to~$\max\set{r'(c),0}$ where~$\max\set{\bot, 0} = 0$.
	\end{itemize}
	Note that we replicate the current vertex~$v$ in the memory state in order to be able to access it in the update of the memory state during the move to~$v'$, thereby attaining access to the traversed edge~$(v, v')$.
	
	We obtain~$\mem = (M, m_\initmark, \update)$.
	Clearly, we have~$\card{\mem} \in \bigo(2nb^d)$, i.e.,~$\mem$ is of exponential size in~$d$, but only of polynomial size in~$n$ and~$W$.
	
	Using this definition, we obtain that if~$\costreqres(\Gamma, \cost)(\rho) \leq b$, then the extended play~$\ext(\rho)$ remains in vertices of the form $(v,v,r,0)$.
	Dually, if $\costreqres(\Gamma, \cost)(\rho) > b$, then~$\ext(\rho)$ eventually moves to vertices of the form $(v,v,r,1)$ and remains there ad infinitum.
	
	\newcommand{\costcrr}{\cost_\game}
	\newcommand{\costvrr}{\cost_{\game'}}
	
	Let~$\Gamma = (Q_c, P_c)_{c \in [d]}$.
	It remains to define the vertex-ranking function~$\rank\colon V \times M \rightarrow \nats$, as well as the family of request-response pairs~$\Gamma'$ for~$\game'$.
	We define the former as 
	\[
		\rank(v,v,r,f) = \begin{cases}
			\max\set{0, \max_{c \in [d]}r(c)} &\text{if } f = 0 \text{ and} \\
			b+1 &\text{otherwise}
		\end{cases}
	\]
	and the latter as
	\begin{multline*}
		\Gamma' = (Q'_c, P'_c)_{c \in [d]}\text{, where }Q'_c = Q_c \times Q_c \times R \times \set{0,1}\text{ and }\\
		P'_c = P_c \times P_c \times R \times \set{0,1}\text{ for all }c \in [d] \enspace .
	\end{multline*}
	Note that $\rho \in \reqres(\Gamma)$ if and only if~$\ext(\rho) \in \reqres(\Gamma')$.
	
	We first argue that the ranking functions included in the memory state indeed contain information about the cost incurred so far by open requests as long as the cost of the play does not exceed~$b$.
	In order to do so, let~$\rho = v_0v_1v_2\cdots$ be some play in~$\game$ with $\costreqres(\Gamma, \cost)(\rho) \leq b$ and let $\ext(\rho) = (v_0,v_0,r_0,f_0)(v_1,v_1,r_1,f_1)(v_2,v_2,r_2,f_2)\cdots$ be its extension.
	Intuitively, since the cost of~$\rho$ does not exceed the bound~$b$, the flags~$f_j$ are not raised and the request functions track the cost of all requests precisely.
	
	More formally if, for some~$c \in [d]$ and some~$j \in \nats$ with~$v_j \in Q_c$, we have $\reqresdist_c(\rho, j) = b'$, then $r_{j'}(c) = b'$, where~$j' \geq j$ is the earliest position at which the request for~$c$ opened at position~$j$ is answered.
	Dually, if $r_{j'}(c) = b'$ for some~$j' \in \nats$, some~$b' \in \nats$, and some~$c \in [d]$, then $\reqresdist_c(\rho, j) \geq b'$, where~$j \leq j'$ is the earliest position at which the request for condition~$c$ is opened without being answered prior to position~$j'$.
	In particular, if $r_{j'}(c) = b'$ with~$j'$,~$b'$, and~$c$ as above and we additionally have $v_{j'} \in P_c$, then $\reqresdist_c(\rho, j) = b'$.
	
	We define $\game' = (\arena \times \mem, \suprankgame(\reqres(\Gamma'), \rank))$.
	Moreover, since~$\Gamma'$ is the extension of~$\Gamma$ to the vertices of $\arena \times \mem$, the game~$\game'$ contains~$d$ many request-response pairs.
	
	It remains to show $\game \reducesto^{b+1}_\mem \game'$.
	Recall that, since we do not name a $(b+1)$-correction function explicitly, we implicitly use the $(b+1)$-correction-function~$\capfunc_{b+1}$.
	Clearly, the first and second condition of the definition of a quantitative reduction hold true, i.e., the arena of~$\game'$ is~$\arena \times \mem$ and~$\capfunc_{b+1}$ is a~$(b+1)$-correction function.
	It remains to show the two latter conditions.
	To this end, let $\rho = v_0v_1v_2\cdots \in V^\omega$ be a play in~$\game$ and let $\ext(\rho) = (v_0,v_0,r_0,f_0)(v_1,v_1,r_1,f_1)(v_2,v_2,r_2,f_2)\cdots$ be its unique extended play in~$\game'$.
	We use the shorthands $\costcrr = \costreqres(\Gamma, \cost)$ as well as $\costvrr = \suprankgame(\reqres(\Gamma'), \rank)$.
	
	We first show $\costcrr(\rho) = \costvrr(\ext(\rho))$ for all~$\rho$ with $\costcrr(\rho) < b+1$.
	Let $\costcrr(\rho) = b' < b+1$ and note that this implies $\rho \in \reqres(\Gamma)$ and $\ext(\rho) \in \reqres(\Gamma')$ as well as $f_j = 0$ for all $j \in \nats$.
	As argued above, we obtain $\rank(v_j,v_j,r_j,f_j) \leq b'$ for all~$j$, which implies~$\costvrr(\ext(\rho)) \leq b'$.
	Moreover, let~$c \in [d]$ and~$j \in \nats$ with~$v_j \in Q_c$ such that $\reqresdist_c(\rho, j) = b'$.
	Since~$b' < \infty$, such~$c$ and~$j$ exist.
	The play~$\ext(\rho)$ visits a vertex of rank~$b'$ at the position at which the request for condition~$c$ opened at position~$j$ is answered for the first time.
	Thus, $\costvrr(\ext(\rho)) \geq b'$, which concludes this part of the proof.
	
	It remains to show that $\costvrr(\ext(\rho)) \geq \capfunc_{b+1}(b+1) = b+1$ holds true for all~$\rho$ with~$\costcrr(\rho) \geq b+1$.
	To this end, let $\costcrr(\rho) = b' \geq b+1$.
	As argued above, the extended play~$\ext(\rho)$ eventually moves to vertices of the form $(v, v, r, 1)$ and remains there.
	Hence,~$\costvrr(\ext(\rho)) = b+1$ if~$\rho \in \reqres(\Gamma)$, i.e., if $\ext(\rho) \in \reqres(\Gamma')$.
	If, however, $\rho \notin \reqres(\Gamma)$, then $\ext(\rho) \notin \reqres(\Gamma')$ and hence, $\costvrr(\rho) = \infty > b+1$.
	\qed
\end{proof}

Thus, in order to solve a request-response game with costs with respect to some~$b$, it suffices to solve a vertex-ranked $\sup$-request-response game with respect to~$b$.
This, in turn, can be done by reducing the problem to that of solving a request-response game as shown in Theorem~\ref{thm:vertex-ranked-games:direct:complexity}.
Using this reduction together with the framework of quality-preserving reductions, we are able to provide an upper bound on the complexity of solving request-response games  with costs with respect to some bound~$b$.

\begin{theorem}
	\label{thm:request-response:exptime-membership}
	The following decision problem is in \exptime:
	\myquot{Given some request-response game with costs~$\game$ and some bound~$b \in \nats$, does Player~$0$ have a strategy~$\sigma$ with~$\cost(\sigma) \leq b$ in~$\game$?}
\end{theorem}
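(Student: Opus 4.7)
The plan is to combine the reduction of Lemma~\ref{lem:request-response:reduction} with the generic decision procedure for vertex-ranked sup-games from Theorem~\ref{thm:vertex-ranked-games:direct:complexity}, specialized to the proper class of qualitative request-response games handled by Proposition~\ref{prop:request-response:prior-work}. Fix a request-response game with costs $\game$ with $n$ vertices, $d$ pairs, and largest edge cost $W$, and let $B = d 2^d n W$, which by Lemma~\ref{lem:request-response:cap} is a cap of $\game$.

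First, I would replace the input bound $b$ by $\tilde b = \min(b, B)$. Since $B$ is a cap, Player~$0$ wins $\game$ with respect to any $b \geq B$ if and only if she has some finite-cost strategy, which in turn is equivalent to winning with respect to $B$; hence the answer for $b$ coincides with the answer for $\tilde b$. Next, Lemma~\ref{lem:request-response:reduction} furnishes a memory structure $\mem$ of size $\bigo(2nB^d)$ such that $\game \reducesto^{B+1}_\mem \game'$, where $\game'$ is a vertex-ranked sup-request-response game with $d$ pairs on an arena of $N = \bigo(n^2 B^d)$ vertices. Because $\tilde b \leq B < B+1$ and the implicit correction function satisfies $\capfunc_{B+1}(\tilde b) = \tilde b$, Theorem~\ref{thm:capped-reduction}.\ref{thm:capped-reduction:exact}, applied individually to each $b' \leq \tilde b$, yields that Player~$0$ wins $\game$ with respect to $\tilde b$ if and only if she wins $\game'$ with respect to $\tilde b$. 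The latter question I would then decide by invoking Theorem~\ref{thm:vertex-ranked-games:direct:complexity}, using Proposition~\ref{prop:request-response:prior-work} as the decision procedure for the underlying qualitative request-response game in time $\bigo(N^2 d^2 2^d)$.

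It remains to verify the complexity. The quantities $n$ and $d$ are polynomial in $|\game|$, and because edge costs are given in binary, $W$ is at most $2^{\mathrm{poly}(|\game|)}$. Hence $B$ is singly exponential in $|\game|$, and $B^d \leq 2^{d \log_2 B}$ is singly exponential as well since $d \log_2 B$ remains polynomial in $|\game|$. Consequently $N$ and the running time $\bigo(N) + \bigo(N^2 d^2 2^d)$ both lie in $2^{\mathrm{poly}(|\game|)}$, placing the problem in \exptime. The only point requiring care is the upfront truncation of $b$ to $\tilde b$: without it, the binary encoding of $b$ could propagate into the reduction and leave the size of $\game'$ uncontrolled, whereas once $\tilde b \leq B$ is established, every subsequent bound depends only on $|\game|$ and the entire pipeline is singly exponential.
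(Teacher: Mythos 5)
Your proposal is correct and follows essentially the same route as the paper: construct the vertex-ranked $\sup$-request-response game via Lemma~\ref{lem:request-response:reduction} and solve it with respect to the bound by instantiating Theorem~\ref{thm:vertex-ranked-games:direct:complexity} with the solver from Proposition~\ref{prop:request-response:prior-work}. Your explicit truncation of $b$ to $\min(b, d2^dnW)$ using the cap from Lemma~\ref{lem:request-response:cap} is a legitimate extra care point that the paper's proof leaves implicit, and the complexity analysis matches.
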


\begin{proof}
	Let~$\game$ contain~$n$ vertices,~$d$ request-response pairs, and let~$W$ be the largest cost assigned to any edge.
	We first construct the vertex-ranked~$\sup$-request-response game~$\game'$ from~$\game$ as shown in Lemma~\ref{lem:request-response:reduction}.
	Recall that~$\game'$ contains~$\bigo(n(d2^dnW)^d)$ vertices and~$d$ request-response pairs.
	Due to the instantiation of Theorem~\ref{thm:vertex-ranked-games:direct:complexity} with the decision procedure for qualitative request-response games from Proposition~\ref{thm:request-response:prior-work}, the game~$\game'$ can be solved with respect to~$b$ in time
	\begin{multline*}
		\bigo( n(d2^dnW)^d +( n(d2^dnW)^d))^2 d^2 2^d ) = \\ \bigo( n(d2^dnW)^d +n^2(d2^dnW)^{2d} d^2 2^d ) = \bigo( n^2(d2^dnW)^{2d} d^2 2^d ) \enspace .
	\end{multline*}
	Due to~$W^{2d} \in \bigo((2^{\card{\game}})^{\card{\game}}) = \bigo(2^{\card{\game}^2})$, this is exponential in the description length of~$\game$.
	\qed
\end{proof}

Moreover, solving request-response games is known to be \exptime-hard~\cite{ChatterjeeHenzingerHorn11}.
Thus, solving quantitative request-response games with costs via quantitative reductions is asymptotically optimal.

Furthermore, by leveraging our results on the sizes of memory structures in vertex-ranked~$\sup$-games we obtain an upper bound on the size of strategies with a given cost in request-response games with costs.

\begin{lemma}
Let~$\game$ be a request-response game with costs with~$n$ vertices,~$d$ request-response pairs, and largest cost of an edge~$W$.
If Player~$0$ has a strategy in~$\game$ with finite cost, then she also has a strategy in~$\game$ with finite cost of size at most~$\bigo(n b^d d 2^d)$, where~$b = d 2^d n W$.
\end{lemma}

\begin{proof}
Let~$\sigma$ be a strategy for Player~$0$ in~$\game$ with finite cost.
Due to Lemma~\ref{lem:request-response:cap}, Player~$0$ has a strategy~$\sigma'$ in~$\game$ with cost at most~$d2^dnW$.
Let~$\game'_{\sup}$ be the vertex-ranked $\sup$-request-response game constructed in the proof of Lemma~\ref{lem:request-response:reduction} and recall that~$\game'_{\sup}$ has~$d$ request-response pairs as well.

Due to Theorem~\ref{thm:capped-reduction}, and since~$b$ is a cap of~$\game$ due to Lemma~\ref{lem:request-response:cap}, Player~$0$ has a strategy~$\sigma'_{\sup}$ with cost at most $d2^dnW$ in~$\game'_{\sup}$.
Let~$\game'$ be the qualitative request-response game corresponding to~$\game'_{\sup}$, i.e., the game played in the same arena as~$\game'_{\sup}$ in which a play is winning for Player~$0$ if and only if it has finite cost in~$\game'_{\sup}$.

Clearly, the strategy~$\sigma'_{\sup}$ is winning for Player~$0$ in~$\game'$ as well.
By applying Proposition~\ref{thm:request-response:prior-work} we obtain that Player~$0$ has a winning strategy of size at most~$d2^d$ in~$\game'$.
By furthermore applying Corollary~\ref{cor:vertex-ranked-games:sup:memory} and Theorem~\ref{thm:reductions:strategy-lift}, we obtain that Player~$0$ has a strategy of finite cost of size $\bigo(2nb^d d2^d) = \bigo(n b^d d 2^d)$.
\qed
\end{proof}

Finally, the optimization problem of finding the minimal~$b'$ such that Player~$0$ wins a request-response game with costs~$\game$ with respect to~$b'$ can be solved in exponential time as well.
Recall that if Player~$0$ wins~$\game$ with respect to some~$b'$, then she also wins it with respect to all~$b'' \geq b'$.
Since we can assume~$b' \leq b = d 2^d n W$, we can perform a binary search for~$b'$ on the interval $\set{0,\dots,b}$.
Hence, the optimal~$b'$ can be found in time~$\bigo(\log(b) ( n^2 b^{2d} d^2 2^d ) )$.

\subsection{Reducing Quantitative Muller Games to Vertex-Ranked Safety Games}
\label{sec:applications:muller}

Having shown how our framework can be used to find optimal strategies in request-response games with costs in a structured and modular way, we now show how it can be used to greatly simplify existing methods for finding such strategies.
To this end, we show how to reduce quantitative Muller games to vertex-ranked safety games, i.e., vertex-ranked games in which it is the aim of Player~$0$ to avoid a certain set of undesirable vertices.
In order to do so, we leverage techniques introduced by Neider, Rabinovich, and Zimmermann~\cite{NRZ14}.

Let~$\arena$ be some arena with vertex set~$V$ and recall that the qualitative Muller condition is defined via a partition of~$2^V$ into~$(\mullerfam_0, \mullerfam_1)$ as
\[
	\muller(\mullerfam_0, \mullerfam_1) = \set{ \rho \in V^\omega \mid \inf(\rho) \in \mullerfam_0 } \enspace ,
\]
where~$\inf(\rho)$ denotes the set of vertices that are visited infinitely often by~$\rho$.
Thus, Player~$i$ wins~$\rho$ if and only if~$\inf(\rho) \in \mullerfam_i$.

McNaughton introduced a quantitative characterization of the Muller condition by assigning a score to each prefix of a play and each subset of the set of vertices~\cite{McNaughton00}.
In order to characterize the set of vertices visited infinitely often during a play, the score of a subset~$F$ measures how often~$F$ has been visited completely without leaving it.
For a play~$\rho$, the limit inferior of the score of~$\inf(\rho)$ tends towards infinity, while the limit inferior of the score for all other sets is zero~\cite{McNaughton00}.

Formally, for any set~$F \subseteq V$ with~$F \neq \emptyset$, the score~$\score_F(\pi)$ is defined inductively using an accumulator that stores the vertices of~$F$ that have already been visited, as follows:
\begin{align*}
	(\acc_F(\epsilon),\score_F(\epsilon)) &= (\emptyset,0) \\
	(\acc_F(\pi v),\score_F(\pi v)) &= \begin{cases}
		(\emptyset, 0) &\text{if } v \notin F \\
		(\emptyset, \score_F(\pi) + 1) & \text{if } v \in F \text{ and } \acc_F(\pi) = F \setminus \set{v} \\
		(\acc_F(\pi) \cup \set{v}, \score_F(\pi)) & \text{otherwise}
	\end{cases}
\end{align*}

We generalize the score-function to families
$\mullerfam$ of subsets of vertices, i.e.,~$\mullerfam \subseteq 2^V$, by defining~$\score_\mullerfam(\pi) = \max_{F \in \mullerfam}(\score_F(\pi))$ and to infinite plays by defining~$\score_\mullerfam(v_0v_1v_2\cdots) = \sup_{j \rightarrow \infty}\score_\mullerfam(v_0\cdots v_j)$.
This definition inspires the quantitative Muller condition, which is defined as 
\[
	\quantmuller(\mullerfam_0,\mullerfam_1)(\rho) = \score_{\mullerfam_1}(\rho) \enspace .
\]
We obtain a cap for such games via leveraging a result by Fearnley and Zimmermann~\cite{FearnleyZimmermann12}.

\begin{lemma}
\label{lem:muller:cap}
Let~$\game = (\arena, \cost)$ be a quantitative Muller game.
If Player~$0$ has a strategy~$\sigma$ with finite cost in~$\game$, then she has a strategy~$\sigma'	$ with~$\cost(\sigma') \leq 2$.	
\end{lemma}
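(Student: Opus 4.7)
The plan is to derive the cap from the qualitative Muller condition together with the Fearnley--Zimmermann score-bound result. First I would observe that $\cost(\sigma) < \infty$ for some strategy $\sigma$ in $\game$ implies that $\sigma$ is winning for Player~$0$ in the qualitative Muller game $(\arena, \muller(\mullerfam_0, \mullerfam_1))$: indeed, if some play $\rho$ consistent with $\sigma$ satisfied $\inf(\rho) \in \mullerfam_1$, then $\score_{\mullerfam_1}(v_0\cdots v_j)$ would tend to infinity as $j \to \infty$, making $\cost(\rho) = \infty$ and hence $\cost(\sigma) = \infty$. Therefore Player~$0$ wins the qualitative Muller game on $\arena$.

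Next I would invoke the Fearnley--Zimmermann result, which states that in any Muller game won by Player~$i$ she in fact has a strategy that prevents the score of every set in $\mullerfam_{1-i}$ from ever exceeding~$2$. Applied to our situation, this yields a strategy $\sigma^*$ for Player~$0$ in $\arena$ such that for every play $\rho = v_0v_1v_2\cdots$ consistent with $\sigma^*$ and every $j \in \nats$ we have $\score_F(v_0\cdots v_j) \leq 2$ for all $F \in \mullerfam_1$, and consequently $\score_{\mullerfam_1}(v_0 \cdots v_j) \leq 2$.

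Taking the supremum over $j$ gives $\quantmuller(\mullerfam_0, \mullerfam_1)(\rho) = \score_{\mullerfam_1}(\rho) \leq 2$ for every play $\rho$ consistent with $\sigma^*$, hence $\cost(\sigma^*) \leq 2$, which is the claim. The main work is entirely absorbed into citing the Fearnley--Zimmermann bound; the only thing to check is the reformulation step, namely that a strategy bounding every losing set's score by~$2$ along every finite prefix also bounds the quantitative cost by~$2$, which is immediate from the definitions of $\score_{\mullerfam_1}$ and $\quantmuller$ as a supremum of scores over prefixes.
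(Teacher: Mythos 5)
Your proof is correct and follows essentially the same route as the paper's: first deduce from the finiteness of the cost (via McNaughton's observation that the score of $\inf(\rho)$ tends to infinity) that the strategy wins the qualitative Muller game, then invoke the Fearnley--Zimmermann bound of~$2$ on the scores of sets in $\mullerfam_1$ and pass to the supremum over prefixes. No gaps.
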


\begin{proof}
	Let~$\game = (\arena, \quantmuller(\mullerfam_0, \mullerfam_1))$.
	Since~$\cost(\sigma) < \infty$, for every play~$\rho$ consistent with~$\sigma$ and every prefix~$\pi$ of~$\rho$, we have that there exists an upper bound on~$\score_F(\pi)$ for all~$F \in \mullerfam_1$.
	Moreover, as the score of~$\inf(\rho)$ tends towards~$\infty$, this implies~$\inf(\rho) \in \mullerfam_0$, i.e.,~$\sigma$ is a winning strategy for the qualitative Muller game~$\game' = (\arena, \muller(\mullerfam_0, \mullerfam_1))$.
	
	It is known that, since Player~$0$ wins~$\game'$, she has a strategy~$\sigma'$ with $\score_{\mullerfam_1}(\pi) \leq 2$ for all prefixes~$\pi$ of all plays consistent with~$\sigma'$~\cite{FearnleyZimmermann12}.
	Thus, we directly obtain $\quantmuller(\mullerfam_0,\mullerfam_1)(\sigma') \leq 2$.
	\qed
\end{proof}

We now show how to reduce quantitative Muller games to vertex-ranked $\sup$-safety games based on previous work by Neider, Rabinovich, and Zimmermann~\cite{NRZ14}.
Recall that a safety game is a very simple qualitative game, in which it is Player~$0$'s goal to avoid a certain set of undesirable vertices.
Formally, the safety condition is defined via a set~$U \subseteq V$ as
\[ \safety(U) = \set{v_0v_1v_2\cdots \in V^\omega \mid \forall j \in \nats v_j \notin U} \enspace . \]

In order to construct the safety game, we define an equivalence relation over play prefixes, such that two play prefixes are equivalent if they have the same accumulator and the same score with respect to all~$F \in \mullerfam_1$.
The constructed safety game uses as vertices representatives of the equivalence classes of all play prefixes that have a cost of at most~$2$ for all~$F \in \mullerfam_1$.
Moreover, it mimics play prefixes~$\pi$ of cost at most~$2$ in the Muller game by moving to some vertex~$\pi'$ such that the score and the accumulator are equal in~$\pi$ and~$\pi'$ for all~$F \in \mullerfam_1$.
We show how to lift this qualitative construction to the setting of quantitative games by providing a quantitative reduction from quantitative Muller games to vertex-ranked~$\sup$-safety games.

\begin{lemma}
\label{lem:muller:reduction}
Let~$\game$ be a quantitative Muller game with~$n$ vertices.
There exists a memory structure~$\mem$ of size at most~$(n!)^3$ and a vertex-ranked~$\sup$-safety game~$\game'$ such that~$\game \reducesto^3_\mem \game'$.
\end{lemma}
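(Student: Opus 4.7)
My plan is to construct a memory structure $\mem$ that tracks, for each $F \in \mullerfam_1$, the accumulator $\acc_F$ and the score $\score_F$ capped at $3$, plus a single sink state reached as soon as some score exceeds $3$; to define on $\arena \times \mem$ a safety condition that forbids the sink; and to let the vertex rank record the current maximum score. The resulting game is then the desired $\game'$, and the reduction follows the pattern used for request-response games in Lemma~\ref{lem:request-response:reduction}.

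Concretely, following the construction of Neider, Rabinovich and Zimmermann~\cite{NRZ14}, I would introduce an equivalence relation $\sim$ on play prefixes that sets $\pi \sim \pi'$ iff they end in the same vertex and $(\acc_F(\pi), \score_F(\pi)) = (\acc_F(\pi'), \score_F(\pi'))$ for every $F \in \mullerfam_1$. The memory state set $M$ is then the set of $\sim$-classes of prefixes along which every $\score_F$ has stayed at most $3$, together with an additional sink $\bot$. The initial state is the class of the empty prefix; for $m \neq \bot$, $\update(m, v)$ returns the class of $\pi v$ for any representative $\pi$ of $m$ (well-defined since $\sim$ is a right congruence, by direct inspection of the inductive clauses of $\score$ and $\acc$), and jumps to $\bot$ whenever the extension would push some $\score_F$ to $4$; one sets $\update(\bot, v) = \bot$. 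The classical last-appearance-record counting argument of~\cite{NRZ14} yields $|M| \leq (|V|!)^3$.

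I then put $\game' = (\arena \times \mem, \suprankgame(\wincond, \rank))$ with $\wincond$ the safety condition ``avoid $V \times \{\bot\}$'', $\rank((v, m)) = \max_{F \in \mullerfam_1} \score_F(\pi)$ for any representative $\pi$ of $m \neq \bot$, and $\rank((v, \bot)) = 3$. Verifying $\game \reducesto^3_\mem \game'$ against the default $3$-correction function $\capfunc_3$ then reduces to two observations about a play $\rho$ of $\game$: if $\cost(\rho) = b' < 3$, then every $\score_F$ along $\rho$ stays bounded by $b'$, so $\ext(\rho)$ never enters the $\bot$-sink, satisfies $\wincond$, and attains supremum rank exactly $b' = \capfunc_3(b')$; and if $\cost(\rho) \geq 3$, then either some $\score_F$ eventually reaches $4$, sending $\ext(\rho)$ into the $\bot$-sink and forcing $\cost'(\ext(\rho)) = \infty$, or $\cost(\rho) = 3$ and $\ext(\rho)$ stays safe while visiting a state of rank $3$, giving $\cost'(\ext(\rho)) = 3 = \capfunc_3(3)$.

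The main technical obstacle is the $(|V|!)^3$ bound itself rather than the reduction properties: the naive encoding storing $(\acc_F, \min\{\score_F, 3\})$ independently for every $F \in \mullerfam_1 \subseteq 2^V$ is doubly exponential in $|V|$, whereas the cubic-factorial bound exploits that the score-up-to-$3$ history of all subsets can be encoded jointly via a small number of last-appearance-record permutations. I would import the corresponding counting argument from~\cite{NRZ14} rather than re-derive it here.
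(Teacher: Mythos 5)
Your construction follows the same route as the paper's proof: equivalence classes of play prefixes recording $(\acc_F,\score_F)$ for all $F \in \mullerfam_1$ in the style of Neider et al., a $\bot$-sink, the safety condition \myquot{avoid $V \times \set{\bot}$}, and the rank given by the current maximal score. The reduction conditions you verify are also correct as far as they go: with your truncation, plays of cost $b' < 3$ get cost exactly $b'$, plays of cost exactly $3$ get cost $3$, and plays of cost at least $4$ get cost $\infty$, all of which is consistent with $\reducesto^3_\mem$ under $\capfunc_3$.

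The gap is in the size bound. You send the memory to $\bot$ only when some $\score_F$ would reach $4$, so your state set consists of equivalence classes of prefixes along which all scores stay at most $3$. The counting argument you want to import from~\cite{NRZ14} bounds the number of classes of prefixes with scores at most $2$ by $(\card{V}!)^3$; for scores at most $3$ the same argument gives $(\card{V}!)^4$, which exceeds the bound claimed in the lemma. The paper avoids this by truncating one step earlier: the memory jumps to $\bot$ as soon as $\score_{\mullerfam_1}$ reaches $3$, so $M = \quotient{\plays_{\leq 2}}{\approx} \cup \set{\bot}$ with the stated $(\card{V}!)^3$ bound. This coarser truncation still satisfies the second reduction condition, because every play $\rho$ with $\cost(\rho) \geq 3$ then drives $\ext(\rho)$ into $\bot$, violating safety and yielding $\cost'(\ext(\rho)) = \infty \geq \capfunc_3(3)$; the definition of a $3$-reduction only requires a lower bound of $f(3)$ for such plays, not an exact value, so your extra precision for plays of cost exactly $3$ buys nothing and costs a factor of $\card{V}!$ in the memory. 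Adjusting your cutoff from $4$ to $3$ repairs the proof and makes it coincide with the paper's.
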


\begin{proof}
Let~$\game = (\arena, \cost)$ with $\cost = \quantmuller(\mullerfam_0, \mullerfam_1)$.
We say that two play prefixes~$\pi$ and~$\pi'$ are~$\mullerfam_1$-equivalent if they end in the same vertex and if, for each~$F \in \mullerfam_1$, we have~$\acc_F(\pi) = \acc_F(\pi')$ and~$\score_F(\pi) = \score_F(\pi')$.
In this case, we write~$\pi \approx_{\mullerfam_1} \pi'$.
For each play prefix~$\pi$, we denote the~$\mullerfam_1$-equivalence-class of~$\pi$ by $\equivclass{\pi}{\approx_{\mullerfam_1}} = \set{\pi' \in V^* \mid \pi \approx_{\mullerfam_1} \pi'}$.
Furthermore, for each set~$\Pi \subseteq V^*$ of play prefixes we define the $\approx_{\mullerfam_1}$-quotient of~$\Pi$ as $(\quotient{\Pi}{\approx_{\mullerfam_1}}) = \set{\equivclass{\pi}{\approx_{\mullerfam_1}} \mid \pi \in \Pi}$.
For the sake of readability, we omit the index~$\mullerfam_1$ of the score-function and the index~$\approx_{\mullerfam_1}$ of the equivalence class for the remainder of this proof wherever possible without introducing ambiguity.

Let~$\plays_{\leq 2} = \set{\pi \in V^* \mid \score_{\mullerfam_1}(\pi) \leq 2}$ be the set of play prefixes whose score is at most two.
We define the set of memory states~$M = \left( \quotient{\plays_{\leq 2}}{\approx} \right) \cup \set{\bot}$,
the initial memory state~$m_\initmark = \equivclass{v_\initmark}{}$, and the update function~$\update$
as~$\update(\bot, v) = \bot$,
~$\update(\pi, v) = \equivclass{\pi v}{}$ 
if~$\score(\pi v) \leq 2$ and~$\update(\pi, v) = \bot$ otherwise.
We obtain~$\card{M} \in \bigo(\card{{\quotient{\plays_{\leq 2}}{\approx}}})$.
Since $\card{\quotient{\plays_{\leq 2}}{\approx}} \leq (n!)^3$ due to Neider, Rabinovich, and Zimmermann~\cite{NRZ14}, the memory structure~$\mem$ is of size at most~$(n!)^3$ as well.

A straightforward induction shows that this memory structure tracks the score of a play precisely as long as it does not exceed the value two on any prefix.
More formally, it satisfies the following invariant:
\begin{quote}
	Let~$\pi = v_0\cdots v_j$ be a play prefix in~$\game$ such that~$\score(v_0\cdots v_k) \leq 2$ for all~$k$ with~$0 \leq k \leq j$.
	Moreover, let~$\update^+(\pi) = \pi'$.
	Then~$\pi \approx \pi'$.
\end{quote}
Recall~$\safety(U) = \set{v_0v_1v_2\cdots \in V^\omega \mid \forall j \in \nats.\, v_j \notin U}$.
We define the vertex-ranked~$\sup$-safety game~$\game' = (\arena \times \mem, \suprankgame(\safety(V \times \set{\bot}), \rank))$, with~$\rank(v, \pi) = \score(\pi)$ for all~$\pi \in \plays_{\leq 2}$, and~$\rank(v, \bot) = 3$.

Let~$\cost' =  \suprankgame(\safety(V \times \set{\bot}), \rank)$.
Clearly, the first two items of the definition of~$\game \reducesto^3_\mem \game'$ hold true.
It remains to show~$\cost(\rho) = \cost'(\ext(\rho))$ for all~$\rho$ with~$\cost(\rho) < 3$ and~$\cost'(\ext(\rho)) \geq 3$ for all other~$\rho$.

First, let~$\rho = v_0v_1v_2\cdots$ be some play with~$\cost(\rho) \leq 2$ and let~$\ext(\rho) = (v_0,m_0)(v_1,m_1)(v_2,m_2)\cdots$.
Then~$\score(v_0 \cdots v_j) \leq 2$ for all~$j \in \nats$.
Thus, due to the invariant above and the definition of~$\rank$, we obtain~$\rank(v_j,m_j) = \score(v_0 \cdots v_j)$ for all~$j \in \nats$, which implies~$\cost'(\ext(\rho)) = \cost(\rho)$.

Towards a proof of the latter statement, let~$\rho = v_0v_1v_2\cdots$ be a play with $\cost(\rho) \geq 3$ and let~$j$ be the minimal position such that~$\cost(v_0\cdots v_j) = 3$.
Since $\cost(v_0\cdots v_j) = \score_F(v_0\cdots v_j)$ for some~$F \in \mullerfam_1$ and since the score is at most incremented by one during each step, we obtain~$\score_F(v_0 \cdots v_{j-1}) = 2$, ~$\acc_F(v_0 \cdots v_{j-1}) = F \setminus \set{v_j}$, and~$v_j \in F$.
Let~$\update^+(v_0 \cdots v_{j-1}) = \pi'$.
Due to the invariant we obtain~$\score_F(\pi') = 2$ and~$\acc_F(\pi') = F \setminus \set{v_j}$.
Thus,~$\score_F(\pi' v_j) = 3$, hence $\ext(v_0 \cdots v_j) = (v_0, m_\initmark)\cdots (v_j, \bot)$, which implies~$\ext(\rho) \notin \safety(V \times \set{\bot})$, which in turn yields~$\cost'(\ext(\rho)) = \infty > 3$.
	\qed
\end{proof}

Thus, in order to solve a quantitative Muller game with respect to some~$b$, it suffices to solve a vertex-ranked $\sup$-safety game~$\game'$ with respect to~$b$.
Recall that this is only constructive if Player~$0$ wins~$\game'$ with respect to~$b < 3$, i.e., only in this case are we able to construct a strategy with cost at most~$b$ for her in~$\game$.
Otherwise, Theorem~\ref{thm:capped-reduction} yields that there exists a strategy of cost~$\infty$ for Player~$1$ in~$\game$, but we cannot construct such a strategy from his strategy of cost greater than two in~$\game'$.
This is consistent with results of Neider et al.~\cite{NRZ14} and with the fact that Muller conditions are in a higher level of the Borel hierarchy than safety conditions.
Hence, qualitative Muller games cannot be reduced to safety games.

We can, however, solve the resulting vertex-ranked~$\sup$-safety game with respect to a given bound by solving a qualitative safety game as shown in Theorem~\ref{thm:vertex-ranked-games:direct:complexity}.
Using this reduction together with the framework of quality-preserving reductions, we obtain an upper bound on the complexity of solving quantitative Muller games with respect to some bound~$b$.

\begin{theorem}
	\label{thm:muller:exptime-membership}
	The following problem can be solved in time~$\bigo((n!)^3)$: \myquot{Given some quantitative Muller game~$\game$ with~$n$ vertices and some bound~$b \in \nats$, does Player~$0$ win~$\game$ with respect to~$b$?}
\end{theorem}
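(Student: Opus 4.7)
The plan is to combine three tools established in this subsection: the cap from Lemma~\ref{lem:muller:cap}, the quantitative reduction from Lemma~\ref{lem:muller:reduction}, and the generic solver for vertex-ranked $\sup$-games from Theorem~\ref{thm:vertex-ranked-games:direct:complexity} (linked to the reduction by Theorem~\ref{thm:capped-reduction}). First, I would observe that $2$ is a cap of every quantitative Muller game by Lemma~\ref{lem:muller:cap}. Hence, for any $b \geq 2$, winning $\game$ with respect to $b$ is equivalent to winning $\game$ with respect to~$2$, so the algorithm may replace $b$ by $\min(b,2)$ at the outset and from now on assume $b \in \set{0,1,2}$.

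Next, I would invoke Lemma~\ref{lem:muller:reduction} to obtain a memory structure~$\mem$ of size $(\card{V}!)^3$ and a vertex-ranked $\sup$-safety game~$\game'$ with $\game \reducesto^3_\mem \game'$, where the default correction function $\capfunc_3$ is used. Since $b < 3$, we have $\capfunc_3(b') = b'$ for every $b' \leq b$, so applying Theorem~\ref{thm:capped-reduction}.\ref{thm:capped-reduction:exact} to each such $b'$ in turn gives a bijection between Player~$0$'s strategies of exact cost~$b'$ in~$\game$ and those in~$\game'$. Consequently, Player~$0$ wins~$\game$ with respect to~$b$ if and only if she wins~$\game'$ with respect to~$b$, and the decision problem reduces to solving the vertex-ranked $\sup$-safety game~$\game'$ with respect to~$b$.

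Finally, I would solve~$\game'$ by instantiating Theorem~\ref{thm:vertex-ranked-games:direct:complexity} with the class of qualitative safety games, which are solvable in linear time via a single attractor computation. The arena $\arena \times \mem$ and ranking function have size $\bigo(\card{V} \cdot (\card{V}!)^3)$, which is exponential in~$\card{\game}$, and they can be constructed in exponential time by enumerating memory states. The generic solver adds only a linear overhead in~$\card{\game'}$, so the whole procedure runs in exponential time, establishing the claimed \exptime{} upper bound.

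The main obstacle, as is typical for such chained reductions, is verifying that the cost correspondence transfers cleanly to the decision question: since Theorem~\ref{thm:capped-reduction}.\ref{thm:capped-reduction:exact} is phrased in terms of exact costs, I need to invoke the fact that $\capfunc_3$ restricted to $\set{0,1,2}$ is the identity in order to lift it to an equivalence for ``cost $\leq b$''. Once this observation is made, the remaining work is a routine size estimate, and no appeal to Theorem~\ref{thm:capped-reduction}.\ref{thm:capped-reduction:cap} is needed because the cap argument has already dispensed with the case $b \geq 3$.
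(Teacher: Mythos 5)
Your proof is correct and follows essentially the same route as the paper: construct the vertex-ranked $\sup$-safety game via Lemma~\ref{lem:muller:reduction} and solve it with respect to~$b$ using Theorem~\ref{thm:vertex-ranked-games:direct:complexity} instantiated with linear-time safety solving. You are merely more explicit than the paper about two points it leaves implicit, namely capping~$b$ at~$2$ via Lemma~\ref{lem:muller:cap} before applying the $\reducesto^3_\mem$ reduction and invoking Theorem~\ref{thm:capped-reduction}.\ref{thm:capped-reduction:exact} together with the fact that $\capfunc_3$ is the identity on $\set{0,1,2}$ to justify that the decision problem transfers.
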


\begin{proof}
	Given~$\game$, we first construct the vertex-ranked~$\sup$-safety game~$\game'$ as shown in Lemma~\ref{lem:muller:reduction}.
	Recall that~$\game'$ contains at most~$(n!)^3$ vertices.
	Due to Theorem~\ref{thm:vertex-ranked-games:direct:complexity} and the fact that safety games can be solved in linear time in the number of vertices,~$\game'$ can indeed be solved in time at most~$\bigo((n!)^3)$ with respect to a given bound~$b$.
	\qed
\end{proof}

Analogously to the reasoning leading to Corollary~\ref{cor:vertex-ranked-games:sup:memory} on Page~\pageref{cor:vertex-ranked-games:sup:memory} and to Corollary~\ref{cor:vertex-ranked-games:lim:memory} on Page~\pageref{cor:vertex-ranked-games:lim:memory}, we are now also able to provide an upper bound on the size of strategies for Player~$0$ in quantitative Muller games.
Since both players have positional winning strategies in safety games, an application of Theorem~\ref{thm:reductions:strategy-lift} to Lemma~\ref{lem:muller:reduction} yields that if Player~$0$ has a strategy with cost at most three in a quantitative Muller game~$\game$ with~$n$ vertices, then she also has a strategy in~$\game$ with the same cost and of size at most exponential in~$n$.

Moreover, we can bound the complexity of the optimization problem for quantitative Muller games as follows:
Finding the minimal~$b$ such that Player~$0$ has a strategy of cost at most~$b$ in~$\game$ requires solving at most three safety games of size in~$\bigo((n!)^3)$.
Thus, the optimization problem for quantitative Muller games can be solved in factorial time.

\subsection{Fault Resilient Strategies for Safety Games}
\label{sec:application:resilience}

We now demonstrate the flexibility and versatility of vertex-ranked games in their own right.
To this end, we consider the problem of synthesizing a controller for a reactive system that is embedded into some environment.
This setting is typically modeled as an infinite game in which Player~$0$ and Player~$1$ take the roles of the controller and of the environment, respectively.
Here, we consider safety games, i.e., we assume that the specification for the controller is given as a game in which it is the aim of Player~$0$ to keep the play inside a safe subset of the vertices.

Dallal, Neider, and Tabuada~\cite{DallalNeiderTabuada16} argue that this setting is not sufficiently expressive to correctly model a real-world scenario, since it assumes that Player~$0$ can accurately predict the effect of her actions on the state of the system.
In a realistic setting, in contrast, faults may occur, i.e., an action chosen by a controller may be executed incorrectly, or it may not be executed at all.

In order to model such faults, Dallal, Neider, and Tabuada introduce arenas with faults~$\arena_F = (V, V_0, V_1, E, F, v_\initmark)$, which consist of an arena without faults~$(V, V_0, V_1, E, v_\initmark)$ and a set of faults~$F \subseteq V_0 \times V$.
In such an arena, whenever it is the turn of Player~$0$, say at vertex~$v$, a fault $(v, v') \in F$ may occur, resulting in the play continuing in vertex~$v'$ instead of that chosen by Player~$0$.
Moreover, Dallal et al.\ consider safety conditions, i.e., qualitative winning conditions of the form~$\safety(U)= \set{ v_0v_1v_2\cdots \mid \forall j \in \nats.\, v_j \notin U}$ for some~$U \subseteq V$.
Hence, it is the aim of Player~$0$ to keep the play outside the ``unsafe'' set of vertices~$U$.
If the play enters the set~$U$, it is declared winning for Player~$1$.
The task at hand is to compute a fault-resilient strategy for Player~$0$ that forces the play to remain inside~$V \setminus U$ and that can ``tolerate'' as many faults as possible.

Safety games without faults are solved by a simple attractor construction:
As soon as the play enters~$W_1 = \att{1}{}{U}$, Player~$1$ can play consistently with his attractor strategy towards~$U$ in order to win the play.
Thus, it is the aim of Player~$0$ to keep the play inside $W_0 = V \setminus W_1$.

Dallal, Neider, and Tabuada solve the problem of computing fault-resilient strategies for safety games by adapting the classic algorithm for solving safety games to this setting.
In doing so, they obtain a value~$\val(v)$ for each vertex~$v$ that denotes the minimal number of faults that need to occur in order for the play to reach~$W_1$, if Player~$0$ plays well.
Furthermore, they show that~$\val$ can be computed in polynomial time in~$\card{V}$.
Finally, due to the existence of positional winning strategies for both players in safety games, they obtain~$\val(v) \in [n] \cup \set{\infty}$ for all~$v \in V$.
Then, a fault-resilient strategy for Player~$0$ is one that maximizes the minimal value~$\val(v)$ witnessed during any play.
Dallal, Neider, and Tabuada construct such a strategy on the fly during the computation of~$\val(v)$~\cite{DallalNeiderTabuada16}.

This task can, however, easily be reframed as a vertex-ranked game in the arena $\arena = (V, V_0, V_1, E, v_\initmark)$, which we obtain from~$\arena_F$ by omitting the faults.
In that game, we assign to each vertex the rank~$\rank(v) = \card{V} - \val(v)$ if~$\val(v) \in [n]$ and~$\rank(v) = 0$ otherwise, i.e., if~$\val(v) = \infty$.
Then, Player~$0$ has a strategy with cost at most~$b$ in~$\game' = (\arena, \suprankgame(\safety(U), \rank))$ if and only if she has a winning strategy in the original safety game with faults that tolerates at least~$\card{V} - b$ faults.

This formulation as a vertex-ranked game enables further study of games in arenas with faults.
Here, we require the winning condition to be a safety condition in order to compute~$\val(v)$.
In recent work, we have shown how to compute this value for more complex qualitative winning conditions~\cite{NeiderWeinertZimmermann18}.
If~$\val(v)$ is effectively computable for a given qualitative winning condition, one can easily obtain fault-resilient strategies by formulating the task as a vertex-ranked game as demonstrated.

Finally, the formulation as a vertex-ranked game yields a method to compute eventually-fault-resilient strategies, i.e., strategies that are resilient to a large number of faults after a finite ``start-up'' phase.
In order to obtain such strategies, it suffices to view the resulting vertex-ranked games as a $\limsup$-game instead of a~$\sup$-game and to solve it optimally as described in Section~\ref{sec:vertex-ranked-games:limsup}.

\section{Conclusion}

In this work, we have lifted the concept of reductions, which has yielded a multitude of results in the area of qualitative games, to quantitative games.
We have shown that this novel concept exhibits the same useful properties for quantitative games as it does for qualitative ones and that it furthermore retains the quality of strategies.

Additionally, we have provided two very general types of quantitative games that serve as targets for quantitative reductions, namely vertex-ranked $\sup$ games and vertex-ranked~$\limsup$-games.
For both kinds of games we have shown a polynomial overhead on the complexity of solving them with respect to some bound, on the memory necessary to achieve a given cost, and on the complexity of determining the optimal cost that either player can ensure.

Finally, we have demonstrated the versatility of these tools by using them to solve quantitative request-response games and quantitative Muller games and by showing how to solve the problem of computing fault-resilient strategies in safety games via vertex-ranked games.
This last formulation enables a general study of games with faults, even in the presence of more complex winning conditions than the safety condition considered by Dallal et al.~\cite{DallalNeiderTabuada16} and in this work.
We are currently investigating how to leverage vertex-ranked games for the synthesis of fault-resistant strategies in parity games.

Further research continues in two additional directions:
Firstly, while the framework of quantitative reductions and vertex-ranked games yields upper bounds on the complexity of solving quantitative games, it does not directly yield lower bounds on the complexity of the problems under investigation.
Consider, for example, the threshold problem for parity games with costs, which is~$\pspace$-complete~\cite{WeinertZimmermann17}.
It is possible to reduce this problem to that of solving a vertex-ranked parity game of exponential size and linearly many colors similarly to the reduction presented in this work, which yields an~$\exptime$-algorithm.
It remains open how to use quantitative reductions to obtain an algorithm for this problem that only requires polynomial space.

Secondly, another goal for future work is the establishment of an analogue to the Borel hierarchy for quantitative winning conditions.
In the qualitative case, this hierarchy establishes clear boundaries for reductions between infinite games, i.e., a game whose winning condition is in one level of the Borel hierarchy cannot be reduced to one with a winning condition in a lower level.
Also, each game with a winning condition in the hierarchy is known to be determined~\cite{Martin75}.
To the best of our knowledge, it is open how to define such a hierarchy for quantitative winning conditions which exhibit similar properties.

\paragraph*{Acknowledgements}
I would like to thank Martin Zimmermann for many fruitful discussions.
Furthermore, I would like to thank the anonymous reviewers for their thorough reviews and constructive comments.

\bibliographystyle{splncs03}
\bibliography{main}

\end{document}